\documentclass[11pt, letterpaper]{article}

\usepackage{sketch}

\newcommand{\R}{\bbR}
\newcommand{\Z}{\bbZ}
\newcommand{\D}{\calD}
\newcommand{\half}{\frac{1}{2}}
\newcommand{\inorm}[1]{\left\Vert #1 \right\Vert}
\newcommand{\brackets}[1]{\left(#1\right)} 
\newcommand{\abs}[1]{\left\lvert #1 \right\rvert}

\begin{document}

\title{On Sketching the $q$ to $p$ Norms}
\author{Aditya Krishnan \footnotemark[1]
\and Sidhanth Mohanty \footnotemark[2]
\and David P. Woodruff \footnotemark[3]}

\footnotetext[1]{Computer Science Department, Carnegie Mellon University, Pittsburgh, PA, USA. \texttt{arkrishn@andrew.cmu.edu}}
\footnotetext[2]{Computer Science Department, Carnegie Mellon University, Pittsburgh, PA, USA. \texttt{sidhanthm96@gmail.com}}
\footnotetext[3]{Computer Science Department, Carnegie Mellon University, Pittsburgh, PA, USA. \texttt{dwoodruf@cs.cmu.edu}.
The author would like to acknowledge the support by the National Science Foundation under Grant No. CCF-1815840.}

\maketitle

\begin{abstract}
We initiate the study of data dimensionality reduction, or sketching, for the $q\to p$ norms.
Given an $n \times d$ matrix $A$, the $q\to p$ norm, denoted 
$\|A\|_{q \to p} = \sup_{x \in \mathbb{R}^d \backslash \vec{0}} \frac{\|Ax\|_p}{\|x\|_q}$, is a natural generalization of several
matrix and vector norms studied in the data stream and sketching models, with applications to datamining, hardness
of approximation, and oblivious routing. We say a distribution $S$ 
on random matrices
$L \in \mathbb{R}^{nd} \rightarrow \mathbb{R}^k$ is a $(k,\alpha)$-sketching family
if from $L(A)$, one can approximate $\|A\|_{q \to p}$ up to a factor $\alpha$ with constant probability. 
We provide upper and lower bounds on the sketching dimension $k$ for every $p, q \in [1, \infty]$, and in
a number of cases our bounds are tight. While we mostly focus on constant $\alpha$, we also consider large
approximation factors $\alpha$, as well as other variants of the problem such as when $A$ has low rank.
\end{abstract}

\setcounter{page}{0}
\thispagestyle{empty}
\newpage

\section{Introduction}
Data dimensionality reduction, or sketching, is a powerful technique by which one compresses a large dimensional object
to a much smaller representation, while preserving important structural information. Motivated by applications in streaming
and numerical linear algebra, the object is often a vector $x \in \mathbb{R}^n$ or a matrix $A \in \mathbb{R}^{n \times d}$. 
One of the most common forms of sketching is oblivious
sketching, whereby one chooses a random matrix $L$ from some distribution $S$, and compresses $x$ to $Lx$ or $A$ to $L(A)$. The latter
quantity $L(A)$ denotes a linear map from $\mathbb{R}^{nd}$, interpreting $A$ as an $nd$-dimensional vector, to an often much
lower dimensional space, say $\mathbb{R}^k$ for a value $k \ll nd$. 

Sketching has numerous applications. For example, in the data stream model, one sees additive updates $x_i \leftarrow x_i + \Delta$,
where the update indicates that $x_i$ should change from its old value by an additive $\Delta$. Given a sketch $L \cdot x$, one
can update it by replacing it with $L \cdot x + \Delta \cdot L_{*,i}$, where $L_{*,i}$ denotes the $i$-th column of $L$. Thus, it is easy
to maintain a sketch of a vector evolving in the streaming model. Similarly, in the matrix setting, given an update 
$A_{i,j} \leftarrow A_{i,j} + \Delta$, one can update $L(A)$ to $L(A) + \Delta L(e_{i,j})$, where $e_{i,j}$ denotes the matrix with
a single one in the $(i,j)$-th position, and is otherwise $0$. If $L$ is oblivious, that is, sampled from a distribution independent
of $x$ (or $A$ in the matrix case), then one can create $L$ without having to see the entire stream in advance. Other applications
include distributed computing, whereby a vector or matrix is partitioned across multiple servers. For instance, server $1$ might
have a vector $x^1$ and server $2$ a vector $x^2$. Given the sketches $Lx^1$ and $Lx^2$, by linearity 
one can combine them, using $L(x^1 + x^2) = Lx^1 + Lx^2$. In these applications it is important that the number $k$ of rows
of $L$ is small, since it is proporational to the memory required of the data stream algorithm, or the communication in a distributed
protocol. Here $k$ is referred to as the {\it sketching} dimension. 

Sketching vector norms is fairly well understood, and we have tight bounds up to logarithmic factors for estimating the $\ell_p$-norms 
$\|x\|_p = (\sum_i |x_i|^p)^{1/p}$ for every $p \in [1, \infty]$; for a sample of such work, see 
\cite{alon1996space,bar2002information,indyk2005optimal,indyk2006stable,kane2010exact,kane2011fast} for work in the related data
stream context, and \cite{pw12,andoni2013tight,lw13} for work specifically in the sketching model. 
Recently, there is work \cite{BBCKY17} characterizing the sketching
complexity of any symmetric norm on a vector $x$. A number of works have also looked at sketching
{\it matrix norms}. In particular, the Schatten $p$-norms $\|A\|_p = \left (\sum_{i=1}^{\textrm{rank}(A)} \sigma_i(A)^p \right)^{1/p}$ have gained
considerable attention. They have proven to be considerably harder to approximate than the vector $p$-norms, and understanding
their complexity has led to important algorithmic and lower bound techniques. A body of work has focused on 
understanding the complexity
of estimating matrix norms in the data stream model with $1$-pass
over the stream \cite{andoni2013eigenvalues,lw16}, as well as with
multiple passes \cite{b16}, the sketching model \cite{li2014sketching,lw17}, statistical models \cite{kv16,KO17}, as well as the general
RAM model \cite{MNSUW18,UCS16}. Dimensionality reduction in these norms
also has applications in quantum
computing \cite{w05,hms11}, and are studied in nearest neighbor search data structures \cite{a10}. 

\subsection{Our Contributions}
We consider the sketching complexity of a new family of norms, namely, the $p\to q$ norms of a matrix. 
A common quantity that arises in various applications is the amount by which a linear map $A$ ``stretches'' vectors. One way
to measure this quantity is the maximum singular value of $A$, which can be written as $\sup_{\|x\|_2=1}\|Ax\|_2$, and is just
the Schatten-$\infty$ norm, defined above. 
In this work we consider a different way of measuring this stretch, which considerably generalizes the operator norm.

For a linear operator $A$ from a normed space $\mathcal{X}$ to a normed space $\mathcal{Y}$, we define
$
\|A\|_{\mathcal{X}\to\mathcal{Y}}$ as $\sup_{\|x\|_{\mathcal{X}}=1}\|Ax\|_{\mathcal{Y}}
$.
Of specific interest to us is the case where $\mathcal{X}=\ell_q^d$ and $\mathcal{Y}=\ell_p^n$, and we denote the corresponding
norm of such an operator by $\|A\|_{q\to p}$.
Our objective is to study the sketching complexity of approximating this norm.

\begin{definition}[$(k,\alpha)$-sketching family]
Let $\mathcal{S}$ be a distribution over linear functions from $\R^{n\times d}$ to $\R^k$ and $f$ a function from $\R^k$ to $\R$. We call
$(\mathcal{S},f)$ a \textbf{$(k,\alpha)$-sketching family} for the $q\to p$ norm if for all $A\in\R^{n\times d}$,
$\Pr_{L\sim\mathcal{S}}\left[f(L(A)) \in \left(1/\alpha,\alpha \right) \|A\|_{q\to p}\right]\geq\frac{5}{6}$.
\end{definition}
We provide upper and lower bounds on $k$. The details of the specific results we have are described in \pref{sec:results}.

\subsection{Motivation}
This problem is well-studied in mathematics when $p = q$ as it simply corresponds to $p$-matrix norm 
estimation\footnote{See, e.g., \url{https://en.wikipedia.org/wiki/Matrix_norm}}.
An intriguing question is whether one can preserve $\|Ax\|_p$ in a lower-dimensional sketch space, given that the vectors
$x$ come from the unit ball of 
a smaller norm.

Apart from being mathematically interesting, this problem has a number of applications. The operator norm is a special case when $p = q = 2$. The operator norm can be
accurately estimated by any subspace embedding for $\ell_2$, discussed in detail in \cite{clarkson2013low}.
The dual of this norm is also the Schatten-$1$ norm, which has received considerable attention in the streaming model \cite{lw16,b16}. The $q\to p$ norm problem is a natural generalization of the operator norm problem, and when $p < 2$, may be more appropriate in the context of robust statistics, where it is known that the $p$ norm for $p < 2$ is less sensitive to outliers, see, e.g., Chapter 3 of \cite{w14} for a survey on robust regression, and \cite{swz17} for recent work on $\ell_1$-low rank approximation. 

The $2\to q$ norms arise in the hardness of approximation literature and an algorithm
for some instances of the problem was used to break the Khot-Vishnoi Unique Games
candidate hard instance \cite{khot2015unique}. 
Work by \cite{barak2012hypercontractivity} gives an algorithm
running in time $\exp(n^{2/p})$ for approximating $2\to p$ norms for all $p\geq 4$. These algorithms 
give a constant factor approximation when promised the $2\to p$ norm is in a certain range (depending on the operator norm) rather than 
providing a general estimate of the $2\to p$ norm. This same paper also discusses assumptions on the 
the NP-hardness and ETH hardness of approximating $2\to p$ norms. 
The work of \cite{brandao2015estimating}
extends that of \cite{barak2012hypercontractivity} to all $p\geq 2$.
The work of \cite{bhaskara2011approximating} gives a PTAS for computing $\|A\|_{q\to p}$ if $1\leq p\leq q$ and
$A$ has non-negative entries, and gives an application of this to the oblivious routing problem where congestion is
measured using the $\ell_p$ norm. The paper also shows that it is hard to approximate $\|A\|_{q\to p}$ within a
constant factor for general $A$, and general $p$ and $q$. Sketching may allow, for example, for reducing the original 
problem to a smaller instance of the same problem, which although may still involve exhaustive search, 
could give a faster concrete running time. 

The $1\to q$ norm turns out to be the maximum of the $q$-norm of the columns of $A$, which is related to the heavy hitters problems in data streams, e.g., the column with the largest $q$-norm may be the most significant or desirable in an application. Likewise, the $q\to \infty$ norms turn out to be the maximum of the $p$-norms of the rows of $A$, where $p$ is the dual norm to $q$, and therefore have similar heavy hitter applications. The $\infty\to q$ norm is maximized when $x \in \{-1,1\}^n$ and therefore includes the cut-norm as a special case, and is related to Grothendieck inequalities, see, e.g., \cite{bfv10,nrv14,bos17}. 

Our main motivation for studying the $p\to q$ norms comes from understanding and developing new techniques for this family of norms. Another family of norms that is well-studied in the data stream literature are the {\it cascaded norms}, which for an $n \times d$ matrix $A$ and parameters $p$ and $q$, are defined to be $(\sum_{i=1,\ldots, n} (\|A_{i,*}\|_p)^q)^{1/q}$, where $A_{i,*}$ denotes the $i$-th row of $A$. That is, we compute the $q$-norm of the vector of $p$-norms of the rows of $A$. This problem originated in \cite{cormode2005space} and has applications to mining multi-graphs; the following sequence of work established tight bounds up to logarithmic factors for every $p,q \in [1, \infty]$ \cite{jayram2009data,andoni2011streaming}. This line of work led to very new techniques; one highlight is the use of Poincar\'e inequalities in proving information complexity lower bounds, which has then been studied in a number of followup works \cite{ajp10,j13,akr15}.

\subsection{Our Results}\label{sec:results}
After establishing preliminary results and theorems in \pref{sec:preliminaries}, we give our results for constant and large approximation factors. Our main theorem is as follows. Here
 $\ell_{q^*}$ is the dual norm of $\ell_q$, that is, 
$1/q^* + 1/q = 1$ (when $q = 1$, $q^* = \infty$, and vice versa). 
\begin{theorem}\label{thm:summarytheorem}
For all matrices $A \in \R^{n \times n}$ with rank $r$ and real values $p, q \in [1, \infty]$, the table below gives upper and lower bounds on $k$ for a $(k, \Theta(1))$-sketching family of various $q \to p$ norms.

\begin{center}
\begin{tabular}{ |c|c|c|c|c|c| } 
\hline
\textbf{$q \to p$ Norm} & \textbf{$p^* \to q^*$ Norm} & \textbf{Upper Bound} & \textbf{Sec} & \textbf{Lower Bound} & \textbf{Sec} \\
\hline
$1 \to [1, 2]$ & $[2, \infty] \to \infty$ & $O(n\log n)$ & \ref{sec:1toPupper} & $\Omega(n)$ & \ref{sec:1toplower} \\ 
$1 \to [2, \infty]$ & $[1, 2] \to \infty$ & $O(n^{2-\frac{2}{p}}\log^2n)$ & \ref{sec:1toPupper} & $\Omega(n^{2-\frac{2}{p}})$ & \ref{sec:1toInftylower} \\ 
$[2, \infty] \to [1, 2]$ & $[2, \infty] \to [1, 2]$ & $O(n^2)$ & - & $\Omega(n^2)$ & \ref{sec:qlarge_to_psmall_lower} \\ 
$2 \to [2, \infty]$ & $[1, 2] \to 2$ & $O(\min\{n^{1-\frac{2}{p}}r^2\log n, n^2\})$ & \ref{sec:2toqupper} & $\Omega(\min\{n,n^{1-\frac{2}{p}}r\})$ & \ref{sec:ptoq_oneside} \\ 
$[1, 2] \to [1, 2]$ & $[2, \infty] \to [2, \infty]$ & $O(n^2)$ & - & $\Omega(\min\{n^{1-\frac{2}{q^*}}r,n\})$ & \ref{sec:ptoq_oneside} \\ 
$[1, 2] \to [2, \infty]$ & $[1, 2] \to [2, \infty]$ & $O(n^2)$ & - & $\Omega\left(\frac{n}{\log n}\right)$ & \ref{sec:qsmall_plarge} \\
\hline
\end{tabular}
\end{center}
\end{theorem}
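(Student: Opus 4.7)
The plan is to handle each row of the table separately, with the section pointers in the table already indicating where each proof belongs. Two upper-bound strategies recur. For the $q=1$ rows, I exploit the identity $\|A\|_{1\to p}=\max_j\|A_{*,j}\|_p$ to reduce sketching the matrix to running $n$ oblivious vector $\ell_p$-sketches in parallel, one per column, and returning the maximum estimate. This yields $O(n\log n)$ for $p\in[1,2]$ via $p$-stable sketches of size $O(\log n)$ per column, and $O(n^{2-2/p}\log^2 n)$ for $p\in[2,\infty]$ via $F_p$-moment sketches of size $O(n^{1-2/p}\log n)$ with a union bound over columns; the dual $q\to\infty$ rows follow by transposition. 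The remaining rows are matched by the trivial $O(n^2)$ sketch that stores $A$ in full, except for the rank-dependent case 4 where I would $\ell_p$-sketch the action of $A$ on an $\varepsilon$-net of $\ell_2$-unit vectors in the row span of $A$, which has effective dimension depending only on $r$.

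For the lower bounds, each row reduces to a one-way two-party communication problem of known hardness. The $\Omega(n)$ bound of row 1 follows from a reduction from \textsc{Index}: Alice encodes her $n$-bit string column by column so that the maximum column $\ell_p$-norm reveals the queried bit. The $\Omega(n^{2-2/p})$ bound of row 2 embeds $n$ independent copies of the $\Omega(n^{1-2/p})$ hard instance for $F_p$-estimation into the columns of $A$; to prevent sketch sharing across columns, I would invoke a direct-sum / information-theoretic chain argument. Rows 4 and 5 require a family of hard instances parameterized by the rank $r$; the natural approach is to place $r$ independent vector hardness instances on disjoint blocks of a rank-$r$ block-diagonal matrix, so that the $q\to p$ norm computes a maximum over the blocks and each block contributes its own vector lower bound. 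Row 6 should come from a similar embedding that loses a $\log n$ factor in the reduction.

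The main obstacle I anticipate is row 3: the $\Omega(n^2)$ lower bound in the $[2,\infty]\to[1,2]$ regime, which states that sketching can do no better than storing $A$ outright. To prove it one must encode $\Omega(n^2)$ independent bits into a single matrix and recover each of them from a constant-factor approximation to $\|A\|_{q\to p}$. I expect to reduce from \textsc{Augmented Index} on $n^2$ coordinates, using random sign matrices as the carrier so that planting a single $\pm 1$ perturbation shifts the $q\to p$ norm by a multiplicative constant. The technical heart will be a Khintchine- or Chevet-type concentration inequality showing that the $q\to p$ norm of a random sign matrix is sharply concentrated at a known value, and that a single-coordinate change pushes it across a constant-factor gap; aligning the range of the norm so that a $\Theta(1)$-factor sketch distinguishes the two cases is where most of the technical care will be concentrated.
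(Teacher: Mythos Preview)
Your upper bounds for rows 1 and 2 match the paper's approach exactly. Beyond that, there are two substantive divergences and one genuine gap.

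\textbf{Lower-bound methodology.} The paper does not use communication reductions at all. Every lower bound (rows 1, 2, 4, 5, 6) is proved via a single template: exhibit two distributions $\mathcal{D}_1,\mathcal{D}_2$ on Gaussian matrices whose $q\to p$ norms are separated by a constant factor, then show that for any linear map $L:\R^{n\times n}\to\R^k$ the total-variation distance $d_{TV}(L(\mathcal{D}_1),L(\mathcal{D}_2))$ is $O(k^a/n^b)$, forcing $k=\Omega(n^{b/a})$. The hard distributions are a Gaussian matrix versus a Gaussian matrix with one perturbed column (or entry, or row), and the TV bound comes from a direct covariance calculation. Your \textsc{Index}/direct-sum plan could likely be made to work for rows 1, 2, 4--6, but it is a different paradigm and would require separate direct-sum machinery that the paper avoids entirely.

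\textbf{Row 3 lower bound.} Your proposed approach has a real obstruction, not just a technical one. For a random $\pm 1$ matrix $A$ with $q\ge 2$, $p\le 2$, one has $\|A\|_{q\to p}\ge\|A\|_{2\to 2}=\Theta(\sqrt{n})$, while flipping a single entry changes $\|A\|_{q\to p}$ by at most $O(1)$ via the triangle inequality. So a single-coordinate perturbation cannot produce a multiplicative-constant gap, and no Khintchine/Chevet refinement will fix this. The paper sidesteps the issue completely: it invokes the existing $\Omega(n^2)$ lower bound for the operator norm $\|\cdot\|_{2\to 2}$ (Li--Woodruff), and shows via two applications of the Dvoretzky embedding $\ell_2^n\hookrightarrow\ell_p^{O(n)}$ plus duality that any $(k,\gamma)$-sketch for $\|\cdot\|_{q\to p}$ yields an $(O(k),\Theta(1))$-sketch for $\|\cdot\|_{2\to 2}$. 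The whole argument is four lines.

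\textbf{Row 4 upper bound.} Your plan to ``$\ell_p$-sketch the action of $A$ on an $\varepsilon$-net of the row span'' is missing the key step: the row span depends on $A$, so an oblivious sketch cannot net it directly, and storing $TA$ for an $\ell_p$-sketch $T$ costs $\Omega(n)$ measurements per row of $T$. The paper's route is a chain of four reductions: duality $\|A\|_{2\to p}=\|A^T\|_{p^*\to 2}$; an oblivious subspace embedding $S$ with $O(r)$ rows so $\|SA^T\|_{p^*\to 2}=\Theta(\|A^T\|_{p^*\to 2})$; a Dvoretzky map $G$ so $\|GSA^T\|_{p^*\to 1}=\Theta(\|SA^T\|_{p^*\to 2})$; and duality again to get $\|AS^TG^T\|_{\infty\to p}$, whose maximizer lies in $\{\pm 1\}^{O(r)}$. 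Only then does one apply an $\ell_p$-sketch $E$ with failure probability $2^{-\Theta(r)}$ and union-bound over the $2^{O(r)}$ sign vectors. The two-sided sketch $EAS^TG^T$ has $O(n^{1-2/p}r\log n)\cdot O(r)$ entries, which is where the $r^2$ comes from.
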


The constant factor hidden in \pref{thm:summarytheorem} does not hold for all constants, the smallest constant it holds for varies depending on the specific values of $q, p$.

We also have several results for large approximation factors summarized in the theorem below. 
\begin{theorem}
There exists a $\left(O\left(\frac{n^2}{\alpha}\right),\alpha\right)$-sketching family for the $2\to p$ and $\infty\to p$ norm and a $\left(O\left(\frac{n^2}{\alpha^2}\right),\alpha\right)$-sketching family for the $q\to p$ norm for $q\geq 1$ and $1\leq p\leq 2$.
\end{theorem}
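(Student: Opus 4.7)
The plan is to build the sketch as a compressed matrix product $SA$ for a random matrix $S\in\R^{m\times n}$ and to output $\|SA\|_{q\to p}$ as the estimator of $\|A\|_{q\to p}$.

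For the $(O(n^2/\alpha^2),\alpha)$-sketch for the $q\to p$ norm with $q\ge 1$ and $1\le p\le 2$, I would take $S$ with $m=O(n/\alpha^2)$ to be an oblivious $\ell_p$ subspace embedding of distortion $O(\alpha)$, which in this range of $p$ is available via $p$-stable random matrices or Lewis-weight sampling (the dimension $O(n/\alpha^2)$ being the analogue of the Euclidean fact that a distortion-$\alpha$ subspace embedding of an $n$-dimensional subspace needs $O(n/\alpha^2)$ rows via Johnson--Lindenstrauss). Such an $S$ preserves $\|Sy\|_p$ up to a factor $\alpha$ simultaneously for every $y$ in the column span of $A$, and in particular for $y=Ax$ over every $x\in\R^n$; so $\|SA\|_{q\to p}$ lies in $(\|A\|_{q\to p}/\alpha,\,\alpha\|A\|_{q\to p})$, and the sketch has total dimension $mn=O(n^2/\alpha^2)$.

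For the $(O(n^2/\alpha),\alpha)$-sketch for the $2\to p$ and $\infty\to p$ norms, the argument must also cover $p>2$ where no $\ell_p$ subspace embedding is available. I would instead use a Gaussian sketch $S\in\R^{m\times n}$ with $m=O(n/\alpha)$, again stored as $SA$ (total dimension $mn=n^2/\alpha$). The restriction $q\in\{2,\infty\}$, which limits the supremum to the Euclidean sphere or the Boolean cube, is the source of the gain. For $q=2$, per-vector Gaussian concentration gives $\|Sy\|_p\approx c_{p,m}\|y\|_2$, a net argument on the unit sphere upgrades this to a uniform $\sqrt\alpha$-distortion estimate of $\|Ax\|_2$, and the elementary bound $\|A\|_{2\to 2}/n^{\max\{1/2-1/p,0\}}\le\|A\|_{2\to p}\le\|A\|_{2\to 2}$ converts this into an $\alpha$-approximation of $\|A\|_{2\to p}$. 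For $q=\infty$ the same sketch is used, but the net is replaced by a union bound over the $2^n$ points of $\{-1,1\}^n$, controlled by Khintchine-type tail estimates applied to $\|SA\epsilon\|_p$.

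The main obstacle is the transfer from per-vector concentration to uniform $\alpha$-distortion over the unit $\ell_q$ ball. For the $\alpha^2$ bound this is handled cleanly by existing $\ell_p$ subspace embedding theorems, but for the $\alpha$ bound the extra saving of a factor of $\alpha$ in the sketch dimension is delicate: one has to carefully match $m=n/\alpha$ with the requirement that the chi-square (for $q=2$) or sub-Gaussian (for $q=\infty$) tail bounds still union-bound over an $e^{O(n)}$-sized effective domain while producing $\alpha$-distortion rather than the generic $\sqrt\alpha$-distortion. Handling $p>2$ uniformly is a further delicate point, since $p$-stable tools are unavailable and one must instead pay the $n^{1/2-1/p}$ gap between $\|A\|_{2\to 2}$ and $\|A\|_{2\to p}$ within the $\alpha$ budget.
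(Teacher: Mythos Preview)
Both parts of your proposal have genuine gaps.

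For the $(O(n^2/\alpha^2),\alpha)$-sketch with $1\le p\le 2$: you invoke an oblivious $\ell_p$ subspace embedding $S\in\R^{m\times n}$ with $m=O(n/\alpha^2)$ and distortion $O(\alpha)$ on the column span of $A$. But $A$ may be full rank, so its column span is all of $\R^n$, and any linear $S$ with $m<n$ has a nontrivial kernel and hence infinite distortion on $\R^n$. The ``Euclidean fact'' you cite is therefore false: Johnson--Lindenstrauss gives $(1\pm\epsilon)$-distortion with $O(n/\epsilon^2)\ge n$ rows, and no linear map with fewer than $n$ rows achieves any finite uniform distortion on an $n$-dimensional space. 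The paper avoids this by \emph{not} seeking uniform distortion. Its $S$ is a block-sign matrix (rows of $A$ partitioned into blocks of size $B$, each summed with random $\pm 1$ signs); the bound $\|SAx\|_p\le B^{1-1/p}\|Ax\|_p$ holds deterministically for every $x$ by the triangle inequality, while the reverse bound $\|SAx^\star\|_p\ge c\,B^{1/2-1/p}\|Ax^\star\|_p$ is required only at the single fixed maximizer $x^\star$ of $\|A\|_{q\to p}$, where it follows from Khintchine's inequality coordinatewise. The ratio of the two factors is $\sqrt B$, so $B=\alpha^2$ gives the claim.

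For the $(O(n^2/\alpha),\alpha)$-sketch for $2\to p$ and $\infty\to p$: your Gaussian left-sketch with $m=n/\alpha$ rows cannot support the net or union-bound step you propose. The $\ell_2$ sphere net and the cube $\{-1,1\}^n$ both have $e^{\Theta(n)}$ points, while per-vector concentration with only $m=n/\alpha$ rows is $e^{-\Theta(n/\alpha)}$; the union bound fails by an exponential margin. Routing through $\|A\|_{2\to 2}$ and paying the $n^{|1/2-1/p|}$ gap to reach $\|A\|_{2\to p}$ also costs far more than a factor $\alpha$ in general. The paper instead sketches on the \emph{right}: it reduces $2\to p$ to $\infty\to p$ via duality and Dvoretzky, and for $\infty\to p$ stores $AS$ with $S\in\R^{n\times(n/B)}$ a block-sign combination of columns, outputting $\|AS\|_{\infty\to p}$. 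The upper bound $\|AS\|_{\infty\to p}\le\|A\|_{\infty\to p}$ is immediate since $S$ carries $\{-1,1\}^{n/B}$ into $\{-1,1\}^n$; the lower bound comes from a symmetrization argument (one transversal of the column blocks captures at least a $1/B$ fraction of the norm by the triangle inequality, and the random-sign contribution of the remaining transversals is symmetric about $0$, so with probability $\ge 1/2$ it does not hurt).
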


Our algorithms combine several insights, which we illustrate here 
in the case of the $2 \to p$ norm for $p \geq 2$ and when the rank of $A$ is $r$: 
(1) we show by duality that $\|A\|_{2 \to p}$
is the same as $\|A^T\|_{p^* \to 2}$, where $p^*$ 
satisfies $\frac{1}{p^*} + \frac{1}{p} = 1$ and is the dual norm to $p$. Although
the proof is elementary, this plays several key roles in our argument. 
Next, we (2) use oblivious subspace embeddings $S$ which provide
constant factor approximations for all vectors simultaneously in an
$r$-dimensional subspace of $\ell_2$, 
and enable us to say that with $Cr$ rows for a constant $C > 0$, we have
$\|SA^T\|_{p^* \to 2} = \Theta(1) \|A^T\|_{p^* \to 2}$. 
Next, (3) we use that for 
a random Gaussian matrix $G \in \mathbb{R}^{C'r \times Cr}$, for a constant $C' > 0$,
with appropriate variance, it has the property
that simultaneously for all $x \in \mathbb{R}^{Cr}$, 
$\|Gx\|_{1} = \Theta(1) \cdot \|x\|_2$. This is a special case of
Dvoretsky's theorem in functional analysis. Thus, instead of directly
approximating $\|SA^T\|_{p^* \to 2}$, we can obtain a constant factor approximation
by approximating $\|GSA^T\|_{p^* \to 1}$. This is another norm we do not
know how to directly work with, so we apply duality (1) again, 
and argue this is the same as approximating $\|AS^TG^T\|_{\infty \to p}$. A key
observation is now (4), that $\sup_{x \textrm{ s.t. }\|x\|_{\infty} = 1} \|AS^TG^Tx\|_p$ 
is realized when $x$ has each coordinate equal to $1$ or $-1$. Consequently, as
$x \in \mathbb{R}^{C'r}$, it suffices to use any sketch $T$ for the $p$-norm of
a fixed vector which fails with probability $\exp(-C'r)$, and estimate
$\|TAS^TG^Tx\|_p$ for each of the $2^{C'r}$ possible maximizers $x$, and output
the largest estimate. As there exist sketches $T$ with $O(n^{1-2/p} r \log n)$ 
rows for this
purpose, this gives us an overall sketching complexity of $O(n^{1-2/p} r^2 \log n)$. 

We defer a discussion of our lower bound techniques to \pref{sec:lowerbounds}. 

\section{Preliminaries}\label{sec:preliminaries}
In this section, we introduce the tools we use in this paper.
\begin{definition}[Total Variation Distance]\label{def:TVdistance}
Given two distributions $\D$ and $\D'$ over sample space $\Omega$ with density functions $p_{\D}$ and $p_{\mathcal{D'}}$,
the \textbf{total variation distance} is defined in two equivalent ways as follows
$
d_{TV}(\D,\D') = \frac{1}{2}\|p_{\D}-p_{\D'}\|_1 = \sup_{\mathcal{E}}
|\Pr_{x\sim\D}[\mathcal{E}]-\Pr_{x\sim\D'}[\mathcal{E}]|
$
\end{definition}

The following result bounds the total variation distance between two multivariate Gaussians.
\begin{lemma}\label{lem:gaussianTVdistance}\cite[Lemma A4]{hardt2015tight}
Let $\lambda$ be the minimum eigenvalue of PSD matrix $\Sigma$, then
$d_{TV}(\mathcal{N}(\mu,\Sigma),\mathcal{N}(\mu',\Sigma'))\leq
\frac{C}{\sqrt{\lambda}}(\|\mu-\mu'\|_2 + \|\Sigma-\Sigma'\|_F)$
for an absolute constant $C$.
\end{lemma}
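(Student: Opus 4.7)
The plan is to split the total variation distance through an intermediate Gaussian with the first covariance and the second mean, so that the triangle inequality gives
\[
d_{TV}\brackets{\mathcal{N}(\mu,\Sigma),\mathcal{N}(\mu',\Sigma')} \leq d_{TV}\brackets{\mathcal{N}(\mu,\Sigma),\mathcal{N}(\mu',\Sigma)} + d_{TV}\brackets{\mathcal{N}(\mu',\Sigma),\mathcal{N}(\mu',\Sigma')},
\]
and then bound the two pieces separately by the $\|\mu-\mu'\|_2/\sqrt{\lambda}$ and $\|\Sigma-\Sigma'\|_F/\sqrt{\lambda}$ contributions respectively. This matches the additive structure of the claim.

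For the mean-shift term, the plan is to apply Pinsker's inequality together with the well-known closed form for the KL divergence between two Gaussians with identical covariance,
\[
KL\brackets{\mathcal{N}(\mu,\Sigma)\,\|\,\mathcal{N}(\mu',\Sigma)} = \half (\mu-\mu')^{T}\Sigma^{-1}(\mu-\mu').
\]
Using $\|\Sigma^{-1}\|_{\mathrm{op}} \leq 1/\lambda$ immediately bounds this quadratic form by $\|\mu-\mu'\|_2^2/(2\lambda)$, after which Pinsker gives a TV bound of order $\|\mu-\mu'\|_2/\sqrt{\lambda}$.

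For the covariance-shift term, I would whiten by the change of variables $y = \Sigma^{-1/2}x$, which is a bijection and so preserves TV. The problem reduces to bounding $d_{TV}(\mathcal{N}(0,I),\mathcal{N}(0,M))$ where $M = \Sigma^{-1/2}\Sigma'\Sigma^{-1/2}$. Pinsker plus the Gaussian KL formula expresses this in terms of $\tfrac{1}{2}(\operatorname{tr}(M)-d-\log\det(M)) = \tfrac{1}{2}\sum_i (\mu_i - 1 - \log\mu_i)$ where $\{\mu_i\}$ are the eigenvalues of $M$; Taylor expanding $x-1-\log x$ around $1$ gives the quadratic bound $O(\|M-I\|_F^2)$, and hence $d_{TV} = O(\|M-I\|_F)$. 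Finally, the submultiplicativity inequality
\[
\|M-I\|_F = \|\Sigma^{-1/2}(\Sigma'-\Sigma)\Sigma^{-1/2}\|_F \leq \|\Sigma^{-1/2}\|_{\mathrm{op}}^{2}\|\Sigma'-\Sigma\|_F \leq \frac{\|\Sigma-\Sigma'\|_F}{\lambda}
\]
puts the bound back in terms of the original covariances.

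The main obstacle is the covariance piece: the Taylor expansion of $\mu-1-\log\mu$ is only clean in a neighborhood of $\mu=1$, so when $\Sigma'$ is far from $\Sigma$ the eigenvalues of $M$ may be far from $1$ and one loses the quadratic control. To handle this I would either observe that the claim is vacuous once the right-hand side exceeds a small constant (and the inequality holds trivially with an appropriate $C$ because TV is at most $1$), or interpolate along the path $\Sigma_t = (1-t)\Sigma + t\Sigma'$ (which stays PSD with minimum eigenvalue at least $\lambda$ on the relevant range) and integrate the TV derivative, reducing to the infinitesimal regime where the Taylor bound is valid. Combining the two halves under the triangle inequality yields the stated bound with an absolute constant $C$.
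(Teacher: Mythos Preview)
The paper does not prove this lemma at all: it is simply quoted from \cite{hardt2015tight}, so there is no ``paper's own proof'' to compare against.

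Your approach---triangle inequality through the intermediate $\mathcal{N}(\mu',\Sigma)$, then Pinsker plus the closed-form Gaussian KL on each piece, with a whitening step for the covariance term---is the standard way to derive bounds of this type, and the outline is sound. Your handling of the Taylor-expansion regime (noting that when the right-hand side exceeds a constant the bound is trivial because $d_{TV}\le 1$) is also the right way to close that case.

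One point worth flagging: your whitening argument produces
\[
d_{TV}\brackets{\mathcal{N}(\mu',\Sigma),\mathcal{N}(\mu',\Sigma')}\;\lesssim\;\|M-I\|_F\;\le\;\frac{\|\Sigma-\Sigma'\|_F}{\lambda},
\]
i.e.\ a $1/\lambda$ factor on the Frobenius term, whereas the lemma as stated carries only $1/\sqrt{\lambda}$. This is not a defect of your argument but of the statement: the $1/\sqrt{\lambda}$ scaling on the covariance term cannot hold in general. For instance, with $\mu=\mu'=0$, $\Sigma=\lambda I$, $\Sigma'=2\lambda I$ in any fixed dimension, the total variation distance equals $d_{TV}(\mathcal{N}(0,I),\mathcal{N}(0,2I))>0$ by scale invariance, while $\|\Sigma-\Sigma'\|_F/\sqrt{\lambda}=\sqrt{\lambda}\,\sqrt{d}\to 0$ as $\lambda\to 0$. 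So your $1/\lambda$ is the correct dependence, and the lemma as written appears to have a typo in the exponent. This is harmless for the paper's purposes: in both \pref{lem:mixtureTVdistance} and \pref{lem:variationdist1toinfty} one has $\Sigma=I_k$, hence $\lambda=1$, and the two versions coincide.
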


We state a well known result that a Lipschitz function of a Gaussian vector is tightly concentrated around its expectation,
which is useful since $\ell_p$ norms are Lipschitz.
\begin{theorem}\label{thm:gaussianlipschitz}\cite[Theorem 2.1.12]{tao2012topics}
Let $X\sim\mathcal{N}(0,I_n)$ be a Gaussian random vector and let $f:\R^n\rightarrow\R$ be a 1-Lipschitz function. Then for
some absolute constants $C,c>0$,
$\Pr[|f(X)-\E[f(X)]|\geq\lambda]\leq C\exp(-c\lambda^2)$
Notice that this implies if $f$ is $t$-Lipschitz, then
$
\Pr[|f(X)-\E[f(X)]|\geq\lambda]\leq C\exp(-c\lambda^2/t^2)
$
\end{theorem}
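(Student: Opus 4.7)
The plan is to establish this classical Gaussian concentration inequality via the moment generating function (MGF) method, using rotational invariance as the key geometric input. After replacing $f$ by $f - \E[f(X)]$ (which remains $1$-Lipschitz), I may assume $\E[f(X)] = 0$ and aim to show a sub-Gaussian MGF bound $\E[\exp(tf(X))] \leq \exp(C't^2)$ for an absolute constant $C'$ and all $t \in \R$. A Markov/Chernoff bound then yields $\Pr[f(X) \geq \lambda] \leq \inf_{t > 0} \exp(C't^2 - t\lambda) = \exp(-\lambda^2/(4C'))$ by optimizing $t$, and applying the same estimate to the $1$-Lipschitz function $-f$ gives the two-sided tail. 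The extension from $1$-Lipschitz to $t$-Lipschitz is obtained for free by applying the centered inequality to $f/t$.

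The main step is a Maurey-style interpolation argument to bound the MGF. Introduce an independent copy $Y \sim \mathcal{N}(0, I_n)$ and, for $\theta \in [0,\pi/2]$, set $X_\theta = X\cos\theta + Y\sin\theta$ and $X'_\theta = -X\sin\theta + Y\cos\theta$. A direct covariance calculation shows that for each fixed $\theta$, the pair $(X_\theta, X'_\theta)$ is distributed as two independent $\mathcal{N}(0, I_n)$ vectors. Differentiating along the path gives
\[
f(X) - f(Y) \;=\; -\int_0^{\pi/2} \langle \nabla f(X_\theta), X'_\theta \rangle \, d\theta.
\]
Applying Jensen's inequality to $u \mapsto e^{tu}$ against the uniform probability measure on $[0,\pi/2]$ and taking expectations yields
\[
\E\bigl[e^{t(f(X)-f(Y))}\bigr] \;\leq\; \frac{2}{\pi}\int_0^{\pi/2} \E\Bigl[\exp\Bigl(-\tfrac{\pi t}{2} \langle \nabla f(X_\theta), X'_\theta \rangle \Bigr)\Bigr]\, d\theta.
\]
Conditioning on $X_\theta$ makes $X'_\theta$ an independent standard Gaussian, so the inner expectation equals $\exp(\pi^2 t^2 \|\nabla f(X_\theta)\|_2^2 / 8) \leq \exp(\pi^2 t^2/8)$ by the $1$-Lipschitz hypothesis $\|\nabla f\|_2 \leq 1$. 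Finally, independence of $X$ and $Y$ together with $\E[e^{-tf(Y)}] \geq e^{-t\E[f(Y)]} = 1$ (Jensen plus centering) gives $\E[e^{tf(X)}] \leq \exp(\pi^2 t^2/8)$, as needed.

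The one subtle point is that a merely $1$-Lipschitz function is only differentiable almost everywhere (Rademacher's theorem), so $\nabla f$ in the identity above needs justification. I would handle this by convolving $f$ with a narrow Gaussian mollifier $\phi_\varepsilon$, obtaining a smooth $1$-Lipschitz approximant $f_\varepsilon$ for which the argument goes through verbatim; then $f_\varepsilon(X) \to f(X)$ almost surely and in $L^1$ as $\varepsilon \to 0$, and Fatou's lemma transfers the MGF bound to $f$ itself. This limiting step is the only nontrivial obstacle in the proof; everything else reduces to the fact that a one-dimensional standard Gaussian has MGF $e^{s^2/2}$.
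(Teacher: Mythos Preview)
Your argument is the standard Maurey--Pisier rotation proof of Gaussian Lipschitz concentration, and it is correct as written: the path $X_\theta$ has derivative $X'_\theta$, the pair $(X_\theta,X'_\theta)$ is jointly standard Gaussian for each fixed $\theta$, the Jensen step against the uniform measure on $[0,\pi/2]$ is legitimate, and the mollification to pass from a.e.\ differentiable Lipschitz $f$ to smooth $f_\varepsilon$ is the right way to handle the regularity issue. The final Chernoff optimization and the rescaling $f\mapsto f/t$ for the $t$-Lipschitz case are routine.

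That said, there is no comparison to make with the paper's own proof, because the paper does not prove this statement. It is quoted as a black-box tool from \cite[Theorem~2.1.12]{tao2012topics} in the preliminaries section and is never argued; the paper only \emph{applies} it (e.g.\ in Lemmas~\ref{lem:1topseparation}, \ref{lem:1toinftyseparation}, and \ref{lem:2toqsep}) to control the deviation of $\ell_p$ norms of Gaussian vectors. So your write-up goes well beyond what the paper itself contains for this item.
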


It is possible to embed $\ell_2^n$ into $\ell_p^{O(n)}$ with constant distortion using a linear map when $p\in[1,2]$, and we use
the existence of such a linear map in our results.
\begin{lemma}\label{thm:dvoretzky}\cite[Theorem 2.5.1]{matouvsek2013lecture}
For all $p\in[1,2]$, there is an absolute constant $C_p$ such that for any $n$, there is a linear map
$T: \R^n \rightarrow \R^{C_pn}$ such that
$
\|T(x)\|_p = \left(1\pm\frac{1}{2}\right)\|x\|_2. 
$
An important observation is that this implies for any linear map $A:\R^n\rightarrow\R^n$, we have
$\|TA\|_{q\to p}=\left(1\pm\frac{1}{2}\right)\|A\|_{q\to 2}$.
\end{lemma}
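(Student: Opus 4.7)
The plan is to construct $T$ as a suitably normalized Gaussian matrix and verify the embedding via Gaussian concentration plus a net argument on the $\ell_2^n$-sphere; the ``important observation'' for $\|TA\|_{q\to p}$ then follows pointwise.

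\textbf{Step 1 (Construction and first moment).} Let $m = C_p n$ with $C_p$ to be fixed later, and take $T \in \R^{m \times n}$ with i.i.d.\ $\mathcal{N}(0,\sigma^2)$ entries. For any fixed unit $x \in \R^n$, the coordinates of $Tx$ are i.i.d.\ $\mathcal{N}(0,\sigma^2)$, so by standard Gaussian moment estimates $\mu_p := \E\|Tx\|_p = \Theta_p(\sigma m^{1/p})$. Choose $\sigma$ so that $\mu_p = 1$; by scale invariance this gives $\E\|Tx\|_p = \|x\|_2$ for every $x$.

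\textbf{Step 2 (Concentration at a fixed $x$).} For $p \in [1,2]$ the function $y \mapsto \|y\|_p$ on $\R^m$ is $m^{1/p - 1/2}$-Lipschitz in the $\ell_2$ metric (Hölder). Apply \pref{thm:gaussianlipschitz} to this Lipschitz function of the i.i.d.\ Gaussian vector $Tx$: rescaling by $\sigma$, the effective Lipschitz constant is $\sigma m^{1/p - 1/2}$, while $\mu_p = \Theta(\sigma m^{1/p})$. Hence the relative deviation satisfies
\[
\Pr\left[\bigl|\|Tx\|_p - \|x\|_2\bigr| \ge \tfrac{1}{4}\|x\|_2\right] \;\le\; C \exp(-c' m)
\]
for an absolute constant $c' > 0$ (the exponent is linear in $m$ precisely because $m^{1/p-1/2}/m^{1/p} = m^{-1/2}$).

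\textbf{Step 3 (Net argument and observation).} Let $N$ be a $\delta$-net of the unit $\ell_2^n$-sphere with $|N| \le (3/\delta)^n$ for a small constant $\delta > 0$. Union-bounding Step~2 over $N$ and choosing $C_p$ large enough that $C\exp(-c' C_p n)(3/\delta)^n < 1/2$, with positive probability $\|Tx\|_p \in (1 \pm \tfrac14)\|x\|_2$ for every $x \in N$. To upgrade to the full sphere, bound $\|T\|_{2 \to p} = O(1)$ with high probability (e.g.\ by applying \pref{thm:gaussianlipschitz} to the single 1-Lipschitz function $T \mapsto \|T\|_{2 \to p}$, using $\E\|T\|_{2 \to p} = O(1)$), so that $x \mapsto \|Tx\|_p$ is $O(1)$-Lipschitz; a triangle-inequality expansion around each net point with small $\delta$ converts the $\tfrac14$ guarantee on $N$ into a $\tfrac12$ guarantee on the whole sphere, fixing a good $T$. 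For the observation, apply the pointwise bound to $Ax$: $\|TAx\|_p = (1 \pm \tfrac12)\|Ax\|_2$ for every $x$, then divide by $\|x\|_q$ and take the supremum to obtain $\|TA\|_{q\to p} = (1 \pm \tfrac12)\|A\|_{q\to 2}$.

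\textbf{Main obstacle.} The delicate point is the tight concentration rate in Step~2: the Lipschitz constant of $\|\cdot\|_p$ in the $\ell_2$ metric grows like $m^{1/p - 1/2}$, but the expectation grows like $m^{1/p}$, so the relative fluctuations have scale $m^{-1/2}$ and the failure probability is $\exp(-\Omega(m))$, which exactly matches the $e^{O(n)}$ size of an $\ell_2^n$-net when $m = \Theta(n)$. This balance is what forces $p \in [1,2]$ — for $p > 2$ one loses the Hölder factor, the Lipschitz constant is only $1$ while $\mu_p \approx m^{1/p}$, and a linear-in-$n$ dimension no longer suffices.
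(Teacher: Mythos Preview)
The paper does not supply a proof of this lemma: it is quoted verbatim from Matou\v{s}ek's lecture notes (the citation in the lemma header), and the ``important observation'' is stated without argument. So there is no paper proof to compare against.

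Your argument is correct and is essentially the standard probabilistic proof that $\ell_2^n$ embeds into $\ell_p^{O(n)}$ for $p\in[1,2]$ (the easy direction of Dvoretzky's theorem, via Gaussian concentration and a net on the Euclidean sphere). The computation in Step~2 is the heart of the matter and you have it right: the Lipschitz constant of $\|\cdot\|_p$ in the $\ell_2$ metric is $m^{1/p-1/2}$ while $\mu_p\asymp m^{1/p}$, so the relative fluctuation is $m^{-1/2}$ and the tail is $\exp(-\Omega(m))$, exactly matching the net size $e^{O(n)}$ when $m=\Theta(n)$. The derivation of the observation at the end is also exactly how one argues it.

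One small cleanup in Step~3: you do not need a separate concentration argument for $\|T\|_{2\to p}$. Once the union bound gives $\|Tx\|_p\le 5/4$ for all $x$ in the net $N$, \pref{lem:epsnetball} (already in the paper) immediately yields $\|T\|_{2\to p}\le \tfrac{5}{4(1-\delta)}$ on that same event; the triangle-inequality extension to the full sphere then goes through. Also, the map $G\mapsto \|T\|_{2\to p}=\sigma\|G\|_{2\to p}$ is not $1$-Lipschitz in the Frobenius metric but rather $\sigma m^{1/p-1/2}=\Theta(m^{-1/2})$-Lipschitz --- harmless here, but worth stating correctly if you keep that route.
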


In the lemma below we make an important observation that highlights the connection between several $p \to q$ norms. 

\begin{lemma}\label{lem:ptoqduality}
For any $p, q \geq 1$ and $d \times n$ matrix $A$, $\|A\|_{q \to p} = \|A^T\|_{p^* \to q^*}$. 
\end{lemma}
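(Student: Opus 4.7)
The plan is to prove this by a direct duality argument using the $\ell_p$--$\ell_{p^*}$ duality of vector norms together with the adjoint relationship $\langle Ax, z\rangle = \langle x, A^T z\rangle$. No heavy machinery is needed; it is really a two-line manipulation of suprema.

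First I would expand $\|A\|_{q \to p}$ using its definition, and then rewrite the inner $\ell_p$ norm via duality as a supremum against the unit ball of the dual norm. That is, for any $y \in \R^n$, $\|y\|_p = \sup_{\|z\|_{p^*} = 1} \langle y, z\rangle$ (with the usual conventions when $p \in \{1,\infty\}$). Applied to $y = Ax$, this yields
\begin{equation*}
\|A\|_{q \to p} \;=\; \sup_{\|x\|_q = 1} \sup_{\|z\|_{p^*} = 1} \langle Ax, z\rangle.
\end{equation*}

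Next I would use the adjoint identity $\langle Ax, z\rangle = \langle x, A^T z\rangle$ and then swap the two suprema, which is legal because the suprema are over independent variables ranging over compact sets (or can otherwise be taken in either order for a joint supremum). This gives
\begin{equation*}
\|A\|_{q \to p} \;=\; \sup_{\|z\|_{p^*} = 1} \sup_{\|x\|_q = 1} \langle x, A^T z\rangle.
\end{equation*}
Finally, apply the duality identity again, this time in the other direction: the inner supremum over $\|x\|_q = 1$ of $\langle x, A^T z\rangle$ equals $\|A^T z\|_{q^*}$ (since $q$ and $q^*$ are duals). Thus the entire expression collapses to $\sup_{\|z\|_{p^*} = 1} \|A^T z\|_{q^*} = \|A^T\|_{p^* \to q^*}$, as desired.

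There is no genuine obstacle here; the only mildly careful point is handling the boundary cases $p \in \{1, \infty\}$ or $q \in \{1, \infty\}$, but the dual representation $\|y\|_p = \sup_{\|z\|_{p^*} \le 1} \langle y, z \rangle$ holds uniformly for all $p \in [1, \infty]$ with the convention $1/1 + 1/\infty = 1$, so the argument goes through without modification.
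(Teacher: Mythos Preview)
Your proof is correct and follows essentially the same route as the paper's own proof: both express $\|Ax\|_p$ via the dual characterization $\sup_{\|z\|_{p^*}\le 1}\langle Ax,z\rangle$, use the adjoint identity $\langle Ax,z\rangle=\langle x,A^Tz\rangle$, swap the two suprema, and then collapse the inner supremum back to $\|A^Tz\|_{q^*}$. The only differences are cosmetic (your treatment of the endpoint cases is slightly more explicit).
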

\begin{proof}
Using the notation above for dual norms, we have
\begin{align*}
\|A\|_{q \to p} &= \sup\{\|Ax\|_q: \|x\|_p \leq 1\} \\ 
&= \sup\{\sup\{y^\top Ax: \|y\|_{q*} \leq 1\}: \|x\|_p \leq 1\} \\ 
&= \sup\{\sup\{x^\top A^\top y: \|x\|_p \leq 1\}: \|y\|_{q^*} \leq 1\} \\ 
&= \sup\{\|A^\top y\|_{p^*}: \|y\|_{q^*} \leq 1\} \\
&= \|A^\top\|_{p^* \to q^*}
\end{align*}
\end{proof}
Throughout the paper, we make use of $q^*$ to refer to $\frac{q}{q-1}$ since $\ell_{\frac{q}{q-1}}$ is the dual norm of $\ell_q$.

We give a characterization of the $1\to p$ and $\infty\to p$ norm of a matrix. The proofs can be found in \pref{app:prelims}. For any $d\times n$ matrix $A$, we have
\begin{lemma}\label{lem:1toPcharac}
$\|A\|_{1\to p}=\max_{i\in[n]}\{\|A_{*,i}\|_p\}$.
\end{lemma}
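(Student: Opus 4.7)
The plan is to prove the equality by establishing both inequalities separately, which is natural since the right hand side is realized at a specific extreme point of the unit ball of $\ell_1$.

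First I would handle the upper bound $\|A\|_{1\to p}\leq \max_{i\in[n]}\|A_{*,i}\|_p$. For any $x\in\R^n$ with $\|x\|_1\leq 1$, write $Ax=\sum_{i=1}^n x_i A_{*,i}$, and apply the triangle inequality for $\ell_p$ followed by absorbing $|x_i|$ against the column maximum:
\begin{equation*}
\|Ax\|_p \;\leq\; \sum_{i=1}^n |x_i|\,\|A_{*,i}\|_p \;\leq\; \bigl(\max_{i\in[n]}\|A_{*,i}\|_p\bigr)\sum_{i=1}^n|x_i| \;\leq\; \max_{i\in[n]}\|A_{*,i}\|_p.
\end{equation*}
Taking a supremum over admissible $x$ yields the bound.

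For the matching lower bound, let $i^*\in\arg\max_{i\in[n]}\|A_{*,i}\|_p$ and take $x=e_{i^*}$, the standard basis vector. Then $\|x\|_1=1$ and $Ax=A_{*,i^*}$, so $\|A\|_{1\to p}\geq \|Ax\|_p = \|A_{*,i^*}\|_p=\max_{i\in[n]}\|A_{*,i}\|_p$. Combining the two directions gives the claim.

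I do not expect any obstacle here: the statement is essentially the observation that the unit ball of $\ell_1^n$ is the convex hull of $\{\pm e_i\}$, so the $p$-norm of $Ax$ (a convex function of $x$) is maximized at a vertex, and the maximum over vertices is exactly the maximum column $p$-norm of $A$. The only care needed is to make the triangle inequality step explicit and to justify that the supremum over $\|x\|_1\leq 1$ can be replaced by a supremum over $\|x\|_1=1$ (trivial, since the objective is homogeneous and vanishes at $x=0$).
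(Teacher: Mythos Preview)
Your argument is correct and matches the paper's proof essentially line for line: the paper also writes $Ax=\sum_i x_i A_{*,i}$, applies the triangle inequality, bounds by $\max_i\|A_{*,i}\|_p$ using $\sum_i|x_i|\le 1$, and notes that equality is attained at $x=e_{i^*}$.
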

\begin{lemma}\label{lem:inftytoPcharac}
$\|A\|_{\infty\to p}=\max_{x\in\{\pm 1\}^n}\|Ax\|_p$.
\end{lemma}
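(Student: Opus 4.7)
The plan is to exploit convexity of the map $x \mapsto \|Ax\|_p$ together with the polyhedral structure of the $\ell_\infty$ unit ball. By definition,
\[
\|A\|_{\infty \to p} = \sup_{\|x\|_\infty \leq 1} \|Ax\|_p,
\]
where the feasible set is the hypercube $[-1,1]^n$. This is a compact convex polytope whose extreme points are exactly the sign vectors $\{\pm 1\}^n$.

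First I would observe that $f(x) := \|Ax\|_p$ is a convex function on $\R^n$: this follows from the triangle inequality and positive homogeneity of $\|\cdot\|_p$, i.e., for $\lambda \in [0,1]$,
\[
f(\lambda x + (1-\lambda) y) = \|\lambda A x + (1-\lambda) A y\|_p \leq \lambda \|Ax\|_p + (1-\lambda)\|Ay\|_p.
\]
Next I would invoke the standard fact that a convex function on a compact convex polytope attains its maximum at an extreme point. Concretely, every $x \in [-1,1]^n$ can be written as a convex combination $x = \sum_{s \in \{\pm 1\}^n} \lambda_s s$ with $\lambda_s \geq 0$ and $\sum_s \lambda_s = 1$, so by convexity
\[
\|Ax\|_p \leq \sum_{s \in \{\pm 1\}^n} \lambda_s \|As\|_p \leq \max_{s \in \{\pm 1\}^n} \|As\|_p.
\]
The reverse inequality is immediate since each $\{\pm 1\}^n$ vector lies in the feasible set. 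Combining the two bounds yields the desired equality.

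There is essentially no obstacle here; the only mild subtlety is justifying the extreme-point decomposition of a point in $[-1,1]^n$, which one can either quote from standard convex analysis (Krein--Milman / Minkowski for polytopes) or prove directly by induction on the number of coordinates strictly inside $(-1,1)$, splitting one such coordinate into a convex combination of $\pm 1$ at a time. That elementary inductive argument would fit in one or two lines and keeps the proof fully self-contained.
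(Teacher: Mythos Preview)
Your proof is correct and essentially identical to the paper's: the paper carries out exactly the coordinate-by-coordinate convexity step you describe at the end, writing $\|Ax\|_p \le \frac{1+x_j}{2}\|A(x_{j\to 1})\|_p + \frac{1-x_j}{2}\|A(x_{j\to -1})\|_p$ for any $x_j\in(-1,1)$ and iterating. Your presentation via the extreme-point principle is a slightly more abstract packaging of the same idea.
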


We introduce the machinery of $\eps$-nets, a common tool in the study of random matrices (see \cite{vershynin2010introduction})
along with some relevant lemmas and defer the proofs to the full version's Appendix.
\begin{definition}[$\eps$-net]
Let $\mathcal{X}$ be a normed space. For $S\subseteq V$, we call a set $N$ an $\eps$-net for $S$ if for all $v\in S$, there is
$v'\in N$ such that $\|v-v'\|_{\mathcal{X}}<\eps$.
\end{definition}

For a linear operator $A$, we show that to bound $\|A\|_{\mathcal{X}\to\mathcal{Y}}$, it suffices to bound
$\|Ax\|_{\mathcal{Y}}$ for $x$ taken over an $\eps$-net of the unit ball in $\mathcal{X}$.

\begin{lemma}\label{lem:epsnetball}
Let $\mathcal{X}$ and $\mathcal{Y}$ be normed spaces and let $A:\mathcal{X}\rightarrow \mathcal{Y}$ be a linear map.
Suppose $N$ is an $\eps$-net of the unit ball in $\mathcal{X}$, then
$
\|A\|_{\mathcal{X}\to\mathcal{Y}} \le \frac{1}{1-\eps}\max_{v\in N}\|Av\|_{\mathcal{Y}}.
$
\end{lemma}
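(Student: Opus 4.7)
The plan is the standard $\varepsilon$-net bootstrapping argument. Let $M := \max_{v\in N}\|Av\|_{\mathcal Y}$ and let $B = \{x \in \mathcal X : \|x\|_{\mathcal X} \le 1\}$ denote the unit ball. First I would note that in the settings relevant to this paper $\mathcal X$ is finite-dimensional, so $\|A\|_{\mathcal X\to\mathcal Y}$ is finite and the supremum defining it is attained on $B$; this justifies all the rearrangements that follow.

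Next I would fix an arbitrary $x \in B$ and use the net property: there exists $v \in N$ with $\|x - v\|_{\mathcal X} < \varepsilon$. By linearity of $A$ and the triangle inequality in $\mathcal Y$,
\[
\|Ax\|_{\mathcal Y} \;\le\; \|Av\|_{\mathcal Y} + \|A(x-v)\|_{\mathcal Y} \;\le\; M + \|A\|_{\mathcal X\to\mathcal Y}\cdot\|x-v\|_{\mathcal X} \;\le\; M + \varepsilon\,\|A\|_{\mathcal X\to\mathcal Y},
\]
where the middle inequality uses the definition of the operator norm applied to the vector $x - v$ (noting that $\|A(x-v)\|_{\mathcal Y} \le \|A\|_{\mathcal X\to\mathcal Y}\|x-v\|_{\mathcal X}$ is immediate from the definition).

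Finally, taking the supremum of the left-hand side over $x \in B$ yields $\|A\|_{\mathcal X\to\mathcal Y} \le M + \varepsilon\,\|A\|_{\mathcal X\to\mathcal Y}$. Rearranging gives $(1-\varepsilon)\|A\|_{\mathcal X\to\mathcal Y} \le M$, and since $\varepsilon < 1$ we divide to conclude $\|A\|_{\mathcal X\to\mathcal Y} \le \frac{1}{1-\varepsilon}\,M$, as claimed.

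There is no real obstacle here; the only subtlety is that the argument requires $\|A\|_{\mathcal X\to\mathcal Y} < \infty$ so that the rearrangement is valid. In finite dimensions (the only case used in the paper) this holds automatically. If one wanted a fully general statement, one could instead prove the bound for $\|A\|$ restricted to any finite-dimensional subspace and pass to the limit, or equivalently interpret the inequality as vacuous when $\|A\| = \infty$ and not vacuous only when the right-hand side is finite.
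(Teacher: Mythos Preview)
Your proof is correct and essentially identical to the paper's: both pick a point on (or near) the unit ball, approximate it by a net point, apply the triangle inequality together with $\|A(x-v)\|_{\mathcal Y}\le \varepsilon\|A\|_{\mathcal X\to\mathcal Y}$, and rearrange. The only cosmetic difference is that the paper picks the maximizer $x^*$ directly whereas you take an arbitrary $x$ and pass to the supremum at the end; your remark about needing $\|A\|<\infty$ for the rearrangement is a fair observation that the paper leaves implicit.
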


We also give a way to construct `small' $\eps$-nets of unit balls.
\begin{lemma}\label{lem:epsnetsize}
There is an $\eps$-net of the unit ball $B$ in an $n$-dimensional normed space $\mathcal{X}$ with at most $\left(\frac{2+\eps}{\eps}
\right)^n$ elements. 
\end{lemma}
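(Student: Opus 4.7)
The plan is to use the standard volumetric packing argument, which works in any finite-dimensional normed space because volume ratios under scaling are dimension-only (once we fix any Lebesgue measure on $\R^n$, identified with $\mathcal{X}$).

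First, I would build a maximal $\epsilon$-separated set $N \subseteq B$, meaning a subset of the unit ball such that any two distinct points $x, y \in N$ satisfy $\|x-y\|_{\mathcal{X}} \geq \epsilon$, and such that no point of $B$ can be added while preserving this property. Such a set exists by a Zorn's lemma / greedy construction. The key observation is that maximality forces $N$ to be an $\epsilon$-net: if some $v \in B$ had $\|v - v'\|_{\mathcal{X}} \geq \epsilon$ for every $v' \in N$, then $v$ could be added to $N$, contradicting maximality. So once I bound $|N|$, I am done.

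Second, I would carry out the packing estimate. Consider the open balls $B(v, \epsilon/2) := \{x : \|x-v\|_{\mathcal{X}} < \epsilon/2\}$ for $v \in N$. These are pairwise disjoint: if $x$ lay in two such balls centered at distinct $v, v'$, then by the triangle inequality $\|v-v'\|_{\mathcal{X}} < \epsilon$, contradicting $\epsilon$-separation. On the other hand, each $B(v, \epsilon/2)$ is contained in the enlarged ball $(1+\epsilon/2) B$, since any $x \in B(v,\epsilon/2)$ satisfies $\|x\|_{\mathcal{X}} \leq \|v\|_{\mathcal{X}} + \|x-v\|_{\mathcal{X}} < 1 + \epsilon/2$. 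Fixing any Lebesgue measure $\mathrm{vol}$ on the underlying $n$-dimensional real vector space, and using the scaling identity $\mathrm{vol}(tB) = t^n \mathrm{vol}(B)$ for $t > 0$, disjointness and containment give
\begin{equation*}
|N| \cdot (\epsilon/2)^n \, \mathrm{vol}(B) \;=\; \sum_{v \in N} \mathrm{vol}(B(v, \epsilon/2)) \;\leq\; \mathrm{vol}((1+\epsilon/2)B) \;=\; (1+\epsilon/2)^n \mathrm{vol}(B).
\end{equation*}
Cancelling $\mathrm{vol}(B) > 0$ and rearranging yields $|N| \leq \left(\frac{2+\epsilon}{\epsilon}\right)^n$, as desired.

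There is no real obstacle here; the only thing worth being careful about is that we are working in an arbitrary norm rather than $\ell_2$. The argument is robust to this because (i) the triangle inequality is all that is needed for disjointness and containment, and (ii) the scaling law $\mathrm{vol}(tB) = t^n \mathrm{vol}(B)$ holds for any centrally symmetric convex body $B$ in $\R^n$ with respect to Lebesgue measure, regardless of which norm $B$ is the unit ball of. So the same estimate goes through verbatim.
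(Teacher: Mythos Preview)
Your proof is correct and essentially identical to the paper's: both construct a maximal $\eps$-separated subset of the unit ball (the paper via a greedy ``keep adding points'' description, you via Zorn's lemma) and then bound its size by the standard volumetric packing argument, placing disjoint radius-$\eps/2$ balls inside the radius-$(1+\eps/2)$ ball. The only differences are cosmetic.
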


Another tool we use is subspace embeddings, which we define below.
\begin{definition}
An \textbf{oblivious subspace embedding family} (OSE family) is a distribution $\mathcal{S}$ over $O(m)\times n$ matrices such that for
any subspace $K\subseteq\R^n$ of dimension $m$,
$
\Pr_{S\sim\mathcal{S}}[\forall x\in K:\|Sx\|_2 = \Theta(1)\|x\|_2]\geq\frac{9}{10}.
$
\end{definition}
\begin{lemma}\label{lem:OSEfamilies}\cite{sarlos2006improved}
There exist OSE families, where the matrices have dimension $O(k)\times n$.
Note that this means for any rank-$k$ matrix $A$, a randomly drawn $S$ from such
an oblivious subspace embedding family satisfies $\|SAx\|_2 = \Theta(1)\|Ax\|_2$
simultaneously for all $x$ with probability at least $99/100$.
\end{lemma}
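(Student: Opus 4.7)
My plan is to exhibit an explicit distribution and then verify the OSE property via the standard concentration-plus-net argument.

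\textbf{Construction.} I would take $S\in\R^{m\times n}$ to be a matrix with i.i.d.\ entries drawn from $\mathcal{N}(0, 1/m)$, where $m = Ck$ for a sufficiently large absolute constant $C$. (Any sub-gaussian distribution with mean $0$ and variance $1/m$ would work equally well; the Gaussian case is cleanest to analyze.) One verifies $\E\|Sx\|_2^2 = \|x\|_2^2$ for every fixed $x$, so the matrix is an isometry in expectation.

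\textbf{Pointwise concentration.} For any fixed unit vector $x$, the random variable $m\|Sx\|_2^2$ is a $\chi^2_m$ random variable (by rotational invariance of the Gaussian, only $\|x\|_2$ matters). Standard $\chi^2$ tail bounds give
\begin{equation*}
\Pr\bigl[\,\bigl|\|Sx\|_2^2 - 1\bigr| > \tfrac{1}{4}\,\bigr] \le 2e^{-cm}
\end{equation*}
for an absolute constant $c>0$, so $\|Sx\|_2 = (1\pm \tfrac14)\|x\|_2$ except with exponentially small probability in $m$.

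\textbf{Net argument.} Now fix a $k$-dimensional subspace $K\subseteq \R^n$. By \pref{lem:epsnetsize} applied to the unit ball of $K$ (viewed as a $k$-dimensional normed space under the Euclidean norm), there exists a $(1/4)$-net $N\subseteq K$ of the unit sphere of $K$ with $|N|\le 9^k$. Taking a union bound over $v\in N$, the event that $\|Sv\|_2 = (1\pm\tfrac14)\|v\|_2$ holds for every $v\in N$ simultaneously fails with probability at most $2\cdot 9^k e^{-cm}$. Choosing $C$ large enough so that $9^k e^{-cCk} \le 1/20$, this event holds with probability at least $9/10$.

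\textbf{From the net to all of $K$.} Finally, I would apply \pref{lem:epsnetball} with $\mathcal{X}=K$ (with the $\ell_2$ norm) and $\mathcal{Y}=\ell_2^m$: the upper bound $\|S\|_{K\to 2} \le \tfrac{1}{1-1/4}\max_{v\in N}\|Sv\|_2 \le \tfrac{5}{3}\cdot\tfrac{5}{4}$ is immediate, giving an $O(1)$ upper bound valid uniformly. The matching lower bound follows from the standard iterative approximation: writing any unit $x\in K$ as $x = \sum_{i\ge 0}\alpha_i v_i$ with $v_i\in N$ and $|\alpha_i|\le 4^{-i}$, the triangle inequality combined with the net guarantee yields $\|Sx\|_2 \ge (1-\tfrac14)\|x\|_2 - \sum_{i\ge 1}\tfrac54\cdot 4^{-i}\ge \tfrac{1}{4}\|x\|_2$, which is $\Theta(1)\|x\|_2$. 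Thus $\|Sx\|_2 = \Theta(1)\|x\|_2$ simultaneously for all $x\in K$ with probability $\ge 9/10$, proving the OSE property with $m = O(k)$ rows.

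The only real obstacle is ensuring the net-to-subspace extension gives a \emph{two-sided} multiplicative bound; the upper bound is an immediate consequence of \pref{lem:epsnetball}, but the lower bound requires the iterative covering argument above rather than a direct application of the lemma. The rank-$k$ matrix remark at the end is then just the observation that one may take $K = \mathrm{colspace}(A)$, so the column space of any such $A$ is a $k$-dimensional subspace on which the OSE acts as a $\Theta(1)$ isometry.
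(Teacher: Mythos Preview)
The paper does not prove this lemma at all; it is stated as a citation to \cite{sarlos2006improved} and used as a black box. Your argument is the standard, correct proof of this fact: Gaussian sketch, $\chi^2$ concentration for each fixed direction, union bound over a $9^k$-sized net of the unit sphere of $K$, and then extension from the net to the whole subspace. One small arithmetic slip: in the lower-bound step you compute $\tfrac34 - \tfrac54\sum_{i\ge 1}4^{-i} = \tfrac34 - \tfrac{5}{12} = \tfrac13$, not $\tfrac14$; this is of course still $\Theta(1)$ and changes nothing. Also note that for the lower bound you do not actually need the iterative decomposition: once the net gives $\|S|_K\|_{2\to 2}\le \tfrac{5}{3}$, for any unit $x\in K$ pick $v\in N$ with $\|x-v\|_2\le \tfrac14$ and write $\|Sx\|_2\ge \|Sv\|_2 - \|S(x-v)\|_2 \ge \tfrac34 - \tfrac{5}{3}\cdot\tfrac14 = \tfrac13$ directly.
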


\section{Sketching algorithms for constant factor approximations}\label{sec:upperbounds}
\subsection{Sketches for approximating $\|A\|_{1\to p}$}\label{sec:1toPupper}

We show how to use sketches for $p$-norms of vectors to come up with sketches for the $1\to p$ norm.
\begin{lemma}\label{lem:matrixcolsketch}
Let $x$ be an arbitrary vector in $\R^n$. If $\mathcal{S}$ is a distribution over $t\times n$ sketching matrices, and
$f:\R^t\rightarrow \R$ is a function
such that
$
\Pr_{S\sim \mathcal{S}}\left[f(Sx)\in \left(\frac{1}{2}\|x\|_p,2\|x\|_p\right)\right]\geq\frac{2}{3}
$
then there is an $(O(nt\log n),2)$-sketching family $(\mathcal{S}', g)$ for the $1\to p$ norm of $n\times n$ matrices.
\end{lemma}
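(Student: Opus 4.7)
The plan is to reduce the problem to estimating the maximum column $p$-norm, using the characterization from Lemma \ref{lem:1toPcharac} that $\|A\|_{1\to p} = \max_{i\in[n]} \|A_{*,i}\|_p$. Given this, a natural sketch is to apply the hypothesized vector sketch $S$ independently to each column of $A$. Since $Se_i$ picks out a column of $A$ after multiplication, this is just $n$ independent applications of $S$ to columns, which the linearity of $L$ allows us to store in $O(nt)$ coordinates via a block-diagonal linear map $L(A)$.

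The issue is that the per-column guarantee only holds with probability $2/3$, while we need all $n$ columns to be approximated simultaneously in order to take a maximum. I would handle this with the standard median boosting trick. First, draw $r = O(\log n)$ independent copies $S^{(1)}, \dots, S^{(r)}$ from $\mathcal{S}$, apply each to every column of $A$, and let $f^{(j)}(S^{(j)} A_{*,i})$ be the $j$-th estimate of $\|A_{*,i}\|_p$. Define the boosted estimator for column $i$ to be $\widehat{c}_i := \text{median}_j f^{(j)}(S^{(j)} A_{*,i})$. A Chernoff bound on indicator variables for ``$f^{(j)}(S^{(j)} A_{*,i}) \in (\tfrac{1}{2}\|A_{*,i}\|_p, 2\|A_{*,i}\|_p)$'' shows that with $r = O(\log n)$ copies, $\widehat{c}_i$ lands in $(\tfrac{1}{2}\|A_{*,i}\|_p, 2\|A_{*,i}\|_p)$ with probability at least $1 - 1/(10n)$. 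A union bound over the $n$ columns then guarantees all $\widehat{c}_i$ are simultaneously good with probability at least $5/6$.

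Now define $g(L(A)) := \max_i \widehat{c}_i$. On the good event,
\[
\tfrac{1}{2}\max_i \|A_{*,i}\|_p \;<\; \max_i \widehat{c}_i \;<\; 2\max_i \|A_{*,i}\|_p,
\]
since the max of approximations is an approximation of the max whenever each individual estimate is within the same multiplicative window. By Lemma \ref{lem:1toPcharac}, this is $(\tfrac{1}{2},2)\cdot\|A\|_{1\to p}$, so $(\mathcal{S}', g)$ is a valid $(k,2)$-sketching family.

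For the dimension count, the overall sketch consists of $r = O(\log n)$ copies of $n$ vector sketches, each of dimension $t$, for a total of $k = O(nt\log n)$. The only delicate point is ensuring that the vector sketch $S$ producing dimension $t$ output can be applied as a linear function of $A$ (rather than of a single column); this is immediate because ``apply $S$ to column $i$'' is linear in the entries of $A$, so stacking these operations yields a single linear map $L: \mathbb{R}^{n\times n} \to \mathbb{R}^k$ whose distribution $\mathcal{S}'$ is induced by the distribution over $(S^{(1)},\ldots,S^{(r)})$. No real obstacle arises beyond the standard median amplification argument.
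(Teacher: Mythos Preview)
Your proposal is correct and follows essentially the same approach as the paper: use the characterization $\|A\|_{1\to p}=\max_i\|A_{*,i}\|_p$, boost the per-vector sketch to failure probability $O(1/n^2)$ via the median of $O(\log n)$ independent copies, union bound over the $n$ columns, and output the maximum of the boosted estimates. The only cosmetic difference is that the paper realizes the sketch as a single $(ct\log n)\times n$ matrix $S$ applied on the left (so the stored object is $SA$), whereas you phrase it as a block-diagonal map; both yield the same $O(nt\log n)$ measurements and the same analysis.
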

\begin{proof}
Proof in \pref{app:upperbounds}.
\end{proof}

Given an $n$-dimensional vector $x$, we have the following theorems from \cite{kane2010exact} and \cite{andoni2011streaming}
respectively.
\begin{theorem}[Efficient sketches for small norms]\label{thm:sketchingsmallnorm}
When $p\in[1,2]$, there is a function $f$ and a distribution over sketching matrices $\mathcal{F}$ with $O(1)$ rows such
that for $S\sim\mathcal{F}$, $f(Sx)$ is a constant factor approximation for $\|x\|_p$ with probability at least $2/3$.
\end{theorem}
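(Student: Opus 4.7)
The plan is to use the classical $p$-stable distribution approach originating in Indyk's work. For every $p \in (0,2]$, a theorem of L\'evy guarantees the existence of a symmetric $p$-stable distribution $\mathcal{D}_p$: a real random variable $Z$ such that for any $x \in \R^n$ and iid copies $Z_1,\dots,Z_n$ of $Z$, the sum $\sum_i x_i Z_i$ is distributed as $\|x\|_p \cdot Z$. For $p=1$ this is the Cauchy distribution and for $p=2$ this is the Gaussian; for the intermediate values we just need that $\mathcal{D}_p$ exists, is symmetric, and has a continuous CDF.

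First, I would take $\mathcal{F}$ to be the distribution on $k \times n$ matrices with iid entries drawn from $\mathcal{D}_p$, for a sufficiently large constant $k$ (something like $k=11$ suffices). By $p$-stability, conditioned on $x$, each coordinate $(Sx)_i$ of the sketch is distributed as $\|x\|_p \cdot Z_i$ where the $Z_i$ are iid draws from $\mathcal{D}_p$. The entire sketch is therefore $\|x\|_p$ times a fixed-distribution random vector whose law does not depend on $x$.

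Second, I would define the estimator $f(Sx) = \mathrm{median}_{i \in [k]}\, |(Sx)_i| / M_p$, where $M_p$ is the (fixed) median of $|Z|$ for $Z \sim \mathcal{D}_p$. Then $f(Sx)/\|x\|_p$ is distributed as the sample median of $k$ iid copies of $|Z|/M_p$, a random variable whose population median is exactly $1$. To finish, I would argue that for any $p \in [1,2]$ the CDF $F$ of $|Z|/M_p$ is continuous and satisfies $F(1/2) < 1/2 - c$ and $F(2) > 1/2 + c$ for some absolute $c > 0$; then a Chernoff bound on the Bernoulli counts $\#\{i: |Z_i|/M_p > 2\}$ and $\#\{i: |Z_i|/M_p < 1/2\}$ shows that for a large enough constant $k$ the sample median lies in $[1/2, 2]$ with probability at least $2/3$, which yields a $2$-approximation (rescaling the constants gives any desired constant factor).

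The one step that requires honest work is controlling the CDF of $|Z|/M_p$ away from its median uniformly in $p \in [1,2]$, since $\mathcal{D}_p$ has no closed-form density outside $p \in \{1,2\}$. However, the standard facts that $\mathcal{D}_p$ has a bounded, continuous, strictly positive density on $\R$ and that the density varies continuously in $p$ give a constant gap $c = c(p) > 0$ for each fixed $p$, which is all the lemma requires (the constant in the $O(1)$ sketching dimension is allowed to depend on $p$). This is exactly the analysis that appears in \cite{indyk2006stable} and is refined in \cite{kane2010exact,andoni2011streaming}, so no new ideas are needed beyond the $p$-stable embedding and a median concentration argument.
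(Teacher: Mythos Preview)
The paper does not prove this theorem; it simply cites it as a known result from \cite{kane2010exact} (and the companion large-$p$ statement from \cite{andoni2011streaming}). Your proposal supplies the classical Indyk $p$-stable argument, which is correct and is exactly the technique underlying the cited references, so there is nothing to compare.

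One small remark: you state that $k=11$ ``suffices'' before arguing that the CDF gap $c$ may depend on $p$; since the number of rows needed scales like $1/c^2$, the honest conclusion is $k = O_p(1)$, not a universal $11$. You already acknowledge this at the end, so the argument is fine---just be careful not to commit to a specific numerical constant earlier than the analysis supports.
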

\begin{theorem}[Efficient sketches for large norms]\label{thm:sketchinglargenorm}
When $p > 2$, there is a function $f$ and a distribution over sketching matrices $\mathcal{F}$ with $O(n^{1-2/p}\log n)$ rows such that
for $S\sim\mathcal{F}$, $f(Sx)$ is a constant factor approximation for $\|x\|_p$ with probability at least $2/3$.
\end{theorem}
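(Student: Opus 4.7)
The plan is to prove this via the precision-sampling framework of \cite{andoni2011streaming}. First I would introduce auxiliary i.i.d.\ uniform random variables $u_1,\ldots,u_n \in [0,1]$, treated as shared randomness defining the sketching distribution, and study the rescaled vector $y$ with $y_i = x_i / u_i^{1/p}$. A direct computation using $\Pr[|y_i|^p > t] = |x_i|^p / t$ for $t \geq |x_i|^p$ shows that, writing $F := \|x\|_p^p$, the quantity $M := \max_{i \in [n]} |y_i|^p$ satisfies $M \in [c F, C F]$ with constant probability for absolute constants $c, C > 0$. Thus the problem reduces to computing $M$ (up to a constant factor) from a linear sketch of $x$, since $y$ is a fixed linear transformation of $x$ once the $u_i$'s are fixed.

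Second, I would argue that $M$ is dominated by a small number of heavy coordinates of $y$. Setting the threshold $\tau = M / (K n^{1-2/p})$ for a suitable constant $K$, an expectation calculation (again using the tail bound on $|y_i|^p$) shows that the number of indices $i$ with $|y_i|^p \geq \tau$ is $O(n^{1-2/p})$ with constant probability. Moreover, the total $\ell_2^2$ mass of the "light" coordinates, $\sum_{i : |y_i|^p < \tau} y_i^2$, is controlled in a way that makes these coordinates invisible to a CountSketch with $B = \Theta(n^{1-2/p})$ buckets.

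Third, I would implement the above by maintaining a CountSketch on $y$ with $B = \Theta(n^{1-2/p})$ buckets and $r = \Theta(\log n)$ independent hash tables, using $O(n^{1-2/p} \log n)$ total rows. Composed with the fixed linear rescaling $x \mapsto y$, this is a linear sketch of $x$. The estimator $f$ recovers the top $O(n^{1-2/p})$ candidate heavy hitters of $y$ from the CountSketch, computes their estimated values $\tilde y_i$, and outputs $(\max_i |\tilde y_i|^p)^{1/p}$. Combining the precision-sampling guarantee on $M$ with the CountSketch's $\ell_2$ error bound on the recovered heavy hitters yields a constant-factor approximation to $F^{1/p} = \|x\|_p$.

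The main obstacle will be the interaction between the two sources of randomness: the precision-sampling randomness $u$ and the CountSketch randomness. One has to show that, conditioned on $M$ being in the good range, the CountSketch noise from the $O(n)$ light coordinates of $y$ is with constant probability smaller than $M^{1/p}$ by a constant factor, which requires bounding $\sum_{i \text{ light}} y_i^2$ by $O(M^{2/p} n^{1-2/p}/B)$. This is the step where the $n^{1-2/p}$ scaling (rather than a worse polynomial) emerges, and careful truncation of the $u_i$'s plus a union bound over the $\Theta(\log n)$ hash tables is needed to reduce the failure probability of heavy-hitter recovery to a small constant.
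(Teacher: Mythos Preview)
The paper does not prove \pref{thm:sketchinglargenorm}; it is quoted as a black-box result from \cite{andoni2011streaming} and used only as an input to \pref{lem:matrixcolsketch}. So there is no ``paper's own proof'' to compare against. That said, your proposal is exactly the precision-sampling argument of \cite{andoni2011streaming}: rescale $x_i$ by $u_i^{-1/p}$ with $u_i$ uniform on $[0,1]$, observe that $M=\max_i |y_i|^p$ is a constant-factor proxy for $\|x\|_p^p$, and recover the maximizer via CountSketch with $\Theta(n^{1-2/p})$ buckets and $\Theta(\log n)$ repetitions. The key calculation you allude to is correct in spirit: for $p>2$ one has $\E[u_i^{-2/p}]=p/(p-2)<\infty$, hence $\E\|y\|_2^2=\Theta(\|x\|_2^2)\le\Theta(n^{1-2/p}\|x\|_p^2)$, so with $B=\Theta(n^{1-2/p})$ buckets the per-coordinate CountSketch noise is $O(\|y\|_2/\sqrt{B})=O(F^{1/p})$, small enough (after tuning constants) to recover $M^{1/p}$ to within a constant factor.

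One small correction: the bound you want on the light mass is $\sum_{i\text{ light}} y_i^2 = O(M^{2/p}\cdot n^{1-2/p})$, not divided by $B$; the division by $B$ happens when you pass to the CountSketch error. Also note that the truncation of the $u_i$'s you mention is not strictly necessary here for the second-moment bound (the integral $\int_0^1 u^{-2/p}\,du$ converges when $p>2$), though it is needed in the general precision-sampling framework and helps keep constants uniform as $p\downarrow 2$.
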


\pref{lem:matrixcolsketch} tells us the following as a corollary to \pref{thm:sketchingsmallnorm} and
\pref{thm:sketchinglargenorm}.
\begin{theorem}
There is an $(O(n\log n),2)$-sketching family for the $1\to p$ norm when $p \in[1,2]$ and a $(O(n^{2-2/p})\log^2 n, 2)$-sketching
family for the $1\to p$ norm when $p\in(2,\infty]$.
\end{theorem}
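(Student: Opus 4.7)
The statement is essentially a corollary: Lemma \ref{lem:matrixcolsketch} reduces the task of $(k,2)$-sketching $\|A\|_{1\to p}$ to the task of single-vector $p$-norm sketching, with a multiplicative blow-up of $O(n\log n)$ in the sketch size. So my plan is to apply that reduction twice, once per regime of $p$, plugging in the known single-vector sketches of Theorems \ref{thm:sketchingsmallnorm} and \ref{thm:sketchinglargenorm}.

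First, for $p \in [1,2]$, Theorem \ref{thm:sketchingsmallnorm} gives a distribution $\mathcal{F}$ over sketching matrices with $t = O(1)$ rows, together with a function $f$, such that $f(Sx)$ is a constant-factor approximation to $\|x\|_p$ with probability at least $2/3$ for any fixed $x \in \R^n$. Feeding this $(\mathcal{F}, f)$ into Lemma \ref{lem:matrixcolsketch} yields an $(O(nt\log n), 2) = (O(n\log n), 2)$-sketching family for $\|A\|_{1\to p}$, which matches the first claim.

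Next, for $p \in (2,\infty]$, Theorem \ref{thm:sketchinglargenorm} supplies an analogous $(\mathcal{F},f)$ but with $t = O(n^{1-2/p}\log n)$ rows. Plugging this $t$ into Lemma \ref{lem:matrixcolsketch} gives sketch dimension
\[
O(nt \log n) \;=\; O\!\bigl(n \cdot n^{1-2/p}\log n \cdot \log n\bigr) \;=\; O(n^{2-2/p}\log^2 n),
\]
and again the approximation factor is $2$. (For $p = \infty$ one reads $n^{2-2/p}$ as $n^2$.) This matches the second claim.

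Since Lemma \ref{lem:matrixcolsketch} is cited as already proved, there is no genuine obstacle in the corollary itself: the only nontrivial ingredient is the probability amplification and union bound over $n$ columns that Lemma \ref{lem:matrixcolsketch} performs to upgrade a $2/3$-confidence per-vector sketch into a max-of-$n$ sketch, together with the characterization $\|A\|_{1\to p} = \max_i \|A_{*,i}\|_p$ from Lemma \ref{lem:1toPcharac}. Conceptually, the only thing to double-check is that the two extra $\log n$ factors I produce have the right origins: one factor comes from boosting each per-column sketch to succeed with probability $1 - 1/\mathrm{poly}(n)$ so that a union bound over the $n$ columns still succeeds with constant probability, and the other (in the large-$p$ regime) is inherited from Theorem \ref{thm:sketchinglargenorm}. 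So the write-up should just be a one-line invocation for each regime, with the arithmetic above.
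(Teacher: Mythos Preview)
Your proposal is correct and is exactly the paper's argument: the theorem is stated as an immediate corollary of Lemma~\ref{lem:matrixcolsketch} applied to the vector sketches of Theorems~\ref{thm:sketchingsmallnorm} and~\ref{thm:sketchinglargenorm}, with the same arithmetic on $t$ that you carried out.
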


\subsection{Sketches for approximating $\|A\|_{2\to p}$ for $p > 2$}\label{sec:2toqupper}
We give a sketching algorithm for the $2\to p$ norm of $A$, whose number of measurements depends on the rank $r$
of $d\times n$ matrix $A$.
\begin{theorem}
There is an $(O(n^{1-2/p}r^2\log n),\Theta(1))$-sketching family for the $2\to p$ norm.
\end{theorem}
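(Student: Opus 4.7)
The plan is to follow verbatim the four-step outline sketched in the introduction and make each step quantitative. Write $p^* = p/(p-1) \in [1,2)$. First, invoke \pref{lem:ptoqduality} to pass from $\|A\|_{2 \to p}$ to $\|A^T\|_{p^* \to 2}$; this is convenient because the range of $A^T$ is an $r$-dimensional subspace of $\R^n$, which lets us apply \pref{lem:OSEfamilies} to obtain a random $S \in \R^{O(r) \times n}$ with the property that $\|S A^T y\|_2 = \Theta(1)\,\|A^T y\|_2$ for every $y$, hence $\|SA^T\|_{p^* \to 2} = \Theta(1)\,\|A^T\|_{p^* \to 2}$. Second, let $G \in \R^{O(r) \times O(r)}$ be the (deterministic) map from \pref{thm:dvoretzky} applied with $p=1$, so that $\|Gz\|_1 = \Theta(1)\,\|z\|_2$ uniformly in $z \in \R^{O(r)}$; applying this to $z = SA^T y$ and taking suprema gives $\|GSA^T\|_{p^* \to 1} = \Theta(1)\,\|SA^T\|_{p^* \to 2}$. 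Dualizing once more via \pref{lem:ptoqduality} yields
\[
\|A\|_{2 \to p} \;=\; \Theta(1)\,\|A S^T G^T\|_{\infty \to p}.
\]

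Third, use \pref{lem:inftytoPcharac} to rewrite this as
\[
\|A S^T G^T\|_{\infty \to p} \;=\; \max_{x \in \{\pm 1\}^{O(r)}} \|A S^T G^T x\|_p.
\]
For each such $x$, $v_x := A S^T G^T x$ is a fixed vector in $\R^n$, and we would like a single linear sketch that estimates $\|v_x\|_p$ for all $2^{O(r)}$ choices of $x$ simultaneously. Take $T$ to be the $\ell_p$-norm sketching matrix of \pref{thm:sketchinglargenorm}, boosted to failure probability $\exp(-C'' r)$ by taking the median of $\Theta(r)$ independent copies; the resulting $T$ has $O(n^{1-2/p} r \log n)$ rows. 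The final stored object is the single linear measurement
\[
L(A) \;=\; T A S^T G^T \;\in\; \R^{O(n^{1-2/p} r \log n) \times O(r)},
\]
which has $O(n^{1-2/p} r^2 \log n)$ entries. At decoding time, compute $T v_x = L(A)\,x$ for every $x \in \{\pm 1\}^{O(r)}$, apply the $\ell_p$ estimator to each, and output the maximum.

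The correctness argument is then a clean union bound: with probability $\tfrac{9}{10}$ the OSE $S$ preserves $\ell_2$ on the range of $A^T$; the Dvoretzky map $G$ acts deterministically; and by choosing the constant $C''$ large enough so that $2^{O(r)} \cdot \exp(-C'' r) < \tfrac{1}{10}$, all $2^{O(r)}$ individual $\ell_p$ estimates are simultaneously constant-factor accurate. Since each $\|v_x\|_p$ is approximated to within a constant factor, so is $\max_x \|v_x\|_p = \Theta(1)\,\|A\|_{2 \to p}$, giving the stated $(O(n^{1-2/p} r^2 \log n), \Theta(1))$-sketching family. I expect the main obstacle to be the last step: the off-the-shelf $\ell_p$ sketch has only constant success probability, so some care is needed to boost the failure probability to $\exp(-\Theta(r))$ while paying only a factor $r$ in the number of rows (rather than $r \log(1/\delta) = r^2$); the median-of-repetitions trick resolves this cleanly and is the reason the final bound is $r^2$ and not $r^3$.
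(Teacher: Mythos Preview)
Your proposal is correct and follows essentially the same route as the paper: duality, OSE on the range of $A^T$, the Dvoretzky map into $\ell_1$, dualize back to an $\infty\to p$ problem on an $O(r)$-column matrix, then sketch each of the $2^{O(r)}$ sign vectors with a high-probability $\ell_p$ sketch and union bound. The only cosmetic difference is that the paper directly cites an $\ell_p$ sketch with $O(n^{1-2/p}\log n\,\log(1/\delta))$ rows (from \cite{andoni17high}), whereas you obtain the same row count by median-boosting the constant-probability sketch of \pref{thm:sketchinglargenorm}; both yield $O(n^{1-2/p}r\log n)$ rows for $T$ and hence $O(n^{1-2/p}r^2\log n)$ total measurements.
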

\begin{proof}
Observe that $\|A\|_{2\to p}$ is equal to $\|A^T\|_{p^*\to 2}$ by \pref{lem:ptoqduality} and let
$S$ be a $Cr\times d$ matrix drawn from an oblivious subspace embedding family, which exists by \pref{lem:OSEfamilies}. From \pref{thm:dvoretzky}, let $G$ be a
$\beta r\times Cr$ map such that for all $x$, $\|GSA^Tx\|_1=\Theta(1)\|SA^Tx\|_2$.
Combining with the subspace embedding property, we get that $\|GSA^Tx\|_1=\Theta(1)\|A^Tx\|_2$ for all $x$, which is equivalent to saying
$\|GSA^T\|_{p^*\to 1}=\Theta(1)\|A\|_{2\to p}$.
Another application of \pref{lem:ptoqduality} gives us that
$\|AS^TG^T\|_{\infty\to p}=\Theta(1)\|A\|_{2\to p}$.
Since $AS^TG^T$ is $n\times \beta r$,
$\|AS^TG^T\|_{\infty\to p} = \max_{x\in\{\pm 1\}^{\beta r}}\|AS^TG^T x\|_p$.

Our final ingredient is the existence of an $O(n^{1-2/p}\log n\log(1/\delta))\times n$ sketching matrix $E$
and estimation function $f$ such that for any $x$, $\Pr[f(Ey)=\Theta(1)\|y\|_p]\geq 1-\delta$ \cite{andoni17high} when $p > 2$.
We set $\delta=2^{-2\beta r}$ and use a union bound over all $2^{\beta r}$ vectors in $\{\pm 1\}^{\beta r}$ to conclude
\begin{align*}
\Pr[\forall x\in\{\pm 1\}^{\beta r}: f(EAS^TG^Tx)=\Theta(1)\|AS^TG^Tx\|_q]&\geq 1-2^{-\beta r}\\
\Pr\left[\max_{x\in\{\pm1\}^{\beta r}}f(EAS^TG^Tx) = \Theta(1)\|AS^TG^T\|_{\infty\to q}\right]&\geq 1-2^{-\beta r}
\end{align*}
Consequently, we get a sketch that consists of $O(n^{1-2/p}r^2\log n)$ measurements to get a
$\Theta(1)$ approximation to $\|A\|_{2\to p}$ with probability at least $0.99$.
\end{proof}

\section{Sketching lower bounds for constant factor approximations}\label{sec:lowerbounds}
\subsection{Lower Bound Techniques}\label{sec:lowerboundtechnique}
The way we prove most of our lower bounds is by giving two distributions over $n\times n$ matrices, $\mathcal{D}_1$ and $\mathcal{D}_2$, where matrices drawn from the two distributions have $q\to p$ norm separated by a constant factor $\kappa$ with high probability, which means a $(k,\sqrt{\kappa})$-sketching family can distinguish between samples from the two distributions. We then show an upper bound on the variation distance between distributions of $k$-dimensional sketches of $\mathcal{D}_1$ and $\mathcal{D}_2$. We then argue that if $k$ is too small, then the total variation distance is too small to solve the distinguishing problem. We formalize this intuition in the following theorem.

\begin{theorem}\label{thm:metalowerbound}
Suppose $\mathcal{D}_1$ and $\mathcal{D}_2$ are distributions over $d\times n$ matrices such that
\begin{enumerate}[(i)]
\item $\Pr_{D\sim\mathcal{D}_1}
[\|D\|_{q\to p} < s]\geq 1-\frac{1}{n}$ and $\Pr_{D\sim\mathcal{D}_2}[\|D\|_{q\to p} > \kappa s]\geq 1-\frac{1}{n}$
\item for any linear map $L:\R^{d\times n}\rightarrow\R^k$, $d_{TV}(L(\mathcal{D}_1),L(\mathcal{D}_2))=O\left(\frac{k^a}{n^b}\right)$
\end{enumerate}
for constants $s,\kappa,a,b$, any $(k,\sqrt{\kappa})$-sketching family for the $q\to p$ norm must satisfy $k=\Omega(n^{b/a})$.
\end{theorem}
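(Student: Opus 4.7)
The plan is to reduce estimation to distinguishing: a correct $(k,\sqrt{\kappa})$-sketch must, with high probability, report a value on the correct side of the threshold $\sqrt{\kappa}\cdot s$ to identify which of $\mathcal{D}_1, \mathcal{D}_2$ produced the input, and this forces the two sketch distributions to be far in total variation, which via hypothesis (ii) bounds $k$ from below. The threshold is chosen as the geometric mean of the two typical norm scales $s$ and $\kappa s$, so that the $\sqrt{\kappa}$ multiplicative slack of the sketch is just enough to commit to a side.

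Concretely, I would fix an arbitrary $(k,\sqrt{\kappa})$-sketching family $(\mathcal{S},f)$ and define the threshold set $T := \{z\in\R^k : f(z) > \sqrt{\kappa}\,s\}$. By the sketching guarantee, for any fixed $A$ with $\|A\|_{q\to p} < s$, the reported value lies below $\sqrt{\kappa}\,\|A\|_{q\to p} < \sqrt{\kappa}\, s$ with probability at least $5/6$ over $L$, hence $\Pr_L[L(A)\in T]\le 1/6$; conversely, for any fixed $A$ with $\|A\|_{q\to p} > \kappa s$, the reported value exceeds $\|A\|_{q\to p}/\sqrt{\kappa} > \sqrt{\kappa}\, s$ with probability $5/6$, so $\Pr_L[L(A)\in T]\ge 5/6$. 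Combining with (i) by splitting on the $1/n$-probability event that $\|D\|_{q\to p}$ falls outside its typical range and using the independence of $L$ and $D$,
\begin{align*}
\Pr_{L,\,D\sim\mathcal{D}_1}[L(D)\in T] &\le \tfrac{1}{6}+\tfrac{1}{n}, \\
\Pr_{L,\,D\sim\mathcal{D}_2}[L(D)\in T] &\ge \tfrac{5}{6}-\tfrac{1}{n}.
\end{align*}
Subtracting these and applying linearity of expectation over $L$, there must exist a specific realization $L^*$ in the support of $\mathcal{S}$ for which $\Pr_{D\sim\mathcal{D}_2}[L^*(D)\in T] - \Pr_{D\sim\mathcal{D}_1}[L^*(D)\in T] \ge \tfrac{2}{3}-\tfrac{2}{n}$; by \pref{def:TVdistance} this forces $d_{TV}\bigl(L^*(\mathcal{D}_1),L^*(\mathcal{D}_2)\bigr)=\Omega(1)$ for all sufficiently large $n$.

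Finally I invoke hypothesis (ii) with this particular linear map $L^*$ to conclude $\Omega(1) \le d_{TV}\bigl(L^*(\mathcal{D}_1),L^*(\mathcal{D}_2)\bigr) = O(k^a/n^b)$, which rearranges to $k = \Omega(n^{b/a})$, as claimed. The argument is a clean reduction from approximation to binary hypothesis testing with no technical depth, so I do not anticipate a real obstacle; the one place to be careful is keeping the randomness over the matrix $D$ and the randomness over the sketch $L$ cleanly separated when swapping expectations via Fubini and when handling the rare ``atypical norm'' event from (i).
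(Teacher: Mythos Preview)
Your proposal is correct and follows essentially the same approach as the paper: both reduce the estimation guarantee to a distinguishing test at the threshold $\sqrt{\kappa}\,s$, use the gap in acceptance probabilities to lower-bound the total variation distance between $L(\mathcal{D}_1)$ and $L(\mathcal{D}_2)$ for some fixed $L$, and finish with hypothesis~(ii). The only cosmetic difference is that the paper invokes Yao's minimax principle at the outset to fix $L$, whereas you carry the expectation over $L$ and extract $L^*$ by averaging at the end; these are two phrasings of the same step.
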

\begin{proof}
Let $\D$ be the distribution over matrices given by sampling from $\mathcal{D}_1$ with probability $\half$ and drawing from
$\mathcal{D}_2$ with probability $\half$.
We shall fix a sketching operator $L:\R^{d\times n}\rightarrow\R^k$ and consider $A$ drawn from a distribution $\mathcal{D}$. Suppose $f(L(A))$ lies in $(1/\sqrt{\kappa},\sqrt{\kappa})\|A\|_{q\to p}$ with probability at least $5/6$. It suffices to show that
$k$ must be $\Omega(n^{b/a})$ since the theorem statement then follows from Yao's minimax principle. We must have
\[\Pr_{A\sim\mathcal{D}_1}\left[f(L(A))\in \left(\frac{1}{\sqrt{\kappa}},\sqrt{\kappa}\right)\|A\|_{q\to p}\right]
\geq\frac{2}{3},~
\Pr_{A\sim\mathcal{D}_2}\left[f(L(A))\in \left(\frac{1}{\sqrt{\kappa}},\sqrt{\kappa}\right)\|A\|_{q\to p}\right]
\geq\frac{2}{3}\]

Thus, we have an algorithm that correctly distinguishes with probability at least $\frac{3}{5}$ if $A$ was drawn from
$\mathcal{D}_1$ or $\mathcal{D}_2$ by checking if $f(L(A))$ is greater than or less than $\sqrt{\kappa} s$.

The existence of this distinguishing algorithm means the total variation distance between the distributions of $L(D_1)$ and
$L(D_2)$ is at least $\frac{1}{5}$. From the theorem's hypothesis, we know of a constant $C$ such that $\frac{Ck^{a}}{n^b}\geq
\frac{1}{5}$, which gives us the desired upper bound.
\end{proof}

We also show an upper bound on the variation distance of sketches for two distributions that we use throughout this paper. Define
$\mathcal{G}_{1,d\times n}$
as the distribution over $d\times n$ Gaussian matrices and $\mathcal{G}_{2,d\times n}[\alpha]$ as the distribution given by drawing a
Gaussian matrix and adding $\alpha u$, where $u$ is a $d$-dimensional Gaussian vector to a random column. We write $\mathcal{G}_i$
instead of $\mathcal{G}_{i,d\times n}$ when the dimensions of the random matrix are evident from context.
\begin{lemma}\label{lem:mixtureTVdistance}
Let $L$ be a linear sketch from $\R^{d\times n}\rightarrow \R^k$ and let $\mathcal{H}_i$ be the distribution of $L(x)$ where $x$ is
drawn from $\mathcal{G}_i$. Then $d_{TV}(\mathcal{H}_1,\mathcal{H}_2)\leq\frac{C\alpha^2 k}{n}$ for an absolute constant $C$.
\end{lemma}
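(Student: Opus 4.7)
The plan is to exploit the Gaussian structure of both distributions after sketching. Writing $L:\R^{d\times n}\to\R^k$ column-by-column as $L(A)=\sum_{j=1}^n L_j A_{*,j}$ for linear maps $L_j:\R^d\to\R^k$, we see that $\mathcal{H}_1=L(\mathcal{G}_1)=\mathcal{N}(0,\Sigma)$ with $\Sigma=\sum_{j=1}^n L_jL_j^\top$, while $\mathcal{H}_2=L(\mathcal{G}_2[\alpha])$ is the uniform mixture over $J\in[n]$ of the Gaussians $\mathcal{H}_2^{(J)}:=\mathcal{N}(0,\Sigma+\alpha^2 L_JL_J^\top)$, since adding an independent $\mathcal{N}(0,I_d)$ vector $\alpha u$ to column $J$ adds $\alpha^2 L_JL_J^\top$ to the covariance of the sketch.

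By convexity of total variation, $d_{TV}(\mathcal{H}_1,\mathcal{H}_2)\leq \frac{1}{n}\sum_J d_{TV}(\mathcal{H}_1,\mathcal{H}_2^{(J)})$, so I just have to control each term. To apply \pref{lem:gaussianTVdistance} I need a lower bound on the minimum eigenvalue of $\Sigma$. The plan is to whiten: first, without loss of generality I replace $k$ by $\mathrm{rank}(\Sigma)$ and restrict $L$'s codomain to the image of $\Sigma$ (both covariances $\Sigma$ and $\Sigma+\alpha^2 L_JL_J^\top$ are supported in this image, and doing so only shrinks $k$, so the target bound $C\alpha^2 k/n$ is only strengthened). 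So I may assume $\Sigma\succ 0$. Since total variation is invariant under invertible linear maps, applying $\Sigma^{-1/2}$ reduces the comparison to $\mathcal{N}(0,I_k)$ versus $\mathcal{N}(0,I_k+\alpha^2 \tilde L_J\tilde L_J^\top)$, where $\tilde L_J:=\Sigma^{-1/2}L_J$ satisfies the crucial identity $\sum_{J=1}^n\tilde L_J\tilde L_J^\top=\Sigma^{-1/2}\Sigma\Sigma^{-1/2}=I_k$.

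Now \pref{lem:gaussianTVdistance} with $\lambda=1$ yields
\[
d_{TV}\bigl(\mathcal{N}(0,I_k),\mathcal{N}(0,I_k+\alpha^2\tilde L_J\tilde L_J^\top)\bigr)\leq C\alpha^2\,\|\tilde L_J\tilde L_J^\top\|_F = C\alpha^2\,\|\tilde L_J\|_2^2,
\]
using $\|vv^\top\|_F=\|v\|_2^2$. Averaging over $J$ and using the trace identity gives
\[
d_{TV}(\mathcal{H}_1,\mathcal{H}_2)\leq \frac{C\alpha^2}{n}\sum_{J=1}^n \|\tilde L_J\|_2^2 = \frac{C\alpha^2}{n}\,\mathrm{tr}\!\left(\sum_{J=1}^n\tilde L_J\tilde L_J^\top\right) = \frac{C\alpha^2 k}{n},
\]
which is the claimed bound.

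The main thing I expect to be a little delicate is the whitening reduction — justifying that restricting to $\mathrm{im}(\Sigma)$ is lossless (both mixture components live there, since each $L_JL_J^\top\preceq \Sigma$ after rescaling columns, so its image lies inside $\mathrm{im}(\Sigma)$) and that total variation distance is genuinely invariant under the invertible change of coordinates $\Sigma^{-1/2}$. Everything after that — convexity of TV over the mixture, plugging into \pref{lem:gaussianTVdistance}, and collapsing the sum using the trace identity — is a clean one-shot calculation.
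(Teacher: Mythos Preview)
Your argument is correct and matches the paper's proof almost step for step: the paper phrases your whitening reduction as ``without loss of generality the rows of $L$ are orthonormal,'' which is exactly your $\Sigma^{-1/2}$ transformation, and then proceeds identically via convexity of $d_{TV}$ over the mixture and \pref{lem:gaussianTVdistance}. One small slip: $\tilde L_J$ is a $k\times d$ \emph{matrix}, not a vector, so the identity $\|vv^\top\|_F=\|v\|_2^2$ does not apply; what you need (and what the paper uses) is the inequality $\|\tilde L_J\tilde L_J^\top\|_F\le\|\tilde L_J\|_F^2$, after which your trace computation $\sum_J\|\tilde L_J\|_F^2=\mathrm{tr}(I_k)=k$ goes through unchanged.
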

\begin{proof} We can think of $L$ as a $k\times nd$ matrix that acts on a sample from $\mathcal{G}_1$ or $\mathcal{G}_2$
as though it were an $nd$-dimensional vector.
Without loss of generality, we can assume that the rows of $L$ are orthonormal, since one can always perform a change of basis in post-processing. 
Thus, the distribution $\mathcal{H}_1$
is the same as $\mathcal{N}(0,I_k)$. For fixed $i$ and $G$ a $d\times n$ matrix of unit Gaussians, the distribution of
$L(G+\alpha ue_i^T)$ is Gaussian with covariance $\E[L(G+\alpha ue_i^T)L(G+\alpha ue_i^T)^T]$,
equal to $I+\alpha^2 L_{B_i}L_{B_i}^T$ where $L_{B_i}$ is the submatrix given by columns of $L$ indexed $(i-1)d+1,(i-1)d+2,\ldots,id$.
Let $\mathcal{H}_{2,i}$ be $\mathcal{N}(0,I+\alpha^2L_{B_i}L_{B_i}^T)$.
$\mathcal{H}_2$ is the distribution of picking a random $i$ and drawing a matrix from $\mathcal{N}(0,I+L_{B_i}L_{B_i}^T)$.

We now analyze the total variation distance between $\mathcal{H}_1$ and $\mathcal{H}_2$ and get the desired bound from a chain of inequalities.
\begin{align*}
d_{TV}(\mathcal{H}_1,\mathcal{H}_2) &= \frac{1}{2}\int_{x\in\R^k} |p_{\mathcal{H}_1}(x)-p_{\mathcal{H}_2}(x)|dx\\
&\le \frac{1}{2}\int_{x\in\R^k}\left|\sum_{i=1}^n \frac{1}{n}p_{\mathcal{H}_1}(x) - \frac{1}{n}p_{\mathcal{H}_{2,i}}(x)\right|dx\\
&\le \frac{1}{n}\sum_{i=1}^n\frac{1}{2}\int_{x\in\R^k}\left|p_{\mathcal{H}_1}(x)- p_{\mathcal{H}_{2,i}}(x)\right|dx\\
&\le \frac{1}{n}\sum_{i=1}^n d_{TV}(\mathcal{N}(0,I_k),\mathcal{H}_{2,i})\\
&\le \frac{1}{n}\sum_{i=1}^n C\alpha^2\|L_{B_i}L_{B_i}^T\|_F &\text{[from \pref{lem:gaussianTVdistance}]}\\
&\le \frac{1}{n}\sum_{i=1}^n C\alpha^2\|L_{B_i}\|_F^2\\
&\le \frac{C\alpha^2}{n}\|L\|_F^2 = \frac{C\alpha^2 k}{n}
\end{align*}
\end{proof}

\subsection{Lower bounds for approximating $\|A\|_{1\to p}$ for $1\leq p\leq 2$}\label{sec:1toplower}
We follow the lower bound template given in \pref{sec:lowerboundtechnique}.
\begin{lemma}\label{lem:1topseparation}
For any $\kappa$, there exist values $s_p$ such that with probability at least $1-1/n$, $\|G_1\|_{1\to p}\le s_p$ and $\|G_2\|_{1\to p}\ge \kappa s_p$, for $1\leq p\leq 2$, and $G_1\sim\mathcal{G}_1$ and $G_2\sim\mathcal{G}_2[\kappa]$.
\end{lemma}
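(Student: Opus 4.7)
The plan is to combine the characterization $\|A\|_{1\to p}=\max_i \|A_{*,i}\|_p$ from \pref{lem:1toPcharac} with Gaussian-Lipschitz concentration (\pref{thm:gaussianlipschitz}) to show that under $\mathcal{G}_1$ every column's $\ell_p$-norm clusters near a common value $\mu_p$, whereas under $\mathcal{G}_2[\kappa]$ the perturbed column has $\ell_p$-norm close to $\sqrt{1+\kappa^2}\,\mu_p$, which exceeds $\mu_p$ by more than a factor of $\kappa$ once $\mu_p$'s fluctuations are controlled.

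First, for $g\sim\mathcal{N}(0,I_d)$ I would observe that $\mu_p:=\mathbb{E}\|g\|_p=\Theta(d^{1/p})$, and that for $p\in[1,2]$ the map $x\mapsto\|x\|_p$ is $d^{1/p-1/2}$-Lipschitz with respect to $\|\cdot\|_2$ on $\R^d$, since $\|x\|_p\leq d^{1/p-1/2}\|x\|_2$. Plugging this Lipschitz constant into \pref{thm:gaussianlipschitz} and setting the deviation to $\delta\mu_p$ for a small $\delta>0$ yields
\[
\Pr\!\left[\,\bigl|\|g\|_p-\mu_p\bigr|\geq\delta\mu_p\,\right]\leq C\exp\!\left(-c\delta^2 d\right),
\]
because $\mu_p^2/d^{2/p-1}=\Theta(d)$ uniformly over $p\in[1,2]$.

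Next, for $G_1\sim\mathcal{G}_1$ I would union bound this inequality over all $n$ columns, which (taking $d=n$) gives $\max_i\|G_{*,i}\|_p\leq(1+\delta)\mu_p$ except with probability $nC\exp(-c\delta^2 n)\leq 1/n$ for $n$ large. I set $s_p:=(1+\delta)\mu_p$, obtaining the claimed upper bound on $\|G_1\|_{1\to p}$. For $G_2\sim\mathcal{G}_2[\kappa]$, the (randomly chosen) perturbed column is distributed as $\sqrt{1+\kappa^2}\,h$ with $h\sim\mathcal{N}(0,I_d)$ since adding independent Gaussians rescales the variance; the same concentration estimate lower-bounds its $\ell_p$-norm by $\sqrt{1+\kappa^2}(1-\delta)\mu_p$ with probability $\geq 1-1/n$. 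Taking the max over all columns only increases this, so $\|G_2\|_{1\to p}\geq\sqrt{1+\kappa^2}(1-\delta)\mu_p$.

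The main obstacle is calibrating $\delta$ so that the slacks from concentration do not erase the separation: I need $\sqrt{1+\kappa^2}(1-\delta)\geq\kappa(1+\delta)$. This is always possible since $\sqrt{1+\kappa^2}>\kappa$, but the admissible $\delta$ shrinks roughly like $1/\kappa^2$ as $\kappa$ grows. Because the statement treats $\kappa$ as an arbitrary fixed parameter, I can pick such a $\delta=\delta(\kappa)$ as a constant and the concentration failure bound $n\exp(-c\delta(\kappa)^2 n)$ remains $o(1/n)$ for $n$ large, yielding both halves of the claimed separation simultaneously.
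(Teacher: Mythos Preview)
Your proposal is correct and follows essentially the same approach as the paper: both use the characterization $\|A\|_{1\to p}=\max_i\|A_{*,i}\|_p$, the Lipschitz constant $n^{1/p-1/2}$ for $\|\cdot\|_p$, Gaussian--Lipschitz concentration on each column, a union bound for $G_1$, and the observation that the perturbed column of $G_2$ is distributed as $\sqrt{1+\kappa^2}\,h$. The only cosmetic difference is that the paper sets the deviation threshold to $\beta\zeta_p\sqrt{\log n}$ (yielding per-column failure $1/n^2$) whereas you set it to $\delta\mu_p$ (yielding failure $\exp(-c\delta^2 n)$); both suffice for the union bound and the final $\sqrt{1+\kappa^2}$ versus $\kappa$ comparison.
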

\begin{proof}
Recall that from \pref{sec:1toPupper}, we know that $\|A\|_{1\to p}=\max_{i\in[n]}\|A_{*,i}\|_p$ which means that it
suffices to give bounds on the maximum $\ell_p$ norm across columns of $G_1$ and $G_2$ respectively.

The $\ell_p$ norm is $\zeta_p$-Lipschitz, where $\zeta_p$ is equal to $n^{1/p-1/2}$ in the regime $1\leq p \leq 2$.
For a given vector of standard Gaussians $g$, the probability that $\|g\|_p$ deviates from $\E\left[\|g\|_p\right]$ by
more than $\beta\zeta_p\sqrt{\log n}$ is at most $C'e^{-c\beta^2\log n}$ from \pref{thm:gaussianlipschitz} where
$C'$ is the constant $C$ from the theorem, which for large enough choice of $\beta$ can be made smaller than $1/n^2$.
By a union bound over all columns, the probability that $\|G_1\|_{1\to p}$ exceeds $\E[\|g\|_p]+\beta\zeta_p\sqrt{\log n}$ is at most
$1/n$.
On the other hand, consider the perturbed column vector of $G_2$, which we denote $g'$. The probability that $\|g'\|_2$ is smaller
than
$\E[\|g'\|_p]-\beta\sqrt{1+\kappa^2}\zeta_p\sqrt{\log n} = \sqrt{1+\kappa^2}(\E[\|g\|_p]-\beta\zeta_p\sqrt{\log n})$
is at most $1/n^2$ by appropriate choice of $\beta$ and \pref{thm:gaussianlipschitz}, from which a lower bound
on $\|G_2\|_{1\to p}$ that holds with probability at least $1-\frac{1}{n^2}$ immediately follows.

Since $\E[\|g\|_p]$ is $\Theta(n^{1/p})$ and the deviations from expectations in upper bounds on $\|G_1\|_{1\to p}$ and lower bounds on
$\|G_2\|_{1\to p}$ are asymptotically less than the expectations.
\end{proof}

The desired theorem is immediate from \pref{lem:1topseparation}, \pref{lem:mixtureTVdistance}, and
\pref{thm:metalowerbound} using $\D_1=\mathcal{G}_{1,n\times n}$, and $\mathcal{D}_2=\mathcal{G}_2[\kappa]$.
\begin{theorem}\label{thm:1to12Lower}
Suppose $p\in[1,2]$ and $(\mathcal{S},f)$ is a $(k,\sqrt{\kappa})$-sketching family for the $1\to p$ norm
where $\kappa$ is some constant, then $k=\Omega(n)$.
\end{theorem}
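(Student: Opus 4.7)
The plan is to invoke the meta lower bound \pref{thm:metalowerbound} directly, instantiated with $\D_1 = \mathcal{G}_{1,n\times n}$ and $\D_2 = \mathcal{G}_{2,n\times n}[\kappa]$, so the task reduces to verifying its two hypotheses and reading off the resulting bound.

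First I would check hypothesis (i), the constant-factor separation between $\|\cdot\|_{1\to p}$ values under the two distributions. This is precisely the content of \pref{lem:1topseparation}: for any desired constant $\kappa$ there is a threshold $s_p$ (of order $n^{1/p}$ up to a lower-order additive term) such that a Gaussian matrix $G_1\sim\mathcal{G}_1$ satisfies $\|G_1\|_{1\to p}\le s_p$ with probability at least $1-1/n$, while a perturbed matrix $G_2\sim\mathcal{G}_2[\kappa]$ satisfies $\|G_2\|_{1\to p}\ge \kappa s_p$ with probability at least $1-1/n$. No extra work is needed beyond quoting that lemma.

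Next I would verify hypothesis (ii), the variation distance bound after applying an arbitrary linear sketch $L:\R^{n\times n}\to\R^k$. This is exactly what \pref{lem:mixtureTVdistance} gives, with its parameter $\alpha$ set to $\kappa$: for every such $L$,
\[
d_{TV}(L(\mathcal{G}_1), L(\mathcal{G}_2[\kappa])) \le \frac{C\kappa^2 k}{n}.
\]
Treating $\kappa$ as a constant, this is $O(k/n)$, so the constants appearing in \pref{thm:metalowerbound} are $a=1$ and $b=1$.

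Plugging both ingredients into \pref{thm:metalowerbound} immediately yields $k = \Omega(n^{b/a}) = \Omega(n)$, which is the bound claimed in \pref{thm:1to12Lower}. There is essentially no obstacle here, since the two non-trivial steps—the norm separation and the TV distance bound—have already been established. The only thing to be careful about is matching conventions: the ``$\kappa$'' parameterizing the distribution $\mathcal{G}_2[\kappa]$ coincides with the norm-gap constant in the meta theorem, so the conclusion rules out any $(k,\sqrt{\kappa})$-sketching family, which is exactly the statement being proved.
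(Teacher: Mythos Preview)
Your proposal is correct and matches the paper's own proof essentially verbatim: the paper also obtains \pref{thm:1to12Lower} immediately from \pref{lem:1topseparation}, \pref{lem:mixtureTVdistance}, and \pref{thm:metalowerbound} with $\mathcal{D}_1=\mathcal{G}_{1,n\times n}$ and $\mathcal{D}_2=\mathcal{G}_2[\kappa]$, reading off $a=b=1$.
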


\subsection{Lower bound for approximating $\|A\|_{1\to p}$ for $p > 2$}\label{sec:1toInftylower}
We follow the lower bound template given in \pref{sec:lowerboundtechnique}.

Denote $\E[\|g\|_p]$ as $\eta_p$.
Let $\mathcal{G}_1$ be the distribution over $n\times n$ matrices given by i.i.d. Gaussians, and
$\mathcal{G}_2[\alpha, \eta_p]$ be the distribution over $n\times n$ matrices given by taking a Gaussian matrix and adding
$\alpha\eta_p$ to a random entry.

Since the proofs are very similar to those in \pref{sec:lowerboundtechnique} and \pref{sec:1toplower}. We defer them to
\pref{app:1toInftylower}.

\begin{lemma}\label{lem:1toinftyseparation}
For any $\kappa$, there exists $s_p$ such that with probability at least $1-\frac{1}{n}$, $\|G_1\|_{1\to p} \le s_p$ and $\|G_2\|_{1\to p}\ge \kappa s_p$, such that $G_1\sim\mathcal{G}_1$ and $G_2\sim\mathcal{G}_2[C\kappa, \eta_p]$ for some absolute constant $C$ and $p > 2$.
\end{lemma}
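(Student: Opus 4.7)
The plan is to follow the template of Lemma \ref{lem:1topseparation}, exploiting the stronger concentration available for $\ell_p$-norms when $p > 2$. By Lemma \ref{lem:1toPcharac}, $\|G\|_{1 \to p} = \max_i \|G_{*,i}\|_p$, so it suffices to bound the maximum column $\ell_p$-norm of both $G_1$ and $G_2$. The key new ingredient, relative to the $p\le 2$ case, is that for $p \ge 2$ we have $\|x\|_p \le \|x\|_2$, so the $\ell_p$-norm is $1$-Lipschitz with respect to $\ell_2$ (rather than $n^{1/p - 1/2}$-Lipschitz as in the previous regime). This stronger tail bound is what allows a single-coordinate perturbation of size $C\kappa \eta_p$ to suffice to separate the distributions.

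For the upper bound on $\|G_1\|_{1\to p}$, I would apply Theorem \ref{thm:gaussianlipschitz} with $t = 1$ to each column separately, taking deviation $\lambda = \beta\sqrt{\log n}$ for a sufficiently large absolute constant $\beta$, which yields $\Pr\bigl[\|G_{*,i}\|_p > \eta_p + \beta\sqrt{\log n}\bigr] \le 1/n^2$. A union bound over the $n$ columns gives $\|G_1\|_{1 \to p} \le s_p := \eta_p + \beta\sqrt{\log n}$ with probability at least $1 - 1/n$. Since $\eta_p = \Theta(n^{1/p})$ for finite $p > 2$ and $\eta_\infty = \Theta(\sqrt{\log n})$, we have $\sqrt{\log n} = O(\eta_p)$ in all cases $p > 2$, so $s_p \le c'\eta_p$ for an absolute constant $c'$.

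For the lower bound on $\|G_2\|_{1\to p}$, write the perturbed column as $g' = g + C\kappa\eta_p\, e_{j_0}$, where $g$ is a standard Gaussian vector and $j_0$ is the perturbed coordinate. The reverse triangle inequality for $\ell_p$ gives $\|g'\|_p \ge \|C\kappa\eta_p\, e_{j_0}\|_p - \|g\|_p = C\kappa\eta_p - \|g\|_p$. Combined with the concentration bound $\|g\|_p \le s_p$ on that single column (which again holds with probability at least $1-1/n^2$, so both events hold with probability at least $1 - 1/n$), this yields $\|G_2\|_{1\to p} \ge \|g'\|_p \ge C\kappa\eta_p - s_p$. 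To conclude, I choose $C$ so that $C\kappa\eta_p - s_p \ge \kappa s_p$, i.e.\ $C\kappa\eta_p \ge (\kappa+1)s_p$; using $s_p \le c'\eta_p$, this reduces to $C \ge (\kappa+1)c'/\kappa$, which is satisfied by some absolute constant $C$ independent of $n$.

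The main (mild) obstacle is bookkeeping: I must verify that the additive slack $\beta\sqrt{\log n}$ from Gaussian concentration stays subsumed in the target multiplicative gap $\kappa$. This is exactly where the assumption $p > 2$ is used, because it guarantees $\sqrt{\log n} = O(\eta_p)$ and hence $s_p = \Theta(\eta_p)$; in the $p \le 2$ regime the weaker Lipschitz constant forces Lemma \ref{lem:1topseparation} to instead perturb the entire column by $\alpha u$, which is why a different construction of $\mathcal{G}_2$ is needed here.
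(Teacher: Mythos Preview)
Your proof is correct and follows essentially the same route as the paper's: both use the characterization $\|A\|_{1\to p}=\max_i\|A_{*,i}\|_p$, exploit the $1$-Lipschitz property of $\|\cdot\|_p$ for $p\ge 2$ to invoke \pref{thm:gaussianlipschitz} with deviation of order $\sqrt{\log n}$, union bound over columns for $G_1$, and then lower bound the perturbed column of $G_2$ via the reverse triangle inequality, choosing $C$ so that $C\kappa\eta_p$ dominates $(\kappa+1)s_p$. The only cosmetic difference is that the paper absorbs the additive slack directly into a multiplicative constant (writing $s_p=\beta\eta_p$), whereas you first set $s_p=\eta_p+\beta\sqrt{\log n}$ and then observe $s_p\le c'\eta_p$; the content is identical.
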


\begin{lemma}\label{lem:variationdist1toinfty}
Let $L$ be a linear sketch from $\R^{n\times n}\rightarrow \R^k$ and let $\D_i$ be the distribution of $L(x)$ where $x$ is
drawn from $\mathcal{G}_i$. Then $d_{TV}(\D_1,\D_2)\leq \frac{C'\alpha \eta_p \sqrt{k} }{n}$ for an absolute constant $C'$.
\end{lemma}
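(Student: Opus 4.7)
The plan mirrors the strategy of \pref{lem:mixtureTVdistance}, with the key observation that adding a scalar to a single entry (rather than a Gaussian vector to an entire column) allows a sharper bound via Cauchy--Schwarz, gaining a factor of $\sqrt{k}$ instead of $k$.

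First I would reduce to the case where $L$, viewed as a $k \times n^2$ matrix acting on the vectorization of the input, has orthonormal rows. This is without loss of generality because any full-rank post-processing of the sketch is free, so $L(\mathcal{G}_1)$ can be taken to be $\mathcal{N}(0, I_k)$. Let $\ell_{i,j} \in \R^k$ denote the column of this $L$ indexed by the matrix entry $(i,j)$, i.e., $\ell_{i,j} = L(E_{i,j})$. Then for any fixed choice of $(i,j)$, the distribution of $L(G + \alpha\eta_p E_{i,j})$ is exactly $\mathcal{N}(\alpha\eta_p\, \ell_{i,j}, I_k)$, and $\mathcal{D}_2$ is the uniform mixture of these $n^2$ shifted Gaussians.

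Next I would apply the triangle inequality for total variation distance to a mixture:
\begin{align*}
d_{TV}(\mathcal{D}_1, \mathcal{D}_2) \le \frac{1}{n^2} \sum_{i,j=1}^n d_{TV}\!\left(\mathcal{N}(0, I_k),\, \mathcal{N}(\alpha\eta_p \ell_{i,j}, I_k)\right).
\end{align*}
Since the two Gaussians share covariance $I_k$ (minimum eigenvalue $1$), \pref{lem:gaussianTVdistance} bounds each summand by $C \alpha\eta_p \|\ell_{i,j}\|_2$. Thus
\begin{align*}
d_{TV}(\mathcal{D}_1, \mathcal{D}_2) \le \frac{C \alpha\eta_p}{n^2} \sum_{i,j} \|\ell_{i,j}\|_2.
\end{align*}

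The main step, and what produces the advertised $\sqrt{k}$ dependence, is Cauchy--Schwarz applied to the $n^2$ column norms:
\begin{align*}
\sum_{i,j} \|\ell_{i,j}\|_2 \;\le\; \sqrt{n^2} \cdot \sqrt{\sum_{i,j} \|\ell_{i,j}\|_2^2} \;=\; n \cdot \|L\|_F \;=\; n \sqrt{k},
\end{align*}
where the final equality uses that $L$ has orthonormal rows so $\|L\|_F^2 = \mathrm{tr}(LL^\top) = k$. Substituting gives $d_{TV}(\mathcal{D}_1, \mathcal{D}_2) \le C'\alpha\eta_p \sqrt{k}/n$, as required.

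The only real subtlety, compared to \pref{lem:mixtureTVdistance}, is recognizing that because the perturbation in $\mathcal{G}_2$ lives in a one-dimensional subspace (a single entry) rather than a $d$-dimensional one (a full column), the analogue of $\|L_{B_i}\|_F$ becomes the scalar $\|\ell_{i,j}\|_2$; summing $\|\ell_{i,j}\|_2$ rather than $\|\ell_{i,j}\|_2^2$ is what forces the use of Cauchy--Schwarz, and this is where all the care is concentrated.
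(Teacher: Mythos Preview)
Your proposal is correct and matches the paper's own proof essentially step for step: reduce to orthonormal rows, identify $\mathcal{D}_2$ as a uniform mixture of mean-shifted standard Gaussians, bound each mixture component via \pref{lem:gaussianTVdistance}, and then apply Cauchy--Schwarz to $\sum_{i,j}\|\ell_{i,j}\|_2 \le n\|L\|_F = n\sqrt{k}$. Your closing remark explaining why the one-dimensional perturbation yields $\sqrt{k}$ rather than $k$ is exactly the point of departure from \pref{lem:mixtureTVdistance}.
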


The theorem below immediately follows from \pref{lem:1toinftyseparation}, \pref{lem:variationdist1toinfty} and
\pref{thm:metalowerbound} using $\mathcal{D}_1=\mathcal{G}_1$ and $\mathcal{D}_2=\mathcal{G}_2[C\kappa, \eta_p]$.
\begin{theorem}\label{thm:1toinftylower}
Suppose $(\mathcal{S},f)$ is a $(k,\kappa)$-approximate sketching family for the $1\to p$ norm for $p > 2$ and some constant $\kappa$,
then $k=\Omega\left( \frac{n^2}{\eta_p^2} \right)$. In particular, using the fact that $\eta_p$ is $\Theta(n^{1/p})$ for $p<\infty$ and
$\Theta(\sqrt{\log n})$ when $p=\infty$ gives $k = \Omega \left(n^{2 - \frac{2}{p}} \right)$ when $p<\infty$ and
$k = \Omega\left(\frac{n^2}{\log n} \right)$ when $p=\infty$.
\end{theorem}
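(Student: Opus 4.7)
The plan is to invoke the meta-lower-bound framework of \pref{thm:metalowerbound} with $\mathcal{D}_1 = \mathcal{G}_{1,n\times n}$ and $\mathcal{D}_2 = \mathcal{G}_2[C\kappa,\eta_p]$ for a sufficiently large constant $C$. The two hypotheses of the meta theorem are already isolated as lemmas: \pref{lem:1toinftyseparation} establishes hypothesis~(i) (a $\kappa$-factor separation of the $1\to p$ norms with high probability), and \pref{lem:variationdist1toinfty} establishes hypothesis~(ii) by giving the TV-distance bound $d_{TV}(L(\mathcal{D}_1),L(\mathcal{D}_2)) = O(\eta_p \sqrt{k}/n)$. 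Matching the TV bound to the template $O(k^a/n^b)$ with $a = \half$ (and $n/\eta_p$ playing the role of $n^b$ with $b=1$), the meta theorem produces $k = \Omega((n/\eta_p)^2) = \Omega(n^2/\eta_p^2)$. Substituting the known asymptotics $\eta_p = \Theta(n^{1/p})$ for $p < \infty$ and $\eta_\infty = \Theta(\sqrt{\log n})$ then gives the stated $\Omega(n^{2-2/p})$ and $\Omega(n^2/\log n)$ lower bounds.

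Since the theorem is essentially a one-line consequence of three cited results, the substance lies in the two ingredient lemmas. For the separation (\pref{lem:1toinftyseparation}), I would use the characterization $\|A\|_{1\to p} = \max_i \|A_{*,i}\|_p$ from \pref{lem:1toPcharac}. Columns of $G_1$ are standard Gaussians, and for $p\geq 2$ the $\ell_p$ norm on $\R^n$ is $1$-Lipschitz with respect to $\ell_2$, so \pref{thm:gaussianlipschitz} combined with a union bound over the $n$ columns yields $\|G_1\|_{1\to p} \leq \eta_p + O(\sqrt{\log n})$ with probability at least $1 - 1/n$. For the matching lower bound on $\|G_2\|_{1\to p}$, the perturbed column contains a coordinate of magnitude $\Omega(C\kappa\eta_p)$, and for $p \geq 2$ a single coordinate of magnitude $M$ already forces the $p$-norm to be at least $M$; choosing $C$ sufficiently large produces the desired $\kappa$-factor gap.

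For the TV-distance bound (\pref{lem:variationdist1toinfty}) I would mimic the mixture argument of \pref{lem:mixtureTVdistance}, with the key difference that the perturbation now \emph{shifts the mean} by $C\kappa\eta_p L(e_{ij})$ rather than enlarging the covariance. Assuming without loss of generality that $L$ has orthonormal rows, $\mathcal{H}_2$ is the uniform mixture over $n^2$ Gaussians $\mathcal{N}(C\kappa\eta_p L(e_{ij}), I_k)$, and \pref{lem:gaussianTVdistance} bounds the TV distance from each component to $\mathcal{N}(0, I_k)$ by $O(C\kappa\eta_p \|L(e_{ij})\|_2)$. Averaging and applying Cauchy--Schwarz,
\[ \frac{1}{n^2}\sum_{i,j}\|L(e_{ij})\|_2 \leq \frac{1}{n^2}\sqrt{n^2 \cdot \|L\|_F^2} = \frac{\sqrt{k}}{n}, \]
which yields the claimed $O(\eta_p\sqrt{k}/n)$ bound.

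The only real obstacle is parameter bookkeeping, specifically tracking the asymmetry between the setting here and \pref{lem:mixtureTVdistance}. There the perturbation is a random Gaussian column, so the variance distance depends on $\|\Sigma - \Sigma'\|_F$ and scales \emph{quadratically} in the perturbation magnitude, giving the $\alpha^2 k/n$ bound. Here the perturbation is a deterministic mean-shift, so the TV distance depends on $\|\mu - \mu'\|_2$ and scales \emph{linearly} in the shift magnitude; that linearity is precisely what forces an extra square root of $k$ into the bound, and correspondingly sharpens the lower bound from $\Omega(n/\eta_p^2)$ up to the correct $\Omega(n^2/\eta_p^2)$.
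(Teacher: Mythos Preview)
Your proposal is correct and follows essentially the same route as the paper: invoke \pref{thm:metalowerbound} with $\mathcal{D}_1=\mathcal{G}_1$ and $\mathcal{D}_2=\mathcal{G}_2[C\kappa,\eta_p]$, using \pref{lem:1toinftyseparation} and \pref{lem:variationdist1toinfty} for the two hypotheses, and your sketches of those two lemmas (the $1$-Lipschitz Gaussian concentration plus union bound for the upper bound on $\|G_1\|_{1\to p}$, and the mean-shift mixture argument with Cauchy--Schwarz for the TV bound) match the paper's appendix proofs almost exactly. Your closing remark on the linear-versus-quadratic scaling of the perturbation is a correct and useful explanation of why the exponent here improves to $n^2/\eta_p^2$ compared to the $n/\alpha^2$ one would get from \pref{lem:mixtureTVdistance}.
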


\subsection{Lower bound for approximating $\|A\|_{q\to p}$ when $q\geq 2$ and $p\leq 2$}\label{sec:qlarge_to_psmall_lower}
We use the known lower bound of $\Omega(n^2)$ for sketching the $2\to 2$ norm from \cite{li2016tight} to deduce a lower bound
on sketching the $q\to p$ norm for $q\geq 2$ and $p\leq 2$.
\begin{theorem}\label{qlarge_to_psmall_lower}
Suppose $q\geq 2$ and $p\leq 2$, and if $(\mathcal{S},f)$ is a $(k(n),\gamma)$-approximate sketching family for the $q\to p$
norm where $\gamma$ is some constant, then $k(n)=\Omega(n^2)$.
\end{theorem}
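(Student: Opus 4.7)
I will reduce from the $\Omega(n^2)$ lower bound for constant-factor sketching of the $2 \to 2$ norm proved in \cite{li2016tight}. For any $n \times n$ matrix $A$, the plan is to construct, deterministically and linearly in $A$, a matrix $A'$ of side $O(n)$ such that $\|A'\|_{q \to p} = \Theta(\|A\|_{2 \to 2})$; precomposing the hypothesized $q \to p$ sketch with the map $A \mapsto A'$ then yields a $2 \to 2$ sketch of the same dimension $k$, and the cited lower bound forces $k = \Omega(n^2)$.

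To build $A'$, I apply \pref{thm:dvoretzky} on both sides of $A$. Since $p \in [1, 2]$, there is a linear map $T : \R^n \to \R^{C_p n}$ with $\|Tx\|_p = \left(1 \pm \frac{1}{2}\right) \|x\|_2$ for all $x$. Since $q \geq 2$, the dual exponent $q^* = q/(q-1)$ lies in $[1, 2]$, and the same lemma gives a linear map $T' : \R^n \to \R^{C_{q^*} n}$ with $\|T'y\|_{q^*} = \left(1 \pm \frac{1}{2}\right) \|y\|_2$. I then set $A' := T A (T')^T$, which has dimensions $(C_p n) \times (C_{q^*} n)$.

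The norm identity is a two-step calculation. First, the observation in \pref{thm:dvoretzky} applied on the output side gives $\|A'\|_{q \to p} = \|T A (T')^T\|_{q \to p} = \left(1 \pm \frac{1}{2}\right) \|A (T')^T\|_{q \to 2}$. By duality (\pref{lem:ptoqduality}), $\|A (T')^T\|_{q \to 2} = \|T' A^T\|_{2 \to q^*}$, and a second application of \pref{thm:dvoretzky} yields $\|T' A^T\|_{2 \to q^*} = \left(1 \pm \frac{1}{2}\right) \|A^T\|_{2 \to 2} = \left(1 \pm \frac{1}{2}\right) \|A\|_{2 \to 2}$. Combining the two estimates gives $\|A'\|_{q \to p} = \Theta(\|A\|_{2 \to 2})$.

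The reduction is then routine: zero-pad $A'$ to a square matrix of side $m := \max(C_p, C_{q^*}) n = O(n)$, which does not change its $q \to p$ norm, and apply the hypothesized $(k, \gamma)$-sketch. Since $A \mapsto A'$ together with the padding is linear in $A$, the composition is a linear sketch of $A$ in $k$ measurements whose output estimates $\|A\|_{2 \to 2}$ up to a factor $O(\gamma)$ with probability at least $5/6$. The main work is the norm identity $\|A'\|_{q \to p} = \Theta(\|A\|_{2 \to 2})$; once that is in place, there is no real obstacle and the rest is bookkeeping.
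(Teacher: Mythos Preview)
Your proposal is correct and follows essentially the same approach as the paper: both arguments apply \pref{thm:dvoretzky} twice (once for $p\in[1,2]$ and once for $q^*\in[1,2]$), interleaved with the duality \pref{lem:ptoqduality}, to build from $A$ a $\Theta(n)\times\Theta(n)$ matrix whose $q\to p$ norm is within a constant factor of $\|A\|_{2\to 2}$, and then invoke the $\Omega(n^2)$ lower bound from \cite{li2016tight}. Your version is slightly more careful about dimension bookkeeping (the zero-padding to a square matrix), but the core reduction is identical.
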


\begin{proof}
We prove this by showing that if the hypothesis of the theorem statement holds, then the $2\to 2$ norm can be sketched in $O(k)$ measurements.

Given an $n\times n$ matrix $A$ for which we want to sketch the $2\to 2$ norm, note that by \pref{thm:dvoretzky}
there is a $Cn\times n$ matrix $L_1$ such that $\|L_1A\|_{2\to q^*} = (\frac{1}{\beta},\beta)\|A\|_{2\to 2}$ for a constant
$\beta$, and by \pref{lem:ptoqduality} $\|L_1A\|_{2\to q^*}=\|A^TL_1^T\|_{q\to 2}$, and another application of
\pref{thm:dvoretzky} gives us another $Cn\times n$ matrix $L_2$ for which $\|L_2A^TL_1^T\|_{q\to p} = (\frac{1}{\beta},
\beta)\|A^TL_1^T\|_{q\to 2}$.
Note that this means $\|L_2A^TL_1^T\|_{q\to p} = \left(\frac{1}{\beta^2},\beta^2\right)\|A\|_{2\to 2}$, 
so we can sketch $A$ by drawing a random $L$ from $\mathcal{D}$ and storing $L(L_2A^TL_1^T)$, which uses 
$k(Cn)$ measurements and serves as a sketch from which $f$ can be used to estimate $\|A\|_{2\to 2}$ within a constant factor,
which means from \cite{li2016tight}, $k(Cn)$ must be $\Omega(n^2)$, which means $k(n)=\Omega(n^2/C^2)=\Omega(n^2)$.
\end{proof}

\subsection{Lower bounds for approximating $\|A\|_{q\to p}$ for $p,q\leq 2$ and $p,q\geq 2$}\label{sec:ptoq_oneside}
In this section, we show a lower bound on the sketching complexity of $\|A\|_{q\to p}$ where $A$ is a rank $r$ matrix, when both $p$ and $q$ are at most $2$. A corresponding lower bound for when $p$ and $q$ are at least 2 follows from
\pref{lem:ptoqduality}. We achieve this by first showing a lower bound on the sketching
complexity of $\|A\|_{2\to q}$ and then use Dvoretzky's theorem along with the relation between the $q\to p$ norm and the $p^*\to q^*$
norm to deduce the result.

We show a lower bound for sketching the $2\to q$ norm using the template from \pref{sec:lowerboundtechnique}.
We use distributions $\mathcal{D}_1 = \mathcal{G}_{1,r\times n}$ and $\mathcal{D}_2[\alpha]=\mathcal{G}_{2,r\times n}\left
[\alpha\frac{d}{\sqrt{r}}\right]$, as defined in \pref{sec:lowerboundtechnique} where $d$ is $\max\{n^{1/q},\sqrt{r}\}$.
\begin{lemma}\label{lem:2toqsep}
There exist values $s_q$ and $t_q$ such that with high probability, $\|G_1\|_{2\to q}\le s_q$ and $\|G_2\|_{2\to q}\ge
C\alpha s_q$ for some absolute constant $C$, for $q > 2$, and $G_1\sim\mathcal{D}_1$ and $G_2\sim\mathcal{D}_2[\alpha]$.
\end{lemma}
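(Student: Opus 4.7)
The plan is to follow the template of \pref{lem:1topseparation}: combine Gaussian Lipschitz concentration with an $\eps$-net for the upper bound on $\|G_1\|_{2\to q}$, and plug a single well-chosen test vector into $G_2$ for the lower bound.

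For the upper bound, fix any $x \in \R^n$ with $\|x\|_2=1$; by rotational invariance of $G_1$, the image $G_1 x$ is a standard Gaussian in $\R^r$, so $\E\|G_1 x\|_q = \Theta(r^{1/q})$. Since for $q \ge 2$ the $\ell_q$ norm on $\R^r$ is $1$-Lipschitz, the map $G \mapsto \|Gx\|_q$ is $1$-Lipschitz in the Frobenius norm, and \pref{thm:gaussianlipschitz} yields sub-Gaussian deviations of width $O(\sqrt{n})$ after a union bound over a $\half$-net of the $\ell_2$-unit ball in $\R^n$ of size at most $5^n$ (from \pref{lem:epsnetsize}). Converting back to the full sphere via \pref{lem:epsnetball} then gives $s_q = C_1(r^{1/q}+\sqrt{n})$ as a high-probability upper bound on $\|G_1\|_{2\to q}$.

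For the lower bound, take the test vector $x = e_i \in \R^n$ where $i$ is the random perturbed column index. Then
\[
G_2 e_i \;=\; G_{*,i} + \tfrac{\alpha d}{\sqrt r}\, u
\]
has independent $\mathcal{N}(0,\, 1+\alpha^2 d^2/r)$ coordinates, so its $\ell_q$ norm is distributed as $\sqrt{1+\alpha^2 d^2/r}\cdot\|g\|_q$ for a standard Gaussian $g \in \R^r$. A second application of \pref{thm:gaussianlipschitz} gives $\|g\|_q \ge c\, r^{1/q}$ with high probability, so $\|G_2\|_{2\to q} \ge \|G_2 e_i\|_q \gtrsim \alpha\, d\, r^{1/q - 1/2}$ once $\alpha$ is bounded below.

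The finish is a two-case analysis on the active branch of $d$: when $d=\sqrt r$ (i.e.\ $r \ge n^{2/q}$) the lower bound reduces to $\alpha r^{1/q}$ and is matched against the $r^{1/q}$ piece of $s_q$; when $d=n^{1/q}$ (i.e.\ $r \le n^{2/q}$) it reads $\alpha n^{1/q} r^{1/q-1/2}$ and must be matched against the $\sqrt n$ piece of $s_q$. The definition $d = \max\{n^{1/q},\sqrt r\}$ is calibrated so that this separation of factor $\Theta(\alpha)$ holds uniformly in both regimes, after which the bound $d_{TV} = O(\alpha^2 d^2 k/(rn))$ from \pref{lem:mixtureTVdistance} combines with \pref{thm:metalowerbound} to yield the target $k = \Omega(\min\{n,\,r n^{1-2/q}\})$. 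The main obstacle is the bookkeeping of the $d = n^{1/q}$ case: pinning down the constant $C$ precisely in that regime likely requires either sharpening the $\eps$-net deviation estimate, or refining the test vector beyond $e_i$ so that it also witnesses the $\sqrt n$ contribution to $s_q$ that would otherwise be lost.
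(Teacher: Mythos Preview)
There is a genuine gap, and it comes from the orientation of the matrices. Despite the notation $\mathcal{G}_{1,r\times n}$, the paper's proof treats the hard instances as $n\times r$ matrices: the $\eps$-net is taken over the unit ball of $\R^r$, the image $G_1x$ is an $n$-dimensional Gaussian, and the perturbation is to a single \emph{row}. You have read the matrices as $r\times n$, and this is exactly what forces the $\sqrt n$ term into your $s_q$ and produces the unresolved case analysis you flag at the end. With the domain equal to $\R^r$, the net has only $7^r$ points, so the union bound costs a deviation of $O(\sqrt r)$ rather than $O(\sqrt n)$, and the upper bound becomes $s_q=\Theta(n^{1/q}+\sqrt r)=\Theta(d)$ directly.

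The lower bound in the paper also differs from yours: it does not use a standard basis vector. Instead it takes $u\in\R^r$ to be the unit vector in the direction of the perturbed row $g'$; the coordinate of $G_2u$ at that row equals $\|g'\|_2\approx\sqrt{r+\alpha^2 d^2}\ge 0.9\,\alpha d$, and since $\|G_2u\|_q$ dominates any single coordinate this yields $\|G_2\|_{2\to q}\ge\Omega(\alpha d)$ with no case split on which branch of $d$ is active. Your test vector $e_i$ cannot witness this in the $r\times n$ orientation, because there $G_2e_i\in\R^r$ is a dense Gaussian direction whose $\ell_q$ norm only scales like $r^{1/q}$ times its standard deviation; you lose a factor $r^{1/2-1/q}$ relative to the $\ell_2$ scale, which is precisely the shortfall against $s_q\approx\sqrt n$ that you could not close. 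Swapping to the $n\times r$ orientation and using the direction-of-the-perturbed-row test vector removes both obstacles simultaneously.
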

\begin{proof}
Let $N$ be a $1/3$-net of the Euclidean ball in $\R^r$ with $7^r$ elements, which exists by \pref{lem:epsnetsize}. For a fixed
$x\in N$, $G_1 x$ is distributed as an $n$-dimensional vector with independent Gaussians, whose $q$-norm is at most $\beta_1n^{1/q}$ for
some constant $\beta_1$ in expectation and exceeds $\beta_1n^{1/q} + \beta_2\sqrt{r}$ with probability at most $\frac{1}{8^r}$ for
appropriate constant $\beta_2$, which follows from the $q$-norm being 1-Lipschitz and \pref{thm:gaussianlipschitz}.
A union bound over all $x\in N$ implies that with probability at least $1-(7/8)^r$, $\forall x\in N:\|G_1x\|_q\leq\beta_1n^{1/q}+\beta_2\sqrt{r}$.

Then by applying \pref{lem:epsnetball}, we conclude that with probability at least $1-(7/8)^r$,
$\|G_1\|_{2\to q}\leq\frac{3}{2}(\beta_1n^{1/q}+\beta_2\sqrt{r})\leq \frac{3}{2}(\beta_1+\beta_2)d$.
On the other hand, the perturbed row of $G_2$, called $g'$ is distributed as $\sqrt{1+\alpha^2\frac{d^2}{r}}g$ for a vector of
i.i.d. Gaussians $g$. If we take the unit vector $u$ in the direction of $g'$, then
the entry of $G_2u$ corresponding to the perturbed row is concentrated around $\sqrt{1+\alpha^2\frac{d^2}{r}}\|g\|_2=
\sqrt{r+\alpha^2d^2}$, which means
$\|G_2\|_{2\to q}\geq(1-o(1))\sqrt{r+\alpha^2d^2}\geq 0.9\alpha d$
with high probability.
\end{proof}

The theorem below immediately follows from \pref{lem:2toqsep}, \pref{lem:mixtureTVdistance} and \pref{thm:metalowerbound}.
\begin{theorem}\label{thm:2toqlower}
Suppose $q\geq 2$ and $(\mathcal{S},f)$ is a $(k,\gamma)$-sketching family for the
$2\to q$ norm of rank $r$ matrices for some constant $\gamma$. Then $k=\Omega(nr/d^2)$.
\end{theorem}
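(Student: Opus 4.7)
The plan is to apply the meta lower bound of \pref{thm:metalowerbound} to the two distributions $\mathcal{D}_1 = \mathcal{G}_{1,r\times n}$ and $\mathcal{D}_2[\alpha] = \mathcal{G}_{2,r\times n}[\alpha d/\sqrt{r}]$ introduced just before \pref{lem:2toqsep}. The norm-separation hypothesis of the meta theorem is exactly the content of \pref{lem:2toqsep}: with high probability, $\|G_1\|_{2\to q} \le s_q$ while $\|G_2\|_{2\to q} \ge C\alpha s_q$, giving a ratio of $C\alpha$. To rule out a $(k,\gamma)$-sketching family with constant $\gamma$, I would fix $\alpha$ as the constant $\gamma^2/C$, so that the separation factor is at least $\gamma^2 = \kappa$ in the notation of \pref{thm:metalowerbound}.

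Next, I would bound the total variation distance between the sketched distributions. The key subtlety is that $\mathcal{D}_2[\alpha]$ perturbs a random column by a Gaussian of standard deviation $\alpha d/\sqrt{r}$ rather than $\alpha$, so invoking \pref{lem:mixtureTVdistance} with perturbation scale $\alpha d/\sqrt{r}$ yields, for any linear sketch $L:\R^{r\times n}\to\R^k$,
\[
d_{TV}(L(\mathcal{D}_1),L(\mathcal{D}_2[\alpha])) \le \frac{C'(\alpha d/\sqrt{r})^2 k}{n} = \frac{C'\alpha^2 d^2 k}{rn}.
\]
Feeding this bound into the meta-theorem's contrapositive argument — namely, that a successful distinguisher forces the total variation distance to be at least $1/5$ — yields
\[
\frac{C'\alpha^2 d^2 k}{rn} \ge \frac{1}{5},
\]
which, treating $\alpha$ as a constant, rearranges to $k = \Omega(nr/d^2)$, exactly the claim.

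Since both ingredients (the separation in \pref{lem:2toqsep} and the TV bound in \pref{lem:mixtureTVdistance}) are already established, the only real work is bookkeeping: choosing the perturbation scale $\alpha d/\sqrt{r}$ in $\mathcal{D}_2$ so that the ``signal'' overwhelms the spectrum of $G_1$, while the ``perturbation'' of each sketched coordinate is still small enough that the TV distance bound becomes nontrivial. The truly delicate step in this whole pipeline was already absorbed into \pref{lem:2toqsep}: upper bounding $\|G_1\|_{2\to q}$ by $O(d)$ via an $\epsilon$-net on $S^{r-1}$ (using \pref{lem:epsnetball}, \pref{lem:epsnetsize}) plus a Lipschitz concentration union bound (\pref{thm:gaussianlipschitz}), while showing that the perturbed row of $G_2$ contributes $\Omega(\alpha d)$ to $\|G_2 u\|_q$ when $u$ points along that row. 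With these in place, the present theorem follows directly from assembling the three lemmas.
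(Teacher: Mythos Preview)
Your proposal is correct and matches the paper's approach exactly: the paper states that the theorem ``immediately follows from \pref{lem:2toqsep}, \pref{lem:mixtureTVdistance} and \pref{thm:metalowerbound}'', and your write-up spells out precisely this assembly, including the correct substitution of the perturbation scale $\alpha d/\sqrt{r}$ into \pref{lem:mixtureTVdistance} to obtain the $O(\alpha^2 d^2 k/(rn))$ TV bound and the choice $\alpha = \gamma^2/C$ to force the required separation $\kappa = \gamma^2$.
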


\begin{theorem}\label{thm:ptoqless2lower}
Suppose $p,q\leq 2$ and $(\mathcal{S},f)$ is a $(k,\gamma)$-sketching family for the $q\to p$ norm of rank $r$
matrices for some constant $\gamma$. Then $k=\Omega(nr/d^2)$ where $d=\max\{\sqrt{r},n^{1/q^*}\}$.
\end{theorem}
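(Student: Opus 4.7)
The plan is to reduce approximating the $q\to p$ norm on rank-$r$ matrices (with $p,q\leq 2$) to approximating the $2\to q^*$ norm on rank-$r$ matrices (with $q^*\geq 2$), and then invoke \pref{thm:2toqlower} as a black box. The reduction combines the duality relation \pref{lem:ptoqduality} with a Dvoretzky-type embedding \pref{thm:dvoretzky}, each of which is a fixed (deterministic) linear change of variables that can be pre-composed with any linear sketch.

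First, by \pref{lem:ptoqduality}, $\|B\|_{2\to q^*}=\|B^T\|_{q\to 2}$ for every matrix $B$. Next, since $p\in[1,2]$, \pref{thm:dvoretzky} supplies a linear map $T:\R^r\to\R^{C_p r}$ with $\|Tx\|_p=(1\pm\tfrac{1}{2})\|x\|_2$ for all $x\in\R^r$, and in particular $\|TC\|_{q\to p}=(1\pm\tfrac{1}{2})\|C\|_{q\to 2}$ for any linear $C:\R^n\to\R^r$. Taking $C=B^T$ and chaining with duality gives
\[
\|TB^T\|_{q\to p}=\Theta(1)\cdot\|B^T\|_{q\to 2}=\Theta(1)\cdot\|B\|_{2\to q^*},
\]
and $TB^T$ has rank at most $r$ (inherited from $B$).

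Now let $(\mathcal{S},f)$ be a $(k,\gamma)$-sketching family for the $q\to p$ norm of rank-$r$ matrices. For $L\sim\mathcal{S}$, the composed map $B\mapsto L(TB^T)$ is again linear in $B$, and $f$ evaluated on this sketch estimates $\|TB^T\|_{q\to p}$ up to a factor $\gamma$, which by the preceding display is a $\Theta(\gamma)$-factor approximation of $\|B\|_{2\to q^*}$. This yields a $(k,\Theta(\gamma))$-sketching family for the $2\to q^*$ norm of rank-$r$ matrices. Applying \pref{thm:2toqlower} with $q^*\geq 2$ in the role of ``$q$'' forces $k=\Omega(nr/d^2)$ with $d=\max\{\sqrt{r},n^{1/q^*}\}$, which is the claimed bound.

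The only subtlety is dimension matching: $TB^T$ lives in $\R^{C_p r\times n}$ rather than in whatever ambient square-matrix format the given sketching family expects. Zero-padding preserves rank and every $q\to p$ norm, so we can freely conform dimensions on both sides of the reduction. Apart from this bookkeeping, the argument is a direct composition of two deterministic linear reductions with a single black-box invocation of \pref{thm:2toqlower}; I do not anticipate any new probabilistic or concentration step being required, and in particular no further use of \pref{lem:mixtureTVdistance} or \pref{thm:metalowerbound} beyond what is already baked into \pref{thm:2toqlower}.
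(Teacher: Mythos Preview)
Your proposal is correct and follows essentially the same reduction as the paper: use duality (\pref{lem:ptoqduality}) to pass from $\|B\|_{2\to q^*}$ to $\|B^T\|_{q\to 2}$, then compose with the Dvoretzky map (\pref{thm:dvoretzky}) to land in the $q\to p$ world, and finally invoke \pref{thm:2toqlower} as a black box. Your explicit handling of the dimension-matching via zero-padding is a nice addition that the paper glosses over, but otherwise the argument is identical.
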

\begin{proof}
For a matrix $A$, from \pref{lem:ptoqduality} we have that $\|A\|_{2\to q^*} = \|A^T\|_{q\to 2}$, and from
\pref{thm:dvoretzky}, we know there is a $Cr\times r$ matrix $L_1$ such that $\|L_1A^T\|_{q^*\to p}=\Theta(1)\|A\|_{2\to q^*}$.
We can use $(\mathcal{S},f)$ to sketch $L_1A^T$ to obtain an $(O(k),\Theta(1))$-sketching family for the
$2\to q^*$ norm, whose lower bound from \pref{thm:2toqlower} gives us the desired lower bound.
\end{proof}

\subsection{Lower bounds for approximating $\|A\|_{q\to p}$ for $1\leq q\leq 2$ and $p\geq 2$}\label{sec:qsmall_plarge}
We prove the desired lower bound using the template from \pref{sec:lowerboundtechnique}.
Let $\mathcal{D}_1$ be a distribution over $n\times n$ matrices where diagonal entries are Gaussians and off-diagonal
entries are 0 and let $\mathcal{D}_2[\alpha]$ be a distribution over $n\times n$ matrices where a matrix is drawn from $\mathcal{D}_1$
and $\alpha\sqrt{\log n}$ is added to a random diagonal entry.

\begin{lemma}\label{lem:qsmallplargesep}
There exists values $s_{p,q}$, $t_{p,q}$ and $\alpha$ such that with probability at least $1-1/n$, $\|G_1\|_{q\to p}\leq s_{p,q}$
and $\|G_2\|_{q\to p}\geq \kappa s_{p,q}$ for some desired constant factor $\kappa$ separation, such
that $G_1\sim\mathcal{D}_1$ and $G_2\sim\mathcal{D}_2[\alpha]$.
\end{lemma}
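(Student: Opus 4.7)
The plan is to leverage the fact that both $\mathcal{D}_1$ and $\mathcal{D}_2[\alpha]$ are supported on diagonal matrices, which makes computing the $q\to p$ norm particularly clean when $q\leq p$. Specifically, for a diagonal matrix $D=\mathrm{diag}(d_1,\ldots,d_n)$ and any $x$ with $\|x\|_q\leq 1$, one has
\[
\|Dx\|_p^p = \sum_i |d_i|^p |x_i|^p \;\leq\; (\max_i |d_i|)^p \,\|x\|_p^p \;\leq\; (\max_i |d_i|)^p\,\|x\|_q^p,
\]
using the classical inequality $\|x\|_p\leq\|x\|_q$ for $p\geq q$, and this bound is attained at $x=e_{i^*}$ for $i^*\in\arg\max_i|d_i|$. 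Hence $\|D\|_{q\to p}=\max_i|d_i|$ in our regime $1\leq q\leq 2\leq p$.

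With this reduction in hand, I would bound $\|G_1\|_{q\to p}$ from above and $\|G_2\|_{q\to p}$ from below by controlling extremal diagonal entries. For $G_1\sim\mathcal{D}_1$, each diagonal entry is an independent $\mathcal{N}(0,1)$, so a union bound together with the standard Gaussian tail $\Pr[|g|>t]\leq 2e^{-t^2/2}$ yields $\max_i|(G_1)_{ii}|\leq C_1\sqrt{\log n}$ with probability at least $1-1/n$ for a suitably large absolute constant $C_1$ (e.g.\ $C_1=\sqrt{6}$). I then set $s_{p,q}:=C_1\sqrt{\log n}$.

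For $G_2\sim\mathcal{D}_2[\alpha]$, let $j$ denote the randomly perturbed diagonal index, so that $(G_2)_{jj}=g_j+\alpha\sqrt{\log n}$ with $g_j\sim\mathcal{N}(0,1)$. Since $\|G_2\|_{q\to p}\geq|(G_2)_{jj}|$ by the characterization above, it suffices to show $|g_j+\alpha\sqrt{\log n}|\geq\kappa s_{p,q}$ with probability at least $1-1/n$. A single Gaussian tail bound gives $\Pr[g_j<-\alpha\sqrt{\log n}/2]\leq e^{-\alpha^2(\log n)/8}$, which is at most $1/n$ once $\alpha\geq\sqrt{8}$; on that event $|(G_2)_{jj}|\geq\alpha\sqrt{\log n}/2$. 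Choosing $\alpha:=\max\{\sqrt{8},\,2\kappa C_1\}$ then gives $|(G_2)_{jj}|\geq\kappa C_1\sqrt{\log n}=\kappa s_{p,q}$, completing the separation.

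I don't expect a real obstacle here: the crux is the observation that the $q\to p$ norm of a diagonal matrix collapses to $\max_i|d_i|$ as soon as $q\leq p$, after which everything is a routine calibration of constants in one-dimensional Gaussian tail estimates. The only mild subtlety is that we need the perturbed entry to provide the lower bound rather than the maximum over all diagonal entries, but since including the other Gaussian entries only increases $\|G_2\|_{q\to p}$, the lower bound $\|G_2\|_{q\to p}\geq|(G_2)_{jj}|$ is automatic and no care is needed to ensure the shifted coordinate is the argmax.
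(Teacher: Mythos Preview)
Your proposal is correct and follows essentially the same approach as the paper: both reduce to the characterization $\|D\|_{q\to p}=\max_i|d_{ii}|$ for diagonal $D$ when $q\leq p$, then bound the maximum of $n$ i.i.d.\ Gaussians by $\Theta(\sqrt{\log n})$ and lower-bound the single perturbed entry. The only cosmetic differences are that the paper proves the diagonal characterization via $(|x_i|^q)^{p/q}\leq|x_i|^q$ rather than your use of $\|x\|_p\leq\|x\|_q$, and it invokes the Lipschitz concentration theorem instead of the raw Gaussian tail, but these are interchangeable.
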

We give the proof of \pref{lem:qsmallplargesep} in \pref{app:qsmall_plarge}.

Without loss of generality, we can assume that any sketch of $G_1$ and $G_2$ acts on $\mathrm{diag}(G_1)$ and $\mathrm{diag}(G_2)$
respectively. \pref{lem:variationdist1toinfty} gives an upper bound of $O(\sqrt{k\log n}/\sqrt{n})$ on the variation distance
between $k$-dimensional sketches of these distributions. Thus, from the variation distance bound, \pref{lem:qsmallplargesep}
and \pref{thm:metalowerbound}, the desired theorem follows.

\begin{theorem}\label{thm:qsmall_plarge_lowerbound}
Suppose $q\geq 2$ and $(\mathcal{S},f)$ is a $(k,\gamma)$-sketching family for the $q\to p$ norm of rank $r$ matrices for some constant
$\gamma$, then $k=\Omega(n/\log n)$.
\end{theorem}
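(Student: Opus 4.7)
The plan is to invoke the template of Theorem~\ref{thm:metalowerbound} with $\mathcal{D}_1$ (i.i.d.\ Gaussians on the diagonal, zero off-diagonal) and $\mathcal{D}_2[\alpha]$ (same, but with $\alpha\sqrt{\log n}$ added to a uniformly random diagonal entry). Lemma~\ref{lem:qsmallplargesep} already delivers a constant-factor gap in the $q\to p$ norm between samples from the two distributions with probability at least $1-1/n$, so the only remaining task is to upper bound the total variation distance between $k$-dimensional linear sketches of the two distributions.

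The key reduction is that every matrix in the support of either distribution is zero off the diagonal, so any linear sketch $L:\R^{n\times n}\to\R^k$ may be replaced by an effective $k\times n$ map $L'$ acting only on the diagonal vector. Under this reduction, $L(\mathcal{D}_1)$ becomes the pushforward of a standard Gaussian $g\in\R^n$ through $L'$, while $L(\mathcal{D}_2[\alpha])$ becomes the pushforward of $g+\alpha\sqrt{\log n}\,e_I$ for $I$ uniform in $[n]$. This is structurally identical to the setup underlying Lemma~\ref{lem:variationdist1toinfty}, with the scalar perturbation $\alpha\eta_p$ there replaced by $\alpha\sqrt{\log n}$.

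Re-running that argument is routine: assume without loss of generality that the rows of $L'$ are orthonormal (a post-processing change of basis), so that $L'(g)\sim\mathcal{N}(0,I_k)$ and $L'(g+\alpha\sqrt{\log n}\,e_i)\sim\mathcal{N}(\alpha\sqrt{\log n}\,L'e_i,I_k)$ for each $i$. Lemma~\ref{lem:gaussianTVdistance} bounds the conditional TV distance by $C\alpha\sqrt{\log n}\,\|L'e_i\|_2$; averaging over $i$ and using Cauchy--Schwarz against $\sum_{i=1}^n\|L'e_i\|_2^2=\|L'\|_F^2=k$ gives
\[
d_{TV}(L(\mathcal{D}_1),L(\mathcal{D}_2[\alpha]))\le \frac{C\alpha}{n}\sum_{i=1}^n\sqrt{\log n}\,\|L'e_i\|_2 \le C'\alpha\sqrt{\tfrac{k\log n}{n}}.
\]

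Combining this bound with Lemma~\ref{lem:qsmallplargesep} in Theorem~\ref{thm:metalowerbound}, any constant-factor sketching family must satisfy $\sqrt{k\log n/n}=\Omega(1)$, i.e., $k=\Omega(n/\log n)$. I do not anticipate a genuine obstacle: the separation lemma does the heavy lifting and is deferred to the appendix, while the variation-distance machinery is essentially the same as in Section~\ref{sec:lowerboundtechnique}. The one conceptual point worth stating explicitly is the reduction to a diagonal-only sketch, which simply reflects that deterministic zero entries contribute no randomness to any linear measurement.
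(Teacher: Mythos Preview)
Your proposal is correct and matches the paper's proof essentially line for line: the paper also reduces to a sketch acting on the diagonal vector, invokes Lemma~\ref{lem:variationdist1toinfty} (with the perturbation size $\alpha\sqrt{\log n}$ in place of $\alpha\eta_p$) to obtain the $O(\sqrt{k\log n/n})$ variation-distance bound, and then combines Lemma~\ref{lem:qsmallplargesep} with Theorem~\ref{thm:metalowerbound}. Your write-up simply spells out the averaging and Cauchy--Schwarz steps that the paper absorbs into its citation of Lemma~\ref{lem:variationdist1toinfty}.
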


\section{Sketching with large approximation factors}\label{sec:largeapprox}

While our results primarily involve constant factor approximations, we give several preliminary results studying large approximation factors for sketching the important cases of the $2 \to q$ norm and $[1, \infty] \to [1, 2]$ norms. Our goal is, given an approximation factor $\alpha(n)$, to give upper and lower bounds on $k$ for a $(k, \alpha(n))$-sketching family for the respective norms. As a shorthand, we will refer to $\alpha(n)$ as $\alpha$.
\subsection{Sketching upper bounds for large approximations of $\|A\|_{2 \to q}$}
It is sufficient to give a $(k, \alpha)$-sketching family for the $\infty \to q$ norm. To see why, given an input matrix $A \in \R^{n \times n}$, by \pref{lem:ptoqduality} we have that $\|A\|_{2\to q}=\|A^T\|_{q^*\to 2}$. Using \pref{thm:dvoretzky}, there is a linear map such that this is equal within a constant factor of $\|GA^T\|_{q^*\to 1}=\|AG^T\|_{\infty \to q}$.
\begin{theorem}\label{thm:largeapprox2to4}
Given a matrix $A \in \R^{n \times n}$, there exists a $(O(\frac{n^2}{\alpha}), \alpha)$-sketching family given by $(\mathcal{S}, f)$ for the $\infty \to q$ norm.
\end{theorem}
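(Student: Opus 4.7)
The plan is based on oblivious column subsampling. Draw a uniformly random subset $S\subseteq[n]$ of size $k=n/\alpha$ and store the $n\times k$ submatrix $A_S$ consisting of the $S$-indexed columns of $A$; this uses exactly $nk = O(n^2/\alpha)$ measurements. The estimator $f$ computes $\|A_S\|_{\infty\to q}$ exactly via \pref{lem:inftytoPcharac}, as the maximum of $\|A_Sy\|_q$ over the $2^k$ sign vectors $y\in\{\pm 1\}^k$, and outputs $\alpha\cdot\|A_S\|_{\infty\to q}$.

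Two Jensen inequalities control the expectation. For the upper direction, fix any $y\in\{\pm 1\}^k$ and extend $y$ to $x\in\{\pm 1\}^n$ by placing i.i.d.\ Rademacher signs on the complement $[n]\setminus S$; since $\E[Ax]=A_Sy$, convexity of $\|\cdot\|_q$ gives $\|A_Sy\|_q\le\E\|Ax\|_q\le\|A\|_{\infty\to q}$, so $\|A_S\|_{\infty\to q}\le\|A\|_{\infty\to q}$ and the upper bound $\alpha\|A_S\|_{\infty\to q}\le\alpha\|A\|_{\infty\to q}$ holds deterministically. For the lower direction, let $x^*$ attain the maximum and let $y^*$ denote its restriction to $S$. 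Linearity of expectation over the random $S$ gives $\E_S A_S y^* = (k/n)Ax^* = Ax^*/\alpha$, and Jensen once more yields $\E_S\|A_Sy^*\|_q\ge\|Ax^*\|_q/\alpha = \|A\|_{\infty\to q}/\alpha$. Hence $\alpha\|A_S\|_{\infty\to q}$ is an $\alpha$-approximation to $\|A\|_{\infty\to q}$ in expectation.

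The main obstacle is boosting this expectation lower bound to the $5/6$ constant-probability guarantee required by the sketching definition: a naive reverse-Markov bound yields only $\Omega(1/\alpha)$ success, and worst-case instances such as $A=e_1e_1^\top$ genuinely fail with probability $1-1/\alpha$ under column subsampling alone. To resolve this I would augment the sketch with a parallel $\tilde O(n)$-measurement sketch for the $1\to q$ norm $\max_i\|A_{*,i}\|_q$, built from the construction of \pref{sec:1toPupper}, and report the maximum of the two estimators. A short case analysis then handles both regimes: if some column carries at least an $\alpha^{-1}$ fraction of the mass (so that $\max_i\|A_{*,i}\|_q\ge\|A\|_{\infty\to q}/\alpha$), the $1\to q$ sketch certifies the needed lower bound, using the trivial sandwich $\max_i\|A_{*,i}\|_q\le\|A\|_{\infty\to q}\le n\max_i\|A_{*,i}\|_q$ to also control the upper direction; otherwise the column mass is genuinely spread across many columns, and a bounded-differences concentration argument over the random choice of $S$ upgrades the expectation bound $\E_S\|A_Sy^*\|_q\ge\|A\|_{\infty\to q}/\alpha$ to a constant-probability statement. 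Selecting the crossover threshold between these two regimes so that both the heavy-column and the spread regimes deliver an $\alpha$-approximation simultaneously will be the most delicate step of the argument.
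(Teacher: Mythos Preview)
Your subsampling-plus-backup plan is a genuinely different route from the paper's, and it has a real gap in the concentration step you flag as ``most delicate.'' In Case~2 (all column norms below $\|A\|_{\infty\to q}/\alpha$), bounded differences is too weak: changing one sampled column moves $\|A_Sy^*\|_q$ by at most $\|A\|_{\infty\to q}/\alpha$, so McDiarmid gives an exponent of order $-t^2/\bigl(\sum_i\|A_{*,i}\|_q^2\bigr)$, and the denominator can be as large as $n(\|A\|_{\infty\to q}/\alpha)^2$, yielding probability bounds like $\exp(-\Theta(1/n))$. A second-moment/Paley--Zygmund attempt runs into the same wall: for $q=2$ one computes $\E\|A_Sy^*\|_2^2\approx p\|A\|_F^2+p^2\|Ax^*\|_2^2$, and $\|A\|_F^2$ can be $\Theta(n)\cdot(\|A\|/\alpha)^2$, swamping $(\E\|A_Sy^*\|_2)^2\ge(\|A\|/\alpha)^2$. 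Concretely, take $n=\alpha^2$ and let $A$ have $2\alpha$ identical nonzero columns each of $q$-norm $1/(2\alpha)$; this sits strictly in your Case~2, yet a size-$n/\alpha=\alpha$ sample misses all nonzero columns with probability $\approx e^{-2}$. Your $1\to q$ backup happens to rescue this particular instance, but only because the max column norm exceeds $\|A\|/\alpha^2$; pushing the example further (more, lighter columns) defeats the backup too unless you set the crossover threshold at $\|A\|/\alpha^2$---and then the concentration side has to handle even lighter columns, so the same obstruction recurs. It is not clear your two regimes can be made to meet.

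The paper avoids all of this with a single idea you are missing: instead of \emph{dropping} columns, \emph{aggregate} them with random signs. Partition the $n$ columns into $n/\alpha$ blocks of size $\alpha$ and form $AS$, where each column of $S$ places i.i.d.\ $\pm 1$ signs on one block. This keeps every column's contribution in the sketch, so there is no heavy-column pathology. The upper bound $\|AS\|_{\infty\to q}\le\|A\|_{\infty\to q}$ is immediate since $Sx\in\{\pm1\}^n$. For the lower bound, take the transversal $T_{i^*}$ (one column from each block, position $i^*$) with $\|A_{*,T_{i^*}}\|_{\infty\to q}\ge\|A\|_{\infty\to q}/\alpha$, choose the outer signs so that the $T_{i^*}$-part equals its optimum $y$, and observe that the remaining contribution $z$ is symmetric in the leftover Rademacher signs. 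Since $\|y+z\|_q+\|y-z\|_q\ge 2\|y\|_q$ always and $y+z,\,y-z$ are identically distributed, $\Pr[\|y+z\|_q\ge\|y\|_q]\ge 1/2$. That single symmetrization step replaces your entire case analysis and concentration argument and directly yields constant success probability.
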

\begin{proof}
Let  $B \in \Z^+$ be some positive integer to be chosen later. Let the columns of our sketch matrix $S$ be indexed by sets given by $\{B_i\}_{i =1}^{n/B}$ such that $B_i = ((i-1)B, iB]$. For each column $v_{B_i}$, we define i.i.d random variables $\{\sigma_{ij}\}_{j = 1}^B$ such that $\sigma_{ij} = 1$ with probability $\frac{1}{2}$ and $-1$ with probability $\frac{1}{2}$. Let the column $v_{B_i}$ be as follows: 
$$v_{B_i}[j] = \begin{cases} \sigma_{ij} & \text{for } j \in [(i-1)B, iB] \\ 
							0        & \text{o/w} 
\end{cases}$$
We define our linear map $L(A)$ to be $L(A) = AS$. Our function $f : \R^{n/B} \rightarrow \R$ simply optimizes over $\{-1, 1\}^{n/B}$ and outputs $\|AS\|_{\infty \to q}$.

Since all $\sigma_{ij} \in \{-1, 1\}$ we have that $f(L(A)) \leq \|A\|_{\infty \to q}$ since $Sx$ for $x \in \{-1, 1\}^{n/B}$ has the property that $Sx \in \{-1, 1\}^n$.

We now show a lower bound on $f(L(A))$. To do so, we let $T_i$ denote the column indices of $A$ such that the index is column $i$ in its respective block. We then notice that there exists $i \in [n/B]$ such that $\|A_{*, T_i}\|_{\infty \to q} \geq \frac{B}{n}\|A\|_{\infty \to q}$. We get this by applying the triangle inequality $\|A\|_{\infty \to q} \leq \sum_{i = 1}^{n/B} \|A_{*, T_i}\|_{\infty \to q}$.

Let $i^*$ be the index that realizes this $n/B$-approximation to $\|A\|_{\infty \to q}$ and let $\{s_1\}_{i = 1}^{n/B}$ be the assignment of signs that realizes the $\infty \to q$ norm of $A_{*, T_{i^*}}$.
\begin{align*}
f(L(A)) \geq \| \sum_{i = 1}^B \sum_{j = 1}^{n/B} s_jA_{*, B_j[i]}  \|_q \geq \| \underbrace{\sum_{j = 1}^{n/B} s_jA_{*, B_j[i^*]}}_y +  \underbrace{\sum_{i \neq i^*}^B \sum_{j = 1}^{n/B} s_jA_{*, B_j[i]}}_{z}  \|_q
\end{align*}
Notice that $z$ is symmetric around the origin and hence we get that $\|y + z + y - z\|_q \leq \frac{\|y + z\|_q + \|y - z\|_q}{2}$ which implies that $f(L(A)) \geq \|y+z\|_q \geq \Theta(1)\|y\|_q \geq \frac{n}{B}\|A\|_{\infty \to q}$ with probability at least $\frac{1}{2}$. Thus, we get an $O \left(\frac{n^2}{\alpha}\right)$ space sketch that gives us an $\alpha$-approximation by setting $B=n/\alpha$.
\end{proof}

\subsection{Sketching upper bounds for large approximations of $\|A\|_{q \to p}$ for $q \in [1, \infty]$ and $p \in [1, 2]$}

We give a description of our sketch followed by the approximation factor. Towards the end of defining our sketch, let $B \in \Z^+$ be some positive integer to be chosen later. Let the rows of our sketch matrix $S$ be indexed by sets given by $\{B_i\}_{i =1}^{n/B}$ such that $B_i = ((i-1)B, iB]$. For each row $v_{B_i}$, we define i.i.d random variables $\{\sigma_{ij}\}_{j = 1}^B$ such that $\sigma_{ij} = 1$ with probability $\frac{1}{2}$ and $-1$ with probability $\frac{1}{2}$. Let the row $v_{B_i}$ be as follows: 
$$v_{B_i}[j] = \begin{cases} \sigma_{ij} & \text{for } j \in [(i-1)B, iB] \\ 
							0        & \text{o/w} 
\end{cases}$$
Our algorithm simply outputs $\|SA\|_{q \to p}$. The proof of the theorem below can be found in \pref{sec:generalapproxappendix}.
\begin{theorem}
Given a matrix $A \in \R^{n \times n}$, there exists an $(\tilde{O}(\frac{n^2}{\alpha^2}), \alpha)$-sketching family given by $(\mathcal{S}, f)$ for the $q \to p$ norm for $p \in [1, 2]$.
\end{theorem}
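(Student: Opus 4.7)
The sketch stores the $(n/B)\times n$ matrix $SA$; choosing $B=\Theta(\alpha^2)$ gives $\tilde O(n^2/\alpha^2)$ real entries, and the estimator $f$ returns a fixed constant multiple of $\|SA\|_{q\to p}$. My plan is to prove two complementary bounds: a \emph{deterministic} upper bound $\|SA\|_{q\to p}\le B^{1-1/p}\|A\|_{q\to p}\le \sqrt{B}\|A\|_{q\to p}$ valid for every $p\in[1,2]$, and a \emph{probabilistic} lower bound $\|SA\|_{q\to p}\ge \Omega(\|A\|_{q\to p}/\sqrt{B})$ holding with constant probability. Together with $\sqrt{B}=\Theta(\alpha)$ and an absorbed constant rescaling, these yield $f(SA)\in(1/\alpha,\alpha)\|A\|_{q\to p}$.

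For the upper bound I will bound $\|S\|_{p\to p}$ deterministically. For any $y\in\R^n$, $|(Sy)_i|=|\langle\sigma^{(i)},y_{B_i}\rangle|\le \|y_{B_i}\|_1\le B^{1-1/p}\|y_{B_i}\|_p$ by H\"older on the $B$-dimensional block. Raising to the $p$-th power and summing over the $n/B$ blocks gives $\|Sy\|_p\le B^{1-1/p}\|y\|_p$, which is at most $\sqrt{B}\|y\|_p$ since $1-1/p\le 1/2$ for $p\in[1,2]$. Applying this with $y=Ax$ for every $x$ on the unit $\ell_q$-sphere immediately yields $\|SA\|_{q\to p}\le\sqrt{B}\|A\|_{q\to p}$ for every realization of $S$.

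For the lower bound, fix $x^*$ realizing $\|A\|_{q\to p}$, set $y=Ax^*$, and apply Khintchine's inequality row-by-row: $\E[|(Sy)_i|^p]=\Theta_p(\|y_{B_i}\|_2^p)$. Using $\|v\|_2\ge B^{1/2-1/p}\|v\|_p$ for vectors supported on a set of size $B$ with $p\le 2$, this gives $\E[\|Sy\|_p^p]\ge \Omega_p(B^{p/2-1})\|y\|_p^p$, so in expectation $\|SAx^*\|_p$ is at least $\Omega(\|A\|_{q\to p}/\sqrt{B})$. To convert this expectation bound into a constant-probability lower bound I will apply Paley--Zygmund to the sum $X=\sum_i|(Sy)_i|^p$ of $n/B$ independent nonnegative variables: Khintchine at exponent $2p$ yields $\E[|(Sy)_i|^{2p}]=O((\E[|(Sy)_i|^p])^2)$, which suffices to bound $\E[X^2]=O((\E[X])^2)$ and hence $\Pr[X\ge \E[X]/2]=\Omega(1)$. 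This gives $\|SAx^*\|_p\ge \Omega(\|A\|_{q\to p}/\sqrt{B})$ with constant probability, and hence the same bound on $\|SA\|_{q\to p}\ge \|SAx^*\|_p$. A constant number of independent repetitions of $S$, taking the maximum of the resulting estimates, boosts the success probability to $5/6$ without changing the asymptotic sketch dimension.

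The main obstacle is the concentration step: in the adversarial case where $y=Ax^*$ concentrates on a single block, the variance of $X=\|Sy\|_p^p$ is comparable to $(\E[X])^2$, so Chebyshev alone is insufficient to separate $X$ from $0$. The saving observation is that Khintchine at exponent $2p$ controls the second moment of each individual summand $|(Sy)_i|^p$ by a constant times the squared first moment of the same summand, which is precisely the ratio Paley--Zygmund exploits to produce a constant-probability lower bound even when a single block carries all the mass. Once the deterministic upper bound and the probabilistic lower bound are in hand, setting $B=\Theta(\alpha^2)$ finishes the construction of the $(\tilde O(n^2/\alpha^2),\alpha)$-sketching family.
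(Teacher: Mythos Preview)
Your proposal is correct and follows essentially the same approach as the paper: the same block-Rademacher sketch $S$, the same deterministic upper bound $\|SAx\|_p\le B^{1-1/p}\|Ax\|_p$ via H\"older on each block, and the same Khintchine-based expectation lower bound $\E[\|SAx^*\|_p^p]\ge \Omega(B^{p/2-1})\|Ax^*\|_p^p$ for the maximizer $x^*$.

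The one genuine difference is how the expectation bound is converted into a probability bound. The paper argues block-by-block: it claims each block succeeds with constant probability, then takes the median over $O(\log n)$ independent copies of each block to boost to $1-O(1/n)$ per block and union bounds over the $n/B$ blocks. This is what produces the $\tilde O$ in the statement. Your route is more direct: you apply Paley--Zygmund to the full sum $X=\sum_i |(Sy)_i|^p$, using Khintchine at exponent $2p$ to control $\E[X_i^2]\le C(\E[X_i])^2$ and hence $\E[X^2]\le C(\E[X])^2$, obtaining a constant-probability lower bound on $\|SAx^*\|_p$ in one shot; a constant number of independent repetitions (taking the maximum, which is safe because the upper bound is deterministic) then boosts to $5/6$. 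This is cleaner, sidesteps the issue that the paper's per-block median selection is phrased in terms of the unknown maximizer $x^*$, and in fact removes the $\log n$ factor entirely, yielding $O(n^2/\alpha^2)$ rather than $\tilde O(n^2/\alpha^2)$.
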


\section{Further Directions}

One interesting direction is to study the low-rank approximation problem with respect to the $q \to p$ norm. An important open question in the literature is to find input sparsity time low rank approximation algorithms with respect to the $2\to2$ norm, and a natural step might be to try this problem with for $q\to p$ norms for certain $q$ and $p$.

Another interesting problem would be to investigate algorithms for approximate nearest neighbors with respect to the $q \to p$ norm, in light of a question posed by \cite{andoni2017approximate} about what metric spaces admit efficient approximate nearest neighbor algorithms, with matrix norms mentioned as an object of interest.

\bibliographystyle{alpha}
\bibliography{matrix_norm_sketch}

\appendix
\section{Proofs from \pref{sec:preliminaries}}\label{app:prelims}

\begin{proof}[Proof of \pref{lem:1toPcharac}]
For any $x$ that is unit according to $\ell_1$,
\begin{align*}
\|Ax\|_p &= \|A_{*,1}x_1 + A_{*,2}x_2 + \ldots + A_{*,n}x_n\|_p\\
&\leq \|A_{*,1}\|_p|x_1| + \|A_{*,2}\|_p|x_2| + \ldots + \|A_{*,n}\|_p|x_n|\leq \max_{i\in[n]}\{\|A_{*,i}\|_p\}
\end{align*}
where the last inequality is because $|x_i|$ give a convex combination and is achieved for $x=e_{i^*}$ where
$i^* = \arg\max_i\{\|A_{*,i}\|_p\}$.
\end{proof}

\begin{proof}[Proof of \pref{lem:inftytoPcharac}]
For any $x$ such that there is a coordinate $x_j$ that is strictly between 1 or $-1$, let $\eps$ be $\min\{1-x_j,x_j+1\}$, consider
\begin{align*}
\|Ax\|_p &= \|A_{*,j}x_j+\sum_{i\neq j} A_{*,i}x_i\|_p\\
&\leq \left(\frac{1+x_j}{2}\right)\|A_{*,j}+\sum_{i\neq j} A_{*,i}x_i\|_p +
\left(\frac{1-x_j}{2}\right)\|-A_{*,j}+\sum_{i\neq j} A_{*,i}x_i\|_p
\end{align*}
where the inequality is due to the triangle inequality. Since $\|Ax\|_p$ is at most a convex combination of the $p$-norms
after replacing $x_j$ with $1$ or $-1$, we can make $x_j$ one of $1$ or $-1$ without decreasing the $p$-norm.
\end{proof}

\begin{proof}[Proof of \pref{lem:epsnetball}]
Pick $x^*$ on the unit ball such that $\|Ax^*\|_{\mathcal{Y}}=\|A\|_{\mathcal{X}\to\mathcal{Y}}$. There is $x\in N$
such that $\|x^*-x\|_{\mathcal{X}}<\eps$, which means
\[\|A(x^*-x)\|_{\mathcal{Y}}\leq\|A\|_{\mathcal{X}\to\mathcal{Y}}
\|x-x^*\|_{\mathcal{X}}<\eps\|A\|_{\mathcal{X}\to\mathcal{Y}}\]
On the other hand,
\[\|A(x^*-x)\|_{\mathcal{Y}} \geq \|Ax^*\|_{\mathcal{Y}}-\|Ax\|_{\mathcal{Y}}\geq\|A\|_{\mathcal{X}\to\mathcal{Y}}
-\|Ax\|_{\mathcal{Y}}\]
and hence
\begin{align*}
\|A\|_{\mathcal{X}\to\mathcal{Y}}-\|Ax\|_{\mathcal{Y}} &< \eps\|A\|_{\mathcal{X}\to\mathcal{Y}}\\
\|A\|_{\mathcal{X}\to\mathcal{Y}} &< \frac{\|Ax\|_{\mathcal{Y}}}{1-\eps} \leq
\frac{1}{1-\eps}\max_{x\in N}\|Ax\|_{\mathcal{Y}}
\end{align*}
\end{proof}

\begin{proof}[Proof of \pref{lem:epsnetsize}]
For $x$ in a normed space $\mathcal{X}$, we use the notation $B_x(r)$ to denote $\{y:\|x-y\|_{\mathcal{X}}<r\}$, the ball of radius
$r$ around $x$.

Start with an empty set $N$ and while there is a point $x$ in the unit ball $B$ that has distance at least $\eps$ to every element in
$N$, pick $x$ and add it to $N$. This process terminates when every $x\in B$ has distance less than $\eps$ to some element in $N$, 
thereby terminating with $N$ as an $\eps$-net. We claim that the size of $N$ meets the desired bound.

By construction, any $y$ and $y'$ in $N$ are at least $\eps$ apart, which means $\mathcal{B}=\{B_x(\eps/2):x\in N\}$ is a collection
of disjoint sets and note that
\[\bigcup_{S\in\mathcal{B}}S\subseteq B_0(1+\eps/2)\]

By disjointness
\[
\mathrm{Vol}\left(\bigcup_{S\in\mathcal{B}}S\right) = \sum_{S\in\mathcal{B}} \mathrm{Vol}(S) = |N|\mathrm{Vol}(B_0(\eps/2))
\]
where $\mathrm{Vol}(S)$ is the volume of $S$ according to the Lebesgue measure.

And thus, we obtain
\begin{align*}
|N| &= \frac{\mathrm{Vol}\left(\bigcup_{S\in\mathcal{B}}S\right)}{\mathrm{Vol}(B_0(\eps/2))}\\
&\leq\frac{\mathrm{Vol}(B_0(1+\eps/2))}{\mathrm{Vol}(B_0(\eps/2))}\\
&= \left(\frac{1+\eps/2}{\eps/2}\right)^n\\
&= \left(\frac{2+\eps}{\eps}\right)^n
\end{align*}
which concludes the proof.

\end{proof}

\section{Missing proofs from \pref{sec:upperbounds}}\label{app:upperbounds}

\begin{proof}[Proof of \pref{lem:matrixcolsketch}]
Draw $c\log n$ matrices $S_1,S_2,\ldots, S_{c\log n}$ from $\D$ independently where $c$ is a constant to be determined
later. We define
\begin{align*}
S &:= \begin{bmatrix}
S_1\\
S_2\\
\vdots\\
S_{c\log n}
\end{bmatrix}\\
g(Sx) &:= \mathsf{median}\{f(S_1x),f(S_2x),\ldots,f(S_{c\log n}x)\}
\end{align*}
Let's analyze the probability that $g(Sx)$ falls outside $L_x=\left(\frac{1}{2}\|x\|_p,2\|x\|_p\right)$. In order for that
to happen, more than half of $f(S_1x),\ldots,f(S_{c\log n}x)$ must lie outside $L_x$, and this happens to
each $f(S_ix)$ with probability at most $\frac{1}{3}$. Using Hoeffding's inequality, we know
\[
\Pr[g(Sx)\notin L]\leq 2\exp\left(-\frac{c\log n}{72}\right)
\]
which for appropriate choice of $c$ can be bounded by $\frac{1}{n^2}$.

For a matrix $A$ with $n$ columns, a union bound tells us that for all $i$, $g(SA_{*,i})$ falls in $L_{A_{*,i}}$
with probability at least $1-\frac{1}{n}$. Combined with \pref{lem:1toPcharac}, it follows that $h(SA):=\max_i g(SA_{*,i})$
is a $2$-approximation to $\|A\|_{1\to p}$ with probability at least $1-\frac{1}{n}$.
\end{proof}

\section{Missing Proofs from \pref{sec:lowerbounds}}\label{app:lowerbounds}
\subsection{Missing Proofs from \pref{sec:1toInftylower}}\label{app:1toInftylower}
\begin{proof}[Proof of \pref{lem:1toinftyseparation}]
We denote $C\kappa$ as $\alpha$ and set the exact value of $\alpha$ in the end of the proof.
For a fixed pair $i,j$ let us denote the perturbation term $\alpha\eta_p e_ie_j^\top$ as $E_{ij}$.
Recall that from \pref{sec:1toPupper}, we know that $\|A\|_{1\to p}=\max_{i\in[n]}\|A_{*,i}\|_p$ which means that it
suffices to give bounds on the maximum $\ell_p$ norm across columns of $G_1$ and $G_2$ respectively.

Since the $\ell_p$ norm is $1$-Lipschitz for any $p \geq 2$, we can apply \pref{thm:gaussianlipschitz} to show concentration around the expectation for $\|G_{*, i}\|_p$ for any column $i$ of a matrix $G$ of i.i.d Gaussian entries. Hence we have that for any column $i$, and some positive constant $\lambda$
\begin{align*}
\Pr \left[\|G_{*, i}\|_p \geq \lambda \E[\|G_{*, i}\|_p] \right] \leq C \exp(-c\lambda^2 \E[\|G_{*, i}\|_p]^2)
\end{align*}

Letting $g$ be an $n$-dimensional vector of i.i.d Gaussians, since we know $\E[\|g\|_p]=\Omega(\sqrt{\log n})$, there exists
appropriate constant $\beta$ such that for any column $i$ of $G_1$ we have that $\|(G_1)_{*, i}\|_p$ is less than
$\beta\E[\|g\|_p]$ with probability at least $1 - \frac{1}{n^2}$. By a union bound over all columns, the
probability that $\|G_1\|_{1\to p} \leq \beta\E[\|g\|_p]$ is at least $1 - \frac{1}{n}$.

For a matrix $G_2=G+E_{ij}$ drawn from $\mathcal{G}_2[\alpha,\eta_p]$, we know that the perturbed column $j$ has norm at least
$\alpha\eta_p-\|G_{*,i}\|_p$, which satisfies $(\alpha - \beta)\E[\|g\|_p] \leq \|G_2\|_{1\to p}$.
Setting $\alpha \geq (\kappa+1) \beta$ gives us the desired result. 
\end{proof}

\begin{proof}[Proof of \pref{lem:variationdist1toinfty}]
Recall perturbation term $\alpha\eta_p e_ie_j^\top$ was referred to as $E_{ij}$.
Just as in \pref{lem:mixtureTVdistance}, we can think of $L$ as a $k\times n^2$ matrix that acts on a sample from $\mathcal{G}_1$ or $\mathcal{G}_2[\alpha]$ as though it were an $n^2$-dimensional vector. Without loss of generality, we can assume that the rows of $L$ are orthonormal, since as before we can always perform a change of basis in post-processing. Thus, the distribution $\D_1$ is the same as $\mathcal{N}(0,I_k)$. For fixed $i,j$, the distribution of $L(G+E_{ij})$ is Gaussian with mean vector $L(E_{ij})$ (the $ij^{\text{th}}$ column of the $k \times n^2$ matrix $L$ scaled by $\alpha\eta_p$) and covariance $I_k$ because of the following.

\begin{align*}
\Cov(L(G + E_{ij})) &= \E{ \big( L(G + E_{ij}) - \E{L(G + E_{ij})} \big)^\top \big(L(G + E_{ij}) - \E{L(G + E_{ij})}\big) } \\ 
&= \E{ \big( L(G) - \E{L(G)} \big)^\top \big(L(G) - \E{L(G)}\big) } \\
&= \Cov_{G \sim \mathcal{N}(0, I_n)}(G) = I_k
\end{align*}

Thus, $\D_2$ is the distribution of picking a random $i,j$ and drawing a matrix from $\mathcal{N}(L(E_{ij}),I_k)$.

We now analyze the total variation distance between $\D_1$ and $\D_2$ and get the desired bound from a chain of inequalities.

\begin{align*}
d_{TV}(\D_1,\D_2) &= \frac{1}{2}\int_{x\in\R^k} |p_{\D_1}(x)-p_{\D_2}(x)|dx\\
&= \frac{1}{2}\int_{x\in\R^k}\left|\sum_{i,j} \frac{1}{n^2} p_{\D_1}(x) - \frac{1}{n^2} p_{\mathcal{N}(L(E_{ij}),I_k)}(x)\right|dx\\
&\le \frac{1}{n^2}\sum_{i,j}\frac{1}{2}\int_{x\in\R^k}\left|p_{\D_1}(x)-
p_{\mathcal{N}(L(E_{ij}),I_k)}\right|dx\\
&= \frac{1}{n^2}\sum_{i,j} d_{TV}(\mathcal{D}_1,\mathcal{N}(L(E_{ij}),I_k))\\
&= \frac{1}{n^2}\sum_{i,j} d_{TV}(\mathcal{N}(0,I_k),\mathcal{N}(L(E_{ij}),I_k))\\
&\le \frac{1}{n^2}\sum_{i,j} C'\alpha\eta_p \| L_{*, ij} \|_2 &\text{[from \pref{lem:gaussianTVdistance}]}\\
&= \frac{C'\alpha\eta_p}{n^2}\|L\|_{1,2}\\
&\le \frac{C'\alpha\eta_p}{n^2}\cdot n \|L\|_F = C'\alpha\eta_p \cdot \frac{\sqrt{k}}{n} &\text{[by Cauchy-Schwarz]} 
\end{align*}
\end{proof}

\subsection{Missing Proofs from \pref{sec:qsmall_plarge}}\label{app:qsmall_plarge}
\begin{proof}[Proof of \pref{lem:qsmallplargesep}]
We claim that for a diagonal matrix $D$, $\arg\max_{\|x\|_q=1}\|Dx\|_p$ is achieved when $x$ is one of the $e_i$ standard basis vectors
$e_i$. To see this,
\begin{align*}
\|Dx\|_p^p &= \sum_{i=1}^n |d_{ii}x_i|^p = \sum_{i=1}^n |d_{ii}|^p (|x_i|^q)^{p/q}
\leq \sum_{i=1}^n |d_{ii}|^p |x_i|^q
\leq \max_i |d_{ii}|^p
\end{align*}
which is achieved by picking $x=e_{i^*}$ where choice of $i=i^*$ maximizes $d_{ii}$.

Thus, to analyze the $q\to p$ norm of $G_1$, it suffices to analyze $\max_{x\in \{e_i\}}\|G_1x\|_p$, which is the same
as $\|g\|_{\infty}$ where $g$ is a vector of i.i.d. Gaussians. We can extract from the proof of \pref{lem:1toinftyseparation}
that $\|g\|_{\infty}$ is upper bounded by $\beta\sqrt{\log n}$ with probability at least $1-\frac{1}{n^2}$.

On the other hand, if the perturbation is at index $(i,i)$ and we pick $\alpha=\kappa(\beta+1)$, then $\|G_2e_i\|_p$ is at least
$\kappa\beta\sqrt{\log n}$ with probability at least $1-\frac{1}{n^2}$ implying the desired separation.
\end{proof}

\section{General approximation factors $\alpha$}\label{sec:generalapproxappendix}
\subsection{Sketching Matrix Construction and Upper Bounds}
Let us first define our sketch and then analyze its performance. For the sketch $S$, we group the rows of $A$ into $\frac{n}{\alpha^2}$
groups of size $\alpha^2$. We label the groups by $B_1, \dots, B_{n/\alpha^2}$ and let $\sigma_{1i}, \dots, \sigma_{\alpha^2 i}$ be
$\pm 1$ i.i.d random variables with equal probability for block $B_i$. Notice then that the $i^{\text{th}}$ row of $SA$ given by
$(SA)_{i, *}$ is: $$(SA)_{i, *} \triangleq \sum_{j \in B_i} \sigma_{ji} A_{i, *}$$
To analyze the performance of this sketch, we will need a helpful inequality describing the behavior of a random signed sums of reals.
\\
\begin{theorem}\label{thm:Khintchines}\textbf{Khintchine's Inequality} \cite{haagerup1981best}

Let $\{x_i\}_{i = 1}^n \in \R$ be reals and let $\{\bf{s}_i\}_{i = 1}^n$ be i.i.d $\pm 1$ random variables with equal probability and
let $0 < t < \infty$, we then have: 
\[A_p \sqrt{\sum_{i = 1}^n x_i^2} \leq \E{\left\lvert \sum_{i = 1}^n \bf{s}_ix_i \right\rvert ^p}^{1/p} \leq B_p \sqrt{\sum_{i = 1}^n x_i^2}\]
For some constants $A_p, B_p$ that only depend on $p$. 
\end{theorem}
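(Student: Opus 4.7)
Let $S := \sum_{i=1}^n s_i x_i$. By homogeneity I may assume $\sum_{i=1}^n x_i^2 = 1$, and since the $s_i$ are independent with $\mathbb{E}[s_i]=0$ and $s_i^2 = 1$ we have $\mathbb{E}[S^2] = \sum_i x_i^2 = 1$. The strategy is to split into the regimes $p \geq 2$ and $0 < p < 2$, and in each regime dispose of one direction trivially via monotonicity of $L^r$-norms on a probability space, leaving a single genuine probabilistic estimate per regime.

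\textbf{Trivial directions.} On the probability space $\{\pm 1\}^n$, the map $r \mapsto \mathbb{E}[|S|^r]^{1/r}$ is nondecreasing. For $p \geq 2$ this yields $\mathbb{E}[|S|^p]^{1/p} \geq \mathbb{E}[S^2]^{1/2} = 1$, so one may take $A_p = 1$; for $0 < p \leq 2$ it yields $\mathbb{E}[|S|^p]^{1/p} \leq \mathbb{E}[S^2]^{1/2} = 1$, so one may take $B_p = 1$. It remains to produce a $B_p$ for $p \geq 2$ and an $A_p$ for $0 < p < 2$.

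\textbf{Upper bound for $p \geq 2$.} I would apply the sub-Gaussian moment method. From $\cosh(\lambda x) \leq e^{\lambda^2 x^2/2}$ and independence of the $s_i$,
\[\mathbb{E}[e^{\lambda S}] = \prod_{i=1}^n \cosh(\lambda x_i) \leq e^{\lambda^2 \sum_i x_i^2 / 2} = e^{\lambda^2/2}.\]
A Chernoff bound optimized in $\lambda$ gives $\Pr[|S| \geq t] \leq 2 e^{-t^2/2}$, and layer-cake integration yields
\[\mathbb{E}[|S|^p] = \int_0^\infty p\,t^{p-1}\,\Pr[|S| \geq t]\,dt \leq 2p \int_0^\infty t^{p-1} e^{-t^2/2}\,dt = p \cdot 2^{p/2}\,\Gamma(p/2),\]
so one may take $B_p := (p \cdot 2^{p/2}\,\Gamma(p/2))^{1/p}$, a constant depending only on $p$.

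\textbf{Lower bound for $0 < p < 2$.} This is the main obstacle: $L^r$-monotonicity runs the wrong way here, so I would argue by reverse-H\"older interpolation, using the upper bound just established. Fix $r := 4$ and choose $\theta \in (0,1)$ with $\theta p + (1-\theta) r = 2$, namely $\theta = 2/(4-p)$. Writing $|S|^2 = |S|^{\theta p} \cdot |S|^{(1-\theta) r}$ and applying H\"older's inequality with conjugate exponents $1/\theta$ and $1/(1-\theta)$,
\[1 = \mathbb{E}[S^2] \leq \mathbb{E}[|S|^p]^{\theta}\,\mathbb{E}[|S|^r]^{1-\theta} \leq \mathbb{E}[|S|^p]^{\theta}\,B_r^{r(1-\theta)}.\]
Rearranging gives $\mathbb{E}[|S|^p]^{1/p} \geq B_r^{-r(1-\theta)/(p\theta)} =: A_p$, a positive constant depending only on $p$. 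Combined with the previous step, this closes both sides of Khintchine's inequality with constants of the required form.
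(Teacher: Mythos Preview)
Your argument is correct. The paper, however, does not prove Khintchine's inequality at all: it simply states it with a citation to Haagerup and uses it as a black box in the proof of \pref{thm:between1and2}. So there is no proof in the paper to compare against; you have supplied strictly more than the paper does.

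Your route is the standard modern one: sub-Gaussian moment generating function plus layer-cake for the upper moment bound when $p\geq 2$, and a H\"older interpolation between the second and fourth moments to extract the lower bound when $0<p<2$. The constants you obtain, $B_p=(p\cdot 2^{p/2}\Gamma(p/2))^{1/p}$ and $A_p=B_4^{-4(1-\theta)/(p\theta)}$ with $\theta=2/(4-p)$, are not the sharp ones (Haagerup's paper is precisely about determining the optimal $A_p,B_p$), but the theorem as stated only asks for \emph{some} constants depending on $p$, which is exactly what you deliver and exactly what the paper needs downstream.
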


Also recall that by Jensen's inequality, we can relate two norms of a vector $x \in \R^n$. 
\\
\begin{remark}\label{rem:Jensens}
For two positive reals, $p \geq q > 1$ and for a vector $x \in \R^n$ we have that: $\inorm{x}_p \leq n^{\frac{1}{q} - \frac{1}{p}}\inorm{x}_q$
\end{remark}

We then have the following theorems describing the sketching complexity of the sketch $S$ for $1 \leq p \leq 2$ and for $p > 2$. 
\\
\begin{theorem}\label{thm:between1and2}
For any $1 \leq p \leq 2$ and for the maximizer $x \in \R^n$ of $\inorm{A}_{q \to p}$ the sketch $S$ defined earlier where each block $B_i$ has size $B$ has the property that 
$$\Theta(1)\frac{1}{B^{1 - \frac{1}{p}}}\inorm{SAx}_p \leq \inorm{Ax}_p \leq \Theta(1)B^{\frac{1}{p} - \frac{1}{2}}\inorm{SAx}_p$$
with probability at least $\frac{99}{100}$
\end{theorem}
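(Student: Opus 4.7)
Let $y = Ax \in \R^n$. The $i^{\text{th}}$ coordinate of $SAx$ is the Rademacher sum $w_i := \sum_{j \in B_i} \sigma_{ji} y_j$, and the $n/B$ random variables $w_1, \ldots, w_{n/B}$ are independent because the sign variables live in disjoint blocks. Thus $\|SAx\|_p^p = \sum_{i=1}^{n/B} |w_i|^p$ is a sum of independent nonnegative random variables, which is the structure I will exploit.

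The plan is first to compute $\E \|SAx\|_p^p$ up to constants, and then to convert this into the desired two-sided bound via Markov and Paley--Zygmund. Applying Khintchine's inequality (\pref{thm:Khintchines}) block by block yields $A_p^p \|y_{B_i}\|_2^p \le \E |w_i|^p \le B_p^p \|y_{B_i}\|_2^p$, so that $\E \|SAx\|_p^p = \Theta(1)\sum_i \|y_{B_i}\|_2^p$. Since $p \le 2$, the $p$-to-$2$ comparison on the $B$-dimensional vector $y_{B_i}$ (\pref{rem:Jensens} together with the standard $\|z\|_2 \le \|z\|_p$ for $p \le 2$) gives the block-wise sandwich $\|y_{B_i}\|_2^p \le \|y_{B_i}\|_p^p \le B^{1 - p/2} \|y_{B_i}\|_2^p$; summing over blocks and using $\sum_i \|y_{B_i}\|_p^p = \|y\|_p^p$,
\[
\sum_{i} \|y_{B_i}\|_2^p \;\le\; \|y\|_p^p \;\le\; B^{1-p/2} \sum_i \|y_{B_i}\|_2^p,
\]
so $\E \|SAx\|_p^p$ lies in $\bigl[\,\Theta(1) B^{p/2 - 1} \|y\|_p^p,\; \Theta(1) \|y\|_p^p\,\bigr]$.

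From here the upper direction $\|SAx\|_p \le \Theta(1) B^{1 - 1/p} \|Ax\|_p$ follows immediately from Markov's inequality applied to $\|SAx\|_p^p$, whose expectation is already $O(\|y\|_p^p)$, so even the sharper bound without the $B^{1 - 1/p}$ slack holds with probability $99/100$. For the lower direction $\|Ax\|_p \le \Theta(1) B^{1/p - 1/2} \|SAx\|_p$, I would apply Paley--Zygmund to $X := \|SAx\|_p^p$: by independence across blocks and the higher-moment Khintchine bound $\E|w_i|^{2p} = O(\|y_{B_i}\|_2^{2p})$, combined with the elementary inequality $\sum_i a_i^2 \le (\sum_i a_i)^2$ for nonnegative $a_i$, one gets $\E X^2 = O((\E X)^2)$, and hence $X \ge \tfrac{1}{2}\E X \ge \Omega(B^{p/2 - 1}) \|y\|_p^p$ with constant probability. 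Taking the max over $O(1)$ independent copies of the sketch (and unioning over the Markov upper tails for the upper direction) amplifies this to probability $99/100$.

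The main obstacle is the lower-tail bound on $\|SAx\|_p$: Markov gives only the upper tail, and one needs anti-concentration for a sum of independent Rademacher-sum $p$-th powers. The Paley--Zygmund second-moment computation is the cleanest route, and it hinges on (i) the independence across blocks --- which is exactly why $S$ was designed to have block-disjoint sign variables --- and (ii) moment equivalence at both the $p$-th and $2p$-th moments via Khintchine, which is what makes $\E X^2 = O((\E X)^2)$ possible.
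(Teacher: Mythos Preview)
Your proof is correct, but it departs from the paper's argument in both directions.

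For the upper inequality $\|SAx\|_p \le \Theta(1)\,B^{1-1/p}\|Ax\|_p$, the paper does not use Khintchine or Markov at all: it simply applies the triangle inequality $|(SAx)_i| \le \sum_{j\in B_i}|(Ax)_j|$ followed by the $\ell_1$--$\ell_p$ comparison inside each block, which gives the bound \emph{deterministically} and \emph{for every} $x$. Your Markov-on-expectation argument actually yields a sharper bound (no $B^{1-1/p}$ loss) but only for the fixed maximizer $x$ and only with high probability. The theorem as literally stated concerns that one $x$, so your argument suffices; note, however, that the paper's deterministic bound is what makes the estimator $\|SA\|_{q\to p}$ work as an upper bound on $\|A\|_{q\to p}$, since the maximizer of $\|SA\|_{q\to p}$ is random and need not equal $x$.

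For the lower inequality, the paper argues block by block: for each fixed block $i$ the event $|w_i|^p \ge c\,\|y_{B_i}\|_2^p$ holds with constant probability (this is the ``constant success probability of line~[2]'' in the paper), and it then takes the median over $O(\log n)$ independent copies of $S$ to boost each block to probability $1-B/n$ and union-bounds over the $n/B$ blocks. Your global Paley--Zygmund argument on $X=\sum_i |w_i|^p$, using block independence together with the $2p$-th moment Khintchine bound and $\sum_i a_i^2 \le (\sum_i a_i)^2$ to get $\E X^2 = O((\E X)^2)$, is cleaner and needs only $O(1)$ independent copies rather than $O(\log n)$. So you save a logarithmic factor in the sketch size, at the price of giving up the for-all-$x$ upper bound that the paper relies on downstream.
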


\begin{proof}
Let us first show the first inequality in the theorem statement.
\begin{align*}
\intertext{For some coordinate $1 \leq i \leq \frac{n}{B}$:}
\abs{(SAx)_i}^p &= \abs{\sum_{j \in B_i} \sigma_j(Ax)_j}^p \leq \brackets{\sum_{j \in B_i} |(Ax)_j|}^p \\ 
\intertext{By \pref{rem:Jensens} relating $\inorm{\cdot}_1$ and $\inorm{\cdot}_p$}
&\leq B^{p - 1} \sum_{j \in B_i} \abs{(Ax)_j}^p \\
\therefore \inorm{(SAx)_i}_p &= \brackets{\sum_{i = 1}^{n/B} \abs{(SAx)_i}^p }^{1/p} \leq B^{1 - \frac{1}{p}}\inorm{Ax}_p 
\end{align*}
Notice that the first inequality holds irrespective of the vector $x$, it holds for all vectors. Now let us show the second inequality of the theorem statement. 
\begin{align*}
\intertext{For some coordinate $1 \leq i \leq \frac{n}{B}$:}
\brackets{\sum_{j \in B_i} (Ax)_j^p}^{1/p} &\leq B^{\frac{1}{p} - \frac{1}{2}} \brackets{\sum_{j \in B_i} (Ax)_j^2}^{1/2}
&&\text{[By \pref{rem:Jensens}]} \quad [1]\\
&\leq \Theta(1)B^{\frac{1}{p} - \frac{1}{2}} \E{\abs{\sum_{j \in B_i}\sigma_j(Ax)_j}^p}^{1/p} &&\text{[By Khintchine's Ineq.]} \quad [2]\\ 
\therefore \sum_{i = 1}^{n/B}\sum_{j \in B_i} (Ax)_j^p &= \inorm{Ax}_p^p \leq \Theta(1)B^{p\brackets{\frac{1}{p} - \frac{1}{2}}}\E{\inorm{SAx}_p^p}
\end{align*}
Notice that the second inequality of the theorem statement follows by Markov's inequality. 

Notice that the success probability of line $[2]$ is constant for each block. To get constant success probability over the entire set of blocks, we construct $O(\log(n))$ i.i.d copies of each block $B_i$ given by $\{B_i^j\}_{i = 1}^{O(\log(n))}$. We then pick $j$ such that it is the index realizing the quantity $\mathsf{median}_{j \in [O(\log(n))]} \|(S_jAx)_i\|_p$ where $S_j$ corresponds the sketch with the $j^{\text{th}}$ copy of the blocks. Then, by standard concentration bounds, we can get $1 - \frac{1}{n/B}$ success probability for each set of blocks $B_i$ and then union bound over the $\frac{n}{B}$ blocks giving us constant success probability. 
\end{proof}
\hfill
\begin{theorem}\label{thm:above2}
For any $p > 2$ and for the maximizer $x \in \R^n$ of $\inorm{A}_{q \to p}$ the sketch $S$ defined earlier where each block $B_i$ has size $B$ has the property that 
$$\Theta(1)\frac{1}{B^{1 - \frac{1}{p}}}\inorm{SAx}_p \leq \inorm{Ax}_p \leq \Theta(1)\inorm{SAx}_p$$
\end{theorem}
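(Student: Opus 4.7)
The plan is to mirror the structure of the proof of \pref{thm:between1and2}, treating the two inequalities separately and exploiting the independence of the $\sigma$'s across different blocks.

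For the first inequality, namely $\inorm{SAx}_p \leq \Theta(1) B^{1-1/p}\inorm{Ax}_p$, the argument used for the corresponding direction in \pref{thm:between1and2} in fact makes no use of $p \leq 2$: within each block, the triangle inequality gives $\abs{(SAx)_i} \leq \sum_{j \in B_i} \abs{(Ax)_j}$, and then \pref{rem:Jensens} (applied to the uniform measure on $B_i$) yields $\abs{(SAx)_i}^p \leq B^{p-1} \sum_{j \in B_i} \abs{(Ax)_j}^p$. Summing over the $n/B$ blocks gives the bound deterministically, with no concentration required.

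For the second inequality $\inorm{Ax}_p \leq \Theta(1)\inorm{SAx}_p$, I would use the lower half of Khintchine's inequality (\pref{thm:Khintchines}) inside each block to obtain $\mathbb{E}[\abs{(SAx)_i}^p]^{1/p} \geq A_p (\sum_{j \in B_i} (Ax)_j^2)^{1/2}$. Since $p \geq 2$, the monotonicity of finite-dimensional $\ell_r$ norms gives $(\sum_{j \in B_i}(Ax)_j^2)^{1/2} \geq (\sum_{j \in B_i}\abs{(Ax)_j}^p)^{1/p}$. Raising to the $p$-th power and summing over blocks yields the in-expectation lower bound $\mathbb{E}[\inorm{SAx}_p^p] \geq A_p^p \inorm{Ax}_p^p$. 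This step uses $p \geq 2$ crucially: for $p < 2$ the inequality between the $\ell_2$ and $\ell_p$ norms of the block flips, which is precisely the reason \pref{thm:between1and2} incurred the extra $B^{1/p-1/2}$ factor.

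The remaining task is to upgrade the in-expectation bound to the $99/100$ guarantee promised by the analogous \pref{thm:between1and2}. Writing $Y_i = (SAx)_i$ and $Z = \inorm{SAx}_p^p = \sum_i \abs{Y_i}^p$, I would apply Paley--Zygmund to $Z$: the $Y_i$ are independent across $i$, and the upper half of Khintchine bounds $\mathbb{E}[\abs{Y_i}^{2p}]$ by a constant times $(\sum_{j \in B_i}(Ax)_j^2)^p$, which by the already-used lower half of Khintchine is at most a constant times $\mathbb{E}[\abs{Y_i}^p]^2$. Together with independence this gives $\mathbb{E}[Z^2] \leq C_p (\mathbb{E}[Z])^2$, yielding a constant probability lower bound on the event $Z \geq \Omega(\mathbb{E}[Z])$; an $O(1)$-copy median across independent draws of $S$ then boosts this to success probability $99/100$. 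The main technical obstacle is precisely this concentration step: the summands $\abs{Y_i}^p$ are independent but highly heterogeneous, since some blocks may be dominated by a single heavy coordinate of $Ax$ while others are diffuse, so one must apply Paley--Zygmund at the level of the whole sum $Z$ rather than term by term to avoid losing polynomial-in-$B$ factors.
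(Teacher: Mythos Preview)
Your argument for both inequalities is exactly the paper's: the first inequality is the deterministic triangle-inequality-plus-Jensen bound (which indeed never used $p\le 2$), and the second is Khintchine's lower bound together with the observation that for $p\ge 2$ the block's $\ell_2$ norm already dominates its $\ell_p$ norm, so the $B^{1/p-1/2}$ dilation from line~[1] of the proof of \pref{thm:between1and2} disappears. This is precisely what the paper means by ``the same as \pref{thm:between1and2} except there is no dilation in line~[1].''

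The one place you diverge is the concentration step. The paper handles it per block: it asserts constant success probability for each block (implicitly via Khintchine moment comparison) and then takes $O(\log n)$ independent copies with a block-wise median and a union bound over the $n/B$ blocks. Your Paley--Zygmund argument on the whole sum $Z=\sum_i|Y_i|^p$ is a legitimate alternative: the Khintchine upper bound on $\mathbb{E}[|Y_i|^{2p}]$ against the lower bound on $(\mathbb{E}[|Y_i|^p])^2$ gives $\mathbb{E}[Z^2]\le C_p(\mathbb{E}[Z])^2$ exactly as you say, and then $O(1)$ independent copies suffice for the $99/100$ guarantee. Your route saves the $\log n$ factor in the number of copies at the cost of $p$-dependent constants coming from the Khintchine moments; neither approach has a real gap.
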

The proof for \pref{thm:above2} is the same as that for \pref{thm:between1and2} except that there is no dilation while upper bounding the $\inorm{Ax}_p$ with the 2-norm in line $[1]$ of the proof. 

Notice that the above theorems imply that the sketch $S$ is a $\sqrt{B}$-approximation when $0 \leq p \leq 2$ and a $B^{1 - \frac{1}{p}}$-approximation when $p > 2$ because it states that the sketch is stretching $\inorm{Ax}_p^p$ by at most some factor and dilating it by at most some factor and hence the approximation ratio is simply the product of these factors.

\end{document}